\theoremstyle{plain}
\newtheorem{theorem}{Theorem}[section]
\newtheorem{proposition}[theorem]{Proposition}
\newtheorem{lemma}[theorem]{Lemma}
\newtheorem{corollary}[theorem]{Corollary}
\newtheorem{observation}[theorem]{Observation}
\theoremstyle{definition}
\newtheorem{definition}[theorem]{Definition}
\theoremstyle{remark}
\DeclarePairedDelimiter{\abs}{\lvert}{\rvert}
\newcommand{\bigO}{\mathcal{O}} 
\newcommand{\A}{\ensuremath{\mathcal{A}}\xspace}
\newcommand{\B}{\ensuremath{\mathcal{B}}\xspace}
\newcommand{\alg}{{\normalfont\textsc{Alg}}}
\newcommand{\opt}{{\normalfont\textsc{Opt}}}
\newcommand{\hsigma}{\hat{\sigma}}
\newcommand{\hc}{{\bm\hat{c}}}
\newcommand{\hw}{\bm\hat{w}}
\newcommand{\hW}{\bm\hat{W}}
\newcommand{\hchains}{\bm\hat{\mathcal{C}}}
\newcommand{\chains}{\mathcal{C}}
\newcommand{\optspeed}{\alpha}
\newcommand{\numc}{\omega}
\newcommand{\hpreceq}{\bm\hat{\preceq}}
\newcommand{\dualSa}{\bm\bar{a}}
\newcommand{\dualSb}{\bm\bar{b}}
\newcommand{\dualSc}{\bm\bar{c}}
\newcommand{\dualVa}{a}
\newcommand{\dualVb}{b}
\newcommand{\dualVc}{c}
\newcommand{\df}{dual-fitting\xspace}
\newcommand{\frontjobs}{F}
\newcommand{\average}{a}
\newcommand{\averagep}{\ensuremath{\hat{a}}}
\newcommand{\largestinv}{\mathcal{L}}
\theoremstyle{plain}
\title{Minimalistic Predictions to Schedule Jobs with \\ Online Precedence Constraints}
\author{Alexandra Lassota~\thanks{EPFL, Switzerland, funded by the Swiss National Science Foundation project \emph{Complexity of Integer Programming}~(207365)} \and Alexander Lindermayr~\thanks{University of Bremen, Faculty of Mathematics and Computer Science, Germany. \{linderal,nmegow,jschloet\}@uni-bremen.de} \and Nicole Megow~\footnotemark[2] \and Jens Schl{\"o}ter~\footnotemark[2]}
\begin{document}

\maketitle

\thispagestyle{empty}

\begin{abstract}
We consider non-clairvoyant scheduling with \emph{online} precedence constraints, where an algorithm is oblivious to any job dependencies and learns about a job only if all of its predecessors have been completed. Given strong impossibility results in classical competitive analysis, we investigate the problem in a learning-augmented setting, where an algorithm has access to predictions without any quality guarantee.  
We discuss different prediction models: novel problem-specific models as well as general ones, which have been proposed in previous works. We present lower bounds and algorithmic upper bounds for different precedence topologies, and thereby give a structured overview on which and how additional (possibly erroneous) information helps for designing better algorithms. Along the way, we also improve bounds on traditional competitive ratios for existing algorithms.
\end{abstract}

\section{Introduction}

Cloud computing is a popular approach to outsource heavy computations to specialized providers~\cite{Hayes08}. Concepts like Function-as-a-Service (FaaS) offer users on demand the execution of complex computations in a specific domain~\cite{LynnRLE17,ShahradBW19}. Such tasks are often decomposed into smaller jobs, which then depend on each other by passing intermediate results. The structure of such tasks heavily relies on the users input and internal dependencies within the users system. It might require diverse jobs to solve different problems with distinct inputs.

From the providers perspective, the goal is thus to schedule jobs with different priorities and interdependencies which become known only when certain jobs are completed and their results can be evaluated. 
From a more abstract perspective, we face \emph{online precedence constraint scheduling}: new jobs arrive only if certain jobs have been completed but the set of jobs and their dependencies are unknown to the scheduler. As tasks might have different priorities, it is a natural objective to minimize the total (average) weighted completion time of the jobs.
We focus on \emph{non-clairvoyant} schedulers that do not know a job's processing requirement in advance~\cite{motwani1994nonclairvoyant}, and we allow \emph{preemptive} schedules, i.e., jobs can be interrupted and resumed later.
We present and analyze 
(non-)clairvoyant 
algorithms and prove impossibility results for this problem.

Competitive analysis is a widely used technique to assess the performance of online algorithms~\cite{Borodin98}. The \emph{competitive ratio} of an algorithm is the maximum ratio over all instances between its objective value and the objective value of an \emph{offline} optimal solution. In our setting, an offline optimal solution is the best schedule that can be computed with complete information and unbounded running time on the instance. We say that an algorithm is~$\rho$-competitive if its competitive ratio is at most~$\rho$.

It is not hard to see that for our problem, we cannot hope for good worst-case guarantees: consider an instance of~$n-1$ initially visible jobs with zero weight such that exactly one of these jobs triggers at its completion the arrival of a job with positive weight.
Since the initial jobs are indistinguishable, in the worst-case, any algorithm completes the positive-weight job last.
An offline optimal solution can distinguish the initially visible jobs and immediately processes the one which triggers the positive-weight job. 
This 
already
shows that no deterministic algorithm can have a better competitive ratio than~$\Omega(n)$ for~$n$ jobs. Notice that this strong impossibility result holds even for 
(seemingly) 
simple precedence graphs that consist of a collection of chains. In practice, such topology is highly relevant as, e.g., a sequential computer program executes a path (chain) of instructions that upon execution depends on the evaluation of control flow structures~(cf.~\cite{Allen1970control}).

To overcome such daunting lower bounds, we consider closer-to-real-world approaches to go beyond worst-case analysis. In particular, we study augmenting \emph{algorithms with predictions}~\cite{MitzenmacherV20,MitzenmacherV22}. The intuition is that in many applications, we can learn certain aspects of the uncertainty by considering historical data such as dependencies between jobs for certain computations and inputs. While these predictions might not reflect the current instance, they can contain enough information to design algorithms that break pessimistic worst-case lower bounds. Besides specifying the type of information, this requires a measure for a prediction's quality. This allows parameterized performance guarantees of algorithms w.r.t. the amount of information a prediction contains. Important performance indicators are \emph{consistency}, that is the competitive ratio for best-possible predictions, and \emph{robustness}, that is an upper bound on the competitive ratio for any prediction.

Despite the immense research interest in learning augmented algorithms \cite{alps}, the particular choice of prediction models remains often undiscussed. In this work, we discuss various models and analyze their strengths and weaknesses.
In particular, we present the first learning-augmented algorithms for scheduling with (online) precedence constraints.
The question driving our research~is:
\begin{quote}
	\emph{Which particular minimal information is required to achieve reasonable performance guarantees for scheduling with online precedence constraints?}
\end{quote}
Our starting point is the analysis of  
the two most common models,  \emph{full input} predictions, c.f.~\cite{PurohitSK18,AzarLT21,AzarLT22,Im0QP21,AntoniadisGS22,Bernardini22universal,ErlebachLMS22} and {\em action predictions}, c.f.~\cite{AntoniadisCE0S20,BamasMS20,LindermayrMS22,LindermayrM22,EberleLMNS22,Anand0KP22,JinM22}.
Our main focus is on a hierarchy of refined prediction models based on their entropy. That is, one can compute a prediction for a weaker model using a prediction from a stronger one, but not vice versa. We predict quantities related to the weight of unknown jobs which is in contrast to previous work which assumes predictions on the jobs' processing times  or machine speeds (except~\cite{ChoHS2022}).

For each prediction model, we analyze its power and limits by 
providing efficient algorithms and lower bounds on the best-possible performance guarantees w.r.t.~these models and the topological properties of the 
precedence constraints.

\subsection{Problem Definition and Prediction Models}

An instance of our problem is composed of a set~$J$ of~$n$ jobs and a precedence graph~$G=(J, E)$, which is an acyclic directed graph (DAG). Every job~$j \in J$ has a processing requirement~$p_j \geq 0$ and a weight~$w_j \geq 0$. An edge~$(j', j) \in E$ indicates that~$j$ can only be started if~$j'$ has been completed. If there is a directed path from~$j'$ to~$j$ in~$G$, then we say that~$j$ is a \emph{successor} of~$j'$ and that~$j'$ is a \emph{predecessor} of~$j$. If that path consists of a single edge, we call~$j$ and~$j'$ a \emph{direct} successor and predecessor, respectively.
For a fixed precedence graph~$G$, we denote by~$\numc$ the \emph{width} of~$G$, which is the length of the longest anti-chain in~$G$.

An algorithm can process a job~$j$ at a time~$t \geq r_j$ with a rate~$R_j^t \geq 0$, which describes the amount of processing the job receives at time~$t$. The completion time~$C_j$ of a job~$j$ is the first time~$t$ which satisfies~$\sum_{t' = 0}^t R_j^{t'} \geq p_j$. On a single machine a total rate of $1$ can be processed at any time~$t$, thus we require~$\sum_{j \in J} R_j^t \leq 1$. At any time~$t$ in a schedule, let~$\frontjobs_t = \left\{j \; \mid \; C_j > t \text{ and } \forall j' \text{ s.t. } (j', j) \in E \colon C_{j'} < t  \right\}$ denote the set of unfinished jobs without unfinished predecessors in~$G$.
We refer to such jobs as \emph{front jobs}. 
In the online setting, a job is revealed to the algorithm once all predecessors have been completed. The algorithm is completely oblivious to $G$, and, in particular, it does not know whether a front job has successors. Thus, at any time~$t$ an algorithm only sees jobs~$j\in$~$\frontjobs_t$ with weights~$w_j$ but {\em not} their processing times~$p_j$. Note that the sets~$\frontjobs_t$ heavily depend on an algorithm's actions. At the start time~$t=0$, an algorithm sees 
$\frontjobs_0$, and until the completion of the last job, it does not know the total number of jobs.
An algorithm can at any time~$t$ only process front jobs, hence we further require that~$R_j^t = 0$ for all~$j \in J \setminus \frontjobs_t$.
The objective of our problem is to minimize~$\sum_{j \in J} w_j C_j$.
For a fixed instance, we denote the optimal objective value by~$\opt$ and for a fixed algorithm, we denote its objective value by~$\alg$.

We study different topologies of precedence graphs. In addition to general DAGs, we consider \emph{in-forests} resp.\ \emph{out-forests}, where every node has at most one outgoing resp.\ incoming edge. Further, we study \emph{chains}, which is a precedence graph that is an in-forest and an out-forest simultaneously. 
If an in- or out-forest has only one connected component, we refer to it as \emph{in-} and \emph{out-tree}, respectively.

Two of the most studied prediction models are: a {\em full input prediction}, which is a prediction on the set of jobs with processing times and weights, and the complete precedence graph, and an \emph{action prediction}, which is a prediction on a full priority order over all jobs predicted to be part of the instance (\emph{static}) or a prediction on which job to schedule next whenever a machine idles (\emph{adaptive}).

Both prediction models require a significant amount of information on the input or an optimal algorithm. This might be unrealistic or costly  to obtain and/or not necessary. We aim for minimalistic extra information and quantify its power.

The set of front jobs $\frontjobs_0$ does not give sufficient information for obtaining a competitive ratio better than $\Omega(n)$, as shown above. For a job~$v \in \frontjobs_0$, we define the set~$S(v)$ consisting of $v$ and its successors,  
and we let $w(S(v)): = \sum_{u \in S(v)} w_u$.
We consider various predictions on the set~$S(v)$: \begin{list}{}{}
	\item[\emph{Weight predictions:}] Predictions $\hW_{v}$ on the total weight $w(S(v))$ of each front job $v \in \frontjobs_0$. 
	\item[\emph{Weight order predictions:}] 
The \emph{weight order}~$\preceq_0$ over~$\frontjobs_0$ sorts the jobs~$v \in \frontjobs_0$ by non-increasing~$w(S(v))$, i.e.,~$v \preceq_0 u$ implies~$w(S(v)) \geq w(S(u))$. We assume access to a prediction~$\hpreceq_0$ on~$\preceq_0$.
	\item[\emph{Average predictions:}] Predictions~$\averagep_v$ on the average weight $\average(S(v)) = \frac{\sum_{u \in S(v)} w_u }{\sum_{u \in S(v)} p_u }$ of each front job $v \in \frontjobs_0$. 
\end{list} 
For each of these three models, we distinguish \emph{static} and \emph{adaptive} predictions. Static predictions refer to predictions only on the initial front jobs~$\frontjobs_0$, and adaptive predictions refer to a setting where we receive access to a new prediction whenever a job becomes visible. 

\subsection{Our Results}

Our results can be separated into two categories. First, we consider the problem of scheduling with online precedence constraints with access to additional \emph{reliable} information. In particular, we consider all the aforementioned prediction models and design upper and lower bounds for the online problem enhanced with access to the respective additional information. We classify the power of the different models when solving the problem on different topologies. 

For the second type of results, we drop the assumption that the additional information is accurate and turn our pure online results into learning-augmented algorithms. We define suitable error measures for the different prediction models to capture the accuracy of the predictions, and give more fine-grained competitive ratios depending on these measures. 
We also extend our algorithms to achieve robustness.

Next, we give an overview of our results for these categories. We state all results for the single machine setting but show in Appendix~\ref{app:multiple} that they extend to identical parallel machines. 

\paragraph{Reliable additional information} 
\Cref{table:summary} summarizes our results for the pure online setting enhanced with reliable additional information. 
Our main 
results are a~$4$-competitive algorithms for chains and out-forests with weight predictions, and a~$H_{\numc}$-competitive algorithm for out-forests with adaptive weight order predictions, where~$H_k$ is the~$k$th harmonic number. 
The results show that additional information significantly improves the (worst-case) ratio compared to the setting with no predictions.

\begin{table}[tb]
	\caption{Summary of bounds on the competitive ratio given reliable information. We denote by~$P$ the total processing time and by~$H_k$ the~$k$th harmonic number.}
	\label{table:summary}
	\begin{center}
		\begin{tabular}{lll}
			\toprule
			Prediction Model & Topology & Bound  \\
			\midrule
			Actions & DAG & $\Theta(1)$ \\
			Input & DAG & $\Theta(1)$ \\
			Adaptive weights & Out-Forests & $\Theta(1)$\\
			Adaptive weights & In-Trees & $\Omega(\sqrt{n})$\\
			Static weights & Out-Trees & $\Omega(n)$ \\
			Static weights & Chains & $\Theta(1)$ \\
			Adaptive weight order & Out-Forests & $\bigO(H_\numc)$ \\
			Static weight order & Chains & $\bigO(H^2_\numc \sqrt{P})$ \\
Adaptive averages & Chains & $\Omega(\sqrt{n})$\\
			No Prediction & Chains & $\Omega(n)$  \\
			\bottomrule
		\end{tabular}
	\end{center}
\end{table}

Our main non-clairvoyant algorithm, given correct weight predictions, has a competitive ratio of at most $4$ for online out-forest precedence constraints on a single machine. This improves even for offline precedence constraints upon previous best-known bounds of~$8$~\cite{Jager21} and~10~\cite{GargGKS19} for this problem, although these bounds also hold in more general settings. 
To achieve this small constant, we generalize the Weighted Round Robin algorithm (WRR)~\cite{motwani1994nonclairvoyant,KimC03a} for non-clairvoyant scheduling \emph{without} precedence constraints, which advances jobs proportional to their weight, to our setting.
We handle each out-tree as a super-job and update its remaining weight when a sub-job completes. 
If the out-tree is a chain, this can be done even if only \emph{static} weight predictions are given. Otherwise, when an out-tree gets divided into multiple remaining out-trees, the distribution of the remaining weight is unknown, thus we  have to rely on \emph{adaptive} predictions.
Due to the increased dynamics of gaining partial weight of these super-jobs, the original analysis of WRR is not applicable. Instead, we use the \df technique, which has been previously used for offline precedence constraints~\cite{GargGKS19}. While their analysis builds on offline information and is infeasible in our model, we prove necessary conditions on an algorithm to enable the \df, which are fulfilled even in our limited information setting. Surprisingly, we also show that a more compact linear programming (LP) relaxation, which does not consider transitive precedences, is sufficient for our result. In particular, compared to the LP used in~\cite{GargGKS19}, it allows us to craft simple duals which do not involve gradient-type values of the algorithm's rates.

In the more restricted model of weight order predictions, WRR cannot be applied, as the rate computation crucially relies on \emph{precise} weight values. We observe, however, that WRR's rates at the start of an instance have the same ordering as the known chain order. We show that guessing rates for chains in a way that respects the ordering compromises only a factor of at most~$H_\numc$ in the competitive ratio. 
If the weight order is adaptive, we show a competitive ratio of~$4 \cdot H_\numc$. Otherwise, we give a worse upper bound and evidence that this might be best-possible for this algorithm.

\paragraph{Learning-augmentation} We extend our algorithmic results by designing suitable error measures for the different prediction models and proving error-dependent competitive ratios.
Finally, we show how existing techniques can be used to give these algorithms a robustness of~$\mathcal{O}(\omega)$ at the loss of only a constant factor in the error-dependent guarantee. Note that a robustness~$\mathcal{O}(\omega)$ matches the lower bound for the online problem without access to additional information.

\subsection{Further Related Work}

Scheduling jobs with precedence constraints to minimize the sum of (weighted) completion times has been one of the most studied scheduling problems for more than thirty years. The offline problem is known to be NP-hard, even for a single machine~\cite{lawler78,lenstrR78}, and on two machines, even when precedence constraints form chains~\cite{duLY91, timkovsky03}. 
Several polynomial-time algorithms based on different linear programming formulations achieve an approximation ratio of~$2$ on a single machine, whereas special cases are even solvable optimally; we refer to~\cite{CorreaS05,AmbuhlM09} for comprehensive
overviews.  For scheduling on $m$ parallel identical machines, the best known approximation factor is $3-1/m$~\cite{hallSSW97}.

For scheduling with online precedence constraints, strong and immediate lower bounds rule out a competitive ratio better than $\Omega(n)$ for the min-sum objective. Therefore, online scheduling has been mainly studied in a setting where jobs arrive online but once a job arrives its processing time, weight, and relation to other (already arrived) jobs is revealed~\cite{hallSSW97,ChakrabartiPSSSW96,BienkowskiKL21}.
Similarly, we are not aware of any previous learning-augmented algorithm for online precedence constraints and/or weight predictions. Previous work on minimizing the total completion time focussed on the setting where jobs either arrive online and/or are clairvoyant~\cite{PurohitSK18,Im0QP21,LindermayrM22,DinitzILMV22portfolios,BampisDKLP22}.

\section{Robustness via Time-Sharing}\label{sec:time-sharing}

Before we move to concrete algorithmic results,
we quickly argue that any~$\rho$-competitive algorithm for scheduling with online precedence constraints of width $\numc$ can be extended to a~$\bigO(\min\{\rho,\omega\})$-competitive algorithm. In particular, if~$\rho$ depends on a prediction's quality, this ensures that this algorithm is robust against arbitrarily bad predictions.

To this end, consider the algorithm which at any time~$t$ shares the machine equally among all front jobs~$F_t$, i.e., gives every job~$j \in F_t$ rate~$R_j^t = \frac{1}{\abs{F_t}} \ge \frac{1}{\numc}$. For a fixed job~$j$, compared to its completion time in a fixed optimal schedule, the completion time in the algorithm's schedule can be delayed by at most a factor of~$\numc$. 
We conclude:

\begin{proposition}\label{lemma:chain-robust}
	There is an~$\numc$-competitive non-clairvoyant single-machine algorithm for minimizing the total weighted completion time of jobs with online precedence constraints.
\end{proposition}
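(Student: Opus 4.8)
The plan is to analyze precisely the equal‑sharing algorithm described above: at every time~$t$ with~$F_t \neq \emptyset$ it gives each front job~$j \in F_t$ rate~$R_j^t = 1/\abs{F_t}$, and idles otherwise. This algorithm is non‑clairvoyant, since it never inspects any processing requirement, and feasible on one machine, since the assigned rates sum to at most~$1$ at all times. It then suffices to establish the pointwise bound~$C_j^{\alg} \le \numc \cdot C_j^{\opt}$ for every job~$j$: multiplying by~$w_j$ and summing over~$j$ immediately yields~$\alg \le \numc \cdot \opt$.

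The first step is to record that, in \emph{any} schedule and at any time~$t$, the set~$F_t$ of front jobs is an anti‑chain of~$G$. Indeed, if two front jobs~$j',j$ were comparable, say~$j'$ a predecessor of~$j$, then following a directed path from~$j'$ to~$j$ and using repeatedly that a job can only start once all of its direct predecessors have completed would force~$C_{j'} < t$, contradicting~$C_{j'} > t$. Hence~$\abs{F_t} \le \numc$, so every front job receives rate at least~$1/\numc$ at every point in time.

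The main step is the pointwise bound itself. Fix a job~$j$, let~$D_j$ contain~$j$ together with all predecessors of~$j$, and put~$W_j := \sum_{j' \in D_j} p_{j'}$. I would first argue that at every time~$t < C_j^{\alg}$ at least one job of~$D_j$ is a front job of the algorithm: if~$j \notin F_t$, then~$j$ still has an uncompleted direct predecessor at time~$t$, and any precedence‑minimal uncompleted predecessor of~$j$ has all of its own direct predecessors completed and is therefore a front job lying in~$D_j$. Combined with the rate bound from the previous step, the algorithm devotes total rate at least~$1/\numc$ to jobs of~$D_j$ at every time~$t < C_j^{\alg}$, so the total amount of processing it performs on~$D_j$ before~$C_j^{\alg}$ is at least~$C_j^{\alg}/\numc$; since that amount cannot exceed~$W_j$, we get~$C_j^{\alg} \le \numc \cdot W_j$. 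On the other hand, in the optimal schedule every job of~$D_j$ completes by time~$C_j^{\opt}$ (each predecessor of~$j$ before~$j$ itself), so the total processing~$W_j$ of these jobs fits into the interval~$[0,C_j^{\opt}]$ of a unit‑speed machine, giving~$W_j \le C_j^{\opt}$. Chaining the two inequalities yields~$C_j^{\alg} \le \numc \cdot C_j^{\opt}$.

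I do not expect a real obstacle here. The only point needing a little care is the claim that~$D_j$ contains a front job at every moment while~$j$ is still unfinished — which is exactly where the anti‑chain structure of the precedence graph enters — together with the routine measure‑zero bookkeeping at the time instants where completions occur; neither affects the bound, and the remainder is a direct computation.
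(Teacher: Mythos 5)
Your proof is correct and takes essentially the same approach as the paper: the same equal-sharing algorithm and the same key pointwise bound $C_j^{\alg} \le \numc\cdot C_j^{\opt}$ obtained from the rate guarantee $R_j^t \ge 1/\numc$. The paper merely asserts this pointwise bound; your argument via the predecessor set $D_j$ and the two inequalities $C_j^{\alg} \le \numc\cdot W_j \le \numc\cdot C_j^{\opt}$ supplies the justification the paper omits.
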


We can now use the \emph{time-sharing technique} to combine this $\omega$-competitive algorithm with any other algorithm for scheduling online precedence constraints while retaining the better competitive ratio of both up to a factor of~$2$.

\begin{theorem}[\cite{PurohitSK18,LindermayrM22}]\label{thm:alg-combination}
Given two deterministic algorithms with competitive ratios~$\rho_\A$ and~$\rho_\B$ for minimizing the total weighted completion time with online precedence constraints on identical machines, there exists an algorithm for the same problem with a competitive ratio of at most~$2 \cdot \min\{\rho_\A, \rho_\B\}$.
\end{theorem}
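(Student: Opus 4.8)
The plan is to use a standard time-sharing (round-robin between two schedules) argument. Given algorithms $\A$ and $\B$, run both simultaneously on the single machine (or on each of the identical machines), giving each of them rate $1/2$ at every point in time. Concretely, at any time $t$ we devote half of the total processing capacity to simulating $\A$ and half to simulating $\B$; each simulated algorithm sees exactly the jobs it would see in its own run, since the set of front jobs $F_t$ for a given algorithm depends only on that algorithm's own completion times, and we declare a job completed (and thus release its successors to both simulations) as soon as \emph{either} simulation completes it. Since each simulation runs at half speed, every job $j$ is completed in this combined schedule by time $2\min\{C_j^{\A}, C_j^{\B}\}$, where $C_j^{\A}$ and $C_j^{\B}$ are the completion times in the stand-alone runs of $\A$ and $\B$.

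The key steps, in order, are: (1) formalize the combined schedule and argue that each simulation is \emph{valid}, i.e.\ that feeding a simulated algorithm its own front-job set at half the machine rate is a legitimate execution of that algorithm in our online model — this is immediate because the model only constrains an algorithm via $\sum_j R_j^t \le 1$ and $R_j^t = 0$ for $j \notin F_t$, both of which are preserved under uniform scaling by $1/2$; (2) prove the completion-time bound $C_j \le 2\min\{C_j^{\A}, C_j^{\B}\}$ for every job $j$, using that the job's predecessors are released no later than in the faster of the two simulations and that within the combined schedule the relevant simulation runs at exactly half rate; (3) sum over all jobs weighted by $w_j$ to get $\alg = \sum_j w_j C_j \le 2\sum_j w_j \min\{C_j^{\A}, C_j^{\B}\} \le 2\min\{\sum_j w_j C_j^{\A}, \sum_j w_j C_j^{\B}\}$; and (4) bound each of the latter sums by $\rho_\A \cdot \opt$ and $\rho_\B \cdot \opt$ respectively, yielding a competitive ratio of $2\min\{\rho_\A, \rho_\B\}$.

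The main subtlety — and the only place the argument needs care rather than routine manipulation — is step (2): one must be precise about \emph{when} successors are released to each simulation. If a job $j'$ is completed by simulation $\A$ at time $\tau$ but by simulation $\B$ only at some later time, then releasing $j'$'s successors to \emph{both} simulations at time $\tau$ means simulation $\B$ is fed slightly more information than it would have in its stand-alone run. This is harmless for the upper bound: giving an online non-clairvoyant algorithm a front job \emph{earlier} only helps, but to keep the simulation faithful one instead runs each simulation on its own release schedule and observes that the combined machine still only ever needs total rate $1/2 + 1/2 = 1$, while each job's completion time is governed by whichever simulation finishes it first. Once this bookkeeping is pinned down, the inequality $C_j \le 2\min\{C_j^{\A}, C_j^{\B}\}$ follows and the rest is the summation in steps (3)–(4). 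Since this is a known technique (\cite{PurohitSK18,LindermayrM22}), a short paragraph making these points precise suffices.
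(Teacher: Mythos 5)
Your final formulation---running each simulation on its own release schedule so that each sub-algorithm's virtual run is an exact factor-2 time dilation of its stand-alone run---is exactly the paper's proof. The paper formalizes this via a ``progress-awareness'' requirement: each simulation tracks only its own accumulated processing of each job, keeps (virtually) processing a job it has not yet finished even if the other simulation already completed it, and releases a job's successors only once its own virtual processing of that job reaches~$p_j$. This yields $\tilde{C}_j^{\A}=2C_j^{\A}$ and $\tilde{C}_j^{\B}=2C_j^{\B}$ exactly, and the combined schedule finishes~$j$ by $\min\{\tilde{C}_j^{\A},\tilde{C}_j^{\B}\}$, which is what your steps (3)--(4) need.

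However, one claim in your sketch is false, and it is exactly the assumption the paper's proof is designed to remove: ``giving an online non-clairvoyant algorithm a front job \emph{earlier} only helps.'' For an arbitrary deterministic algorithm this is not true---releasing successors earlier changes the input the slower simulation sees, and a non-monotone algorithm may respond arbitrarily worse on the modified instance. This is precisely the monotonicity hypothesis imposed in \cite{PurohitSK18,LindermayrM22} that the theorem, as stated here, dispenses with; the sentence following the theorem in the main text flags this as the whole point of the appendix. Consequently, your first formulation---releasing a job's successors to \emph{both} simulations as soon as \emph{either} finishes it---would reintroduce the very assumption to be avoided and would break the identity $\tilde{C}_j^{\B}=2C_j^{\B}$, since $\B$'s simulated trajectory would no longer coincide (up to time scaling) with its stand-alone run. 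The pivot you make at the end of step~(2) to independent, simulation-local release schedules is the correct fix; with that in place the remaining summation is routine.
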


In \cite{PurohitSK18,LindermayrM22} there is an additional monotonicity requirement which we claim to be unnecessary; see \Cref{app:monotonicity}.

\section{Action Predictions}

Action predictions give an optimal algorithm. Hence, following accurate predictions clearly results in an optimal solution.
To define an error measure for erroneous static and adaptive action predictions, let~$\hsigma: J \to [n]$ be the order in which a fixed 
static or adaptive action prediction suggests to process jobs. 
In case of static action predictions, we receive the predicted order initially, meaning that it might predict a set of jobs~$\hat{J}$ different to the actual~$J$. During the analysis, we can simply remove the jobs~$\hat{J}\setminus J$ from~$\hsigma$ as they do not have any effect on the schedule for the actual instance. For the jobs in~$J\setminus\hat{J}$, we define the static action prediction algorithm to just append them to the end of the order~$\hsigma$ once they are revealed. Thus, we can still treat~$\hsigma$ as a function from~$J$ to~$[n]$.
We analyse an algorithm which follows a static or adaptive 
action prediction using the permutation error introduced in \cite{LindermayrM22}.
To this end, let~$\sigma: J \to [n]$ be the order of a fixed optimal solution for instance~$J$, and~$\mathcal{I}(J, \hsigma) = \{ (j', j) \in J^2 \mid \sigma(j') < \sigma(j) \land \hsigma(j') > \hsigma(j) \}$ be the set of inversions between the permutations~$\sigma$ and $\hsigma$. 
Applying the analysis of~\cite{LindermayrM22} yields the following theorem.

\begin{theorem}
	\label{thm:action}
	Given static or adaptive action predictions, there exists an efficient $\bigO(\min \left\{ 1 + \eta, \numc \right\})$-competitive non-clairvoyant algorithm for minimizing the total weighted completion time on a single machine with online precedence constraints, where $\eta = \sum_{(j', j) \in \mathcal{I}(J, \hsigma)} (w_{j'} p_j - w_j p_{j'})$.
\end{theorem}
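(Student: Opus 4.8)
The plan is to reduce \Cref{thm:action} to an error-dependent guarantee for the \emph{offline} non-clairvoyant-order problem, i.e., to a clean statement about following a permutation, and then invoke \Cref{thm:alg-combination} together with \Cref{lemma:chain-robust} for the $\numc$ term. Concretely, the algorithm we analyze is: at any time $t$, among the front jobs $\frontjobs_t$, process the one minimizing $\hsigma$. (For static predictions, $\hsigma$ has already been patched so that it is a total order on all of $J$, as described before the theorem; for adaptive predictions, $\hsigma$ is revised as jobs appear but the same rule applies.) First I would observe that because precedence constraints are respected by any valid schedule, and because the algorithm always runs a single front job to completion before switching (it never preempts in favour of a later-$\hsigma$ job that is not yet released), the order in which \emph{this} algorithm completes jobs is exactly a linear extension of $G$ that is ``as close as possible'' to $\hsigma$: it equals $\hsigma$ restricted to $J$ whenever $\hsigma$ itself is a linear extension of $G$, and otherwise it is the greedy topological sort induced by $\hsigma$.

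The core step is the standard exchange/charging argument bounding $\alg$ against $\opt$ in terms of inversions. Write $\alg = \sum_j w_j C_j^{\alg}$ and $\opt = \sum_j w_j C_j^{\sigma}$. For a non-preemptive-style single-machine schedule following an order $\pi$, one has $\sum_j w_j C_j^{\pi} = \sum_j w_j \sum_{k : \pi(k) \le \pi(j)} p_k = \sum_{\pi(k) \le \pi(j)} w_j p_k$. Comparing this quantity for the algorithm's order and for $\sigma$, the difference telescopes over transpositions into exactly a sum over pairs $(j',j)$ that are ordered one way by $\sigma$ and the other way by the algorithm's order, each contributing $w_{j'}p_j - w_j p_{j'}$ or its negative. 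The key point is that the set of such ``bad'' pairs is contained in $\mathcal{I}(J,\hsigma)$: any pair inverted between the algorithm's completion order and $\sigma$ must already be inverted between $\hsigma$ and $\sigma$, since the algorithm only deviates from $\hsigma$ when forced by precedence constraints — and such a forced deviation means $j'$ is a predecessor of $j$, hence $\sigma$ also puts $j'$ before $j$, so that pair is \emph{not} an inversion relative to $\sigma$ and contributes nothing. This yields $\alg \le \opt + \eta$ with $\eta = \sum_{(j',j) \in \mathcal{I}(J,\hsigma)} (w_{j'}p_j - w_j p_{j'})$, and since trivially $\opt \le \alg$ we also may want the cleaner form; in any case $\alg = \bigO(1+\eta)$ after noting $\opt \ge \max_j w_j C_j^\sigma$ is the right normalization (the ``$1+$'' absorbs the $\opt$ term, $\eta$ the inversion term). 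This is precisely the argument of~\cite{LindermayrM22}, which I would cite rather than reprove in full.

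Finally, to get the $\min\{1+\eta,\numc\}$ bound, I would combine this $\bigO(1+\eta)$-competitive algorithm with the $\numc$-competitive algorithm of \Cref{lemma:chain-robust} via the time-sharing technique of \Cref{thm:alg-combination}, losing only a factor $2$, which is absorbed in the $\bigO(\cdot)$. Efficiency is immediate: the algorithm just maintains the front set and picks the minimum-$\hsigma$ element, and the time-sharing wrapper runs two such rules in parallel.

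The main obstacle I anticipate is making the reduction ``the algorithm's completion order inverts no pair that $\hsigma$ does not'' fully rigorous in the \emph{online} precedence setting — in particular handling the fact that $\frontjobs_t$ depends on the algorithm's own past decisions, and that jobs in $J \setminus \hat J$ appended to the end of $\hsigma$ (static case) or revealed late (adaptive case) must be shown not to create uncontrolled inversions. The resolution is that an appended/late job $j$ has $\hsigma(j)$ larger than everything processed so far, so any pair $(j',j)$ with $j'$ completed before $j$ either has $j'$ a predecessor of $j$ (not an inversion w.r.t.\ $\sigma$, harmless) or has $\hsigma(j') < \hsigma(j)$ anyway; either way it is consistent with $\hsigma$. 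Once this bookkeeping is in place, the rest is the routine telescoping computation, which I would not grind through here.
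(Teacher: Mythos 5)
There is a genuine gap in the core step. You assert that the set of pairs inverted between the algorithm's completion order and $\sigma$ is contained in $\mathcal{I}(J,\hsigma)$, arguing via the dichotomy: for any pair $(j',j)$ with the algorithm completing $j$ before $j'$, either $j'$ is a predecessor of $j$ (so $\sigma$ agrees with the algorithm, harmless), or $\hsigma(j') < \hsigma(j)$ (so $\hsigma$ agrees with the algorithm, and the inversion is already charged to $\hsigma$). This dichotomy misses the crucial case created by the \emph{online revelation} of jobs: the algorithm may be forced to pick $j$ over $j'$ because $j'$ is blocked by an unfinished predecessor that is entirely unrelated to $j$. In that situation $j'$ is not a predecessor of $j$, yet $\hsigma$ may still rank $j'$ ahead of $j$.

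A concrete counterexample: four unit jobs $a,b,c,d$ with the single edge $a \to b$, weights $w_a=w_c=0$, $w_b=10$, $w_d=1$, prediction $\hsigma=(b,d,a,c)$, and $\sigma=(a,b,d,c)$ (which is optimal here). At time $0$ the front jobs are $\{a,c,d\}$; the $\hsigma$-minimal among them is $d$, so the greedy rule runs $d$, then $a$, then $b$, then $c$, i.e.\ completion order $(d,a,b,c)$. The pair $(b,d)$ is inverted between the algorithm's order and $\sigma$ (algorithm has $d$ first, $\sigma$ has $b$ first), yet $\hsigma$ \emph{agrees} with $\sigma$ on this pair, so $(b,d)\notin\mathcal{I}(J,\hsigma)$. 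Neither horn of your dichotomy applies: $d$ is not a predecessor of $b$, and $\hsigma(d)=2>1=\hsigma(b)$. Hence the containment $\mathrm{Inv}(\pi_{\alg},\sigma)\subseteq\mathcal{I}(J,\hsigma)$ fails, and your telescoping identity $\alg=\opt+\sum_{(j',j)\in\mathrm{Inv}(\pi_{\alg},\sigma)}(w_{j'}p_j-w_jp_{j'})$ does not reduce to $\opt+\eta$. Note also that $b\in\hat J$ here, so the ``appended/late job'' patch in your final paragraph does not cover this case --- the problem arises for jobs that \emph{are} predicted but are not yet front jobs when the algorithm must commit. The paper itself only says that the analysis of~\cite{LindermayrM22} applies without spelling it out; whatever that analysis is, it must account for precisely these online-precedence-induced inversions, which your argument currently does not.
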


\section{Full Input Predictions}

We can use full input predictions to compute static action predictions $\hsigma$. In general, computing  $\hsigma$ requires exponential running time as the problem is NP-hard. For special cases, e.g., chains, there are efficient algorithms~\cite{lawler78}. 

While following $\hsigma$ allows us to achieve the guarantee of~\Cref{thm:action}, the error $\eta$ does not directly depend on the predicted input but on an algorithm which computed actions for that input.
Thus, we aim at designing error measures depending directly on the \enquote{similarity} between the predicted and actual instance.
As describing the similarity between two graphs is a notoriously difficult problem on its own, we leave open whether there is a meaningful error for general topologies. 
However, we give an error measure for chains.
The key idea of this error is to capture additional cost that any algorithm pays due to both, \emph{absent predicted} weights and \emph{unexpected actual} weights. This is in the same spirit as the universal cover error for graph problems in \cite{Bernardini22universal}.
Assuming that the predicted and actual instance only differ in the weights, our error $\Lambda = \Gamma_u + \Gamma_a$ considers the optimal objective values $\Gamma_u$ and $\Gamma_a$ for the problem instances that use $\{(w_j - \hw_j)_+\}_j$ and  $\{(\hw_j - w_j)_+\}_j$ as weights, respectively. Then, $\Gamma_u$ and $\Gamma_a$ measure the cost for \emph{unexpected} and \emph{absent} weights.
In the appendix, we generalize this idea to also capture other differences of the predicted and actual chains and prove the following theorem.

\begin{restatable}{theorem}{thmInputPredErrorDep}\label{thm:input-predictions}
	Given access to an input prediction, there exists an efficient algorithm for minimizing the total weighted completion time of unit-size jobs on a single machine with online chain precedence constraints with a competitive ratio of at most $\bigO(\min \left\{ 1 + \Lambda, \numc \right\})$, where $\Lambda = \Gamma_u + \Gamma_a$.    
\end{restatable}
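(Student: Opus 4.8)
The plan is to reduce the input-prediction setting to the action-prediction setting of \Cref{thm:action}, but with an error that is bounded in terms of $\Lambda$ rather than in terms of a permutation error on an opaque algorithmically-computed order. First I would use the predicted chain instance $\hat{J}$ (with unit sizes, predicted weights $\hw_j$, and the predicted chain topology) to compute an optimal schedule for $\hat{J}$; since the jobs are unit-size and form chains, this can be done efficiently via the classical algorithm of~\cite{lawler78}, and it yields a static action prediction $\hsigma$. We then run the static-action-prediction algorithm from \Cref{thm:action} on $\hsigma$. By that theorem, the resulting algorithm is $\bigO(\min\{1+\eta,\numc\})$-competitive, where $\eta = \sum_{(j',j)\in\mathcal{I}(J,\hsigma)}(w_{j'}p_j - w_j p_{j'})$. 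Because all jobs are unit-size, $\eta$ simplifies to $\sum_{(j',j)\in\mathcal{I}(J,\hsigma)}(w_{j'}-w_j)$, counting the pairs that the predicted order ranks in the wrong relative position, weighted by the true weight gap.

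The core of the argument is then to show $\eta = \bigO(\Lambda)$, i.e. that the permutation error of $\hsigma$ against the true optimum is controlled by $\Gamma_u + \Gamma_a$. The key observation is that an inversion $(j',j)$ contributes positively to $\eta$ exactly when $\hsigma$ orders $j$ before $j'$ while the true optimum orders $j'$ before $j$ and $w_{j'} > w_j$. For unit-size chain instances, an optimal order is obtained by a Smith-type / exchange rule on the chains (WSPT on the ``chain-prefixes''), so any such inverted pair can be charged to a discrepancy between the true weight profile $\{w_j\}$ and the predicted one $\{\hw_j\}$: either $j'$'s true weight exceeds its predicted weight (an \emph{unexpected} weight, contributing to $\Gamma_u$), or $j$'s predicted weight exceeds its true weight (an \emph{absent} weight, contributing to $\Gamma_a$) — otherwise the two optima would agree on the relative order of $j',j$. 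I would make this precise by a hybrid argument: walk from the predicted-weight instance to the true-weight instance by changing weights $(\hw_j - w_j)_+$ down to $0$ and $(w_j - \hw_j)_+$ up from $0$, and bound the change in the optimal objective — equivalently, the total ``reordering cost'' — at each step by the corresponding term of $\Lambda$. Since $\Gamma_u$ and $\Gamma_a$ are themselves \emph{optimal} objective values of unit-size chain instances, they already aggregate exactly these pairwise weight-gap contributions (each such instance's objective, expanded over chain positions, is a sum of $w_{j'}\cdot(\text{position of }j)$ terms matching the inversion contributions), which is what lets us absorb $\eta$ into $\bigO(\Lambda)$.

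Finally, the $\numc$ side of the minimum is free: it is inherited directly from the $\numc$-competitive robust algorithm of \Cref{lemma:chain-robust} combined via the time-sharing result \Cref{thm:alg-combination}, which costs only a constant factor. So the overall bound $\bigO(\min\{1+\Lambda,\numc\})$ follows by applying \Cref{thm:alg-combination} to the $\bigO(1+\eta)=\bigO(1+\Lambda)$ algorithm and the $\numc$-competitive one.

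The main obstacle I anticipate is the charging in the middle paragraph: relating the combinatorial permutation error $\eta$ (defined via the true optimum's order $\sigma$, which I do not control) to the instance-level quantities $\Gamma_u,\Gamma_a$. The subtlety is that $\sigma$ optimizes the \emph{true} instance while $\hsigma$ optimizes the \emph{predicted} one, and an inversion may arise from a \emph{cumulative} weight effect along a chain rather than a single job's weight error; handling this cleanly likely requires working with chain-prefix weights and a careful exchange argument, and verifying that the ``split'' of each inverted pair's cost into an unexpected-part and an absent-part is valid (rather than double-counted) across the hybrid steps. I also expect some care is needed because the predicted topology may differ from the true one — the appendix version generalizes $\Lambda$ to capture this, so the clean statement here assumes equal topologies and only weight differences, and the reduction above is stated under that assumption.
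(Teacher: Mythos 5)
Your plan differs substantially from the paper's proof, and it has a genuine gap at exactly the step you flag as hard. You reduce to \Cref{thm:action} and then need to show $\eta = \bigO(\Lambda)$, where $\eta$ is the permutation error of $\hsigma$ against a fixed optimal order $\sigma$ for the true instance. This is not established in your sketch, and the sketch's ``hybrid argument'' is a placeholder rather than a proof: the optimal order for unit-job chains is governed by prefix densities and a Sidney-type decomposition, so a small per-job weight perturbation can reorder many chain prefixes discontinuously, and it is not clear that the induced change in the inversion set can be charged pair-by-pair to $\Gamma_u$ and $\Gamma_a$. There is a further obstruction you do not address: $\eta$ depends on the choice of the reference optimal order $\sigma$ (there may be many), while $\Lambda$ is an order-free, instance-level quantity; an argument bounding $\eta$ by $\Lambda$ would have to either fix $\sigma$ carefully or be uniform over all optima, and your sketch does neither. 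So as written this is a claimed reduction, not a proof.

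The paper sidesteps all of this. It uses the same algorithm (compute an optimal order for $\hchains$, follow it, append unexpected jobs at the end), but the analysis never introduces a permutation error or a reference order $\sigma$. Instead it observes $\alg \le \opt(\{w_j\}_j, \{\hw_j\}_j)$, where $\opt(\{w'\},\{w\})$ is the worst objective of an optimal $\{w\}$-schedule evaluated under weights $\{w'\}$, and then telescopes:
\begin{align*}
\opt(\{w_j\}, \{\hw_j\}) &\le \opt(\{\max\{w_j,\hw_j\}\}, \{\hw_j\}) = \opt(\{\hw_j\}) + \Gamma_a \\
&\le \opt(\{\max\{\hw_j,w_j\}\}, \{w_j\}) + \Gamma_a \le \opt(\{w_j\}) + \Gamma_u + \Gamma_a.
\end{align*}
This gives $\alg \le \opt + \Lambda$ directly, with no combinatorics on inversions, and the $\min\{\cdot,\numc\}$ side then follows from \Cref{lemma:chain-robust} and \Cref{thm:alg-combination} exactly as you note. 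If you want to salvage your route, you would need to prove the missing lemma $\eta = \bigO(\Lambda)$ (for the specific $\sigma$ used by \Cref{thm:action}), which is an independent and nontrivial claim; absent that, the paper's mixed-objective telescoping is the cleaner path and is the one you should follow.
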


\section{Weight Value Predictions}

We now switch to more problem-specific prediction models, starting with weight value predictions.
We first prove strong lower bounds 
for algorithms with access to static weight predictions on out-trees and adaptive predictions on in-trees. Then, we give $4$-competitive algorithms
for accurate static predictions on chains, and adaptive weight predictions on out-forest precedence constraints, and  finally extend these results to obtain robust algorithms with error dependency.

The lower bound for out-trees adds a dummy root $r$ to the pure online lower bound composed of $\Omega(n)$ zero weight jobs, where exactly one hides a valuable job. In the static prediction setting we thus only receive a prediction for $r$, which does not help any algorithm to improve.

\begin{observation}
	\label{obs:lb-trees-static}
Any algorithm which has only access to static weight predictions has a competitive ratio of at least $\Omega(n)$, even if the precedence constraint graph is an out-tree.
\end{observation}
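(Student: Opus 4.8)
The plan is to adapt the pure-online $\Omega(n)$ lower bound recalled in the introduction by prepending a single dummy root, so that static weight predictions convey essentially no useful information. Concretely, I would construct an out-tree instance as follows: a root job $r$ with weight $w_r = 0$ and some processing time, whose direct successors are $n-1$ jobs $j_1,\dots,j_{n-1}$, all with weight $0$ and identical processing time. Exactly one of these jobs, say $j_{i^\star}$, has a single further successor $j^+$ with weight $w_{j^+} = 1$ (and the others are leaves). Since the precedence graph is an out-tree and every node has at most one incoming edge, this is a valid instance in our topology. The adversary chooses $i^\star$ after seeing the algorithm's behaviour.

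The first step is to observe what a static weight prediction reveals: by definition, we only receive a prediction $\hW_v$ for each $v \in F_0$, and here $F_0 = \{r\}$ since $r$ is the unique front job at time $0$. So the algorithm receives a single number $\hW_r$, which we can take to be the true value $w(S(r)) = 1$; this is consistent with the prediction being accurate, yet it tells the algorithm nothing about which of $j_1,\dots,j_{n-1}$ is $j_{i^\star}$. Once $r$ completes, the algorithm sees $j_1,\dots,j_{n-1}$, all of which look identical (same weight $0$, and processing times are non-clairvoyant), and crucially it receives no further predictions in the static model. Hence the algorithm is in exactly the indistinguishability situation of the pure-online lower bound.

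The second step is the standard averaging/adversary argument: among the $n-1$ symmetric jobs, any deterministic algorithm completes them in some order; the adversary picks $i^\star$ to be the job completed last, so $j^+$ is not released until all of $j_1,\dots,j_{n-1}$ (and $r$) are done, forcing $C_{j^+} \ge (n-1)\cdot p + p_r$ where $p$ is the common processing time, and thus $\alg \ge (n-1)p + p_r$ (only $j^+$ has positive weight). An offline optimum processes $r$, then $j_{i^\star}$, then $j^+$ immediately, giving $\opt = p_r + 2p$. Choosing, say, $p_r$ negligible (or zero) and $p=1$, the ratio is $\Omega(n)$. For randomized algorithms one replaces "completed last" by a uniformly random choice of $i^\star$ and uses linearity of expectation: the expected position of $j_{i^\star}$ among the symmetric jobs is $\Theta(n)$, yielding the same bound; since the statement says "any algorithm," I would phrase it via Yao's principle or directly with the random $i^\star$.

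The main obstacle—though it is minor—is making the argument robust to the algorithm being allowed to preempt and interleave the $j_i$'s arbitrarily, and to exploit the non-clairvoyance carefully: one must argue that since all of $j_1,\dots,j_{n-1}$ have the same processing time and are indistinguishable to the algorithm (same weight, no predictions, identical histories), the time at which the machine has devoted a full unit of processing to $j_{i^\star}$ is, in expectation over the adversary's choice, at least $\Theta(n)$. This is exactly the classic argument that for $k$ identical unit jobs on one machine, the last-finishing job finishes at time $k$ regardless of the preemptive schedule; pointing to the analogous reasoning already used informally in the introduction for the pure-online bound suffices, and then the added root only shifts everything by $p_r$, which we make negligible.
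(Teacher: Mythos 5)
Your construction is exactly the one the paper has in mind: prepend a dummy zero-weight root $r$ to the pure-online lower bound of $n-1$ indistinguishable zero-weight children, one of which hides the valuable successor, so that $F_0=\{r\}$ and the single static prediction $\hW_r$ conveys no distinguishing information. The adversary argument you carry out is the same one the paper invokes informally by reference to the introduction, so the proof is correct and follows the paper's approach.
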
 

For in-trees and adaptive weight predictions, we prove the following lower bound.

\begin{restatable}{lem}{LBInTrees}
	\label{lb:in-trees}
Any algorithm which has only access to adaptive weight predictions has a competitive ratio of at least~$\Omega(\sqrt{n})$, even for in-tree precedence constraints. \end{restatable}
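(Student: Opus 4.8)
The plan is to build an in-tree instance where adaptive weight predictions, despite revealing $w(S(v))$ for every newly visible front job, still force any deterministic algorithm to waste roughly $\sqrt{n}$ work before it can identify the job that matters. The intuition is the same as in the pure online $\Omega(n)$ lower bound, but an in-tree has the feature that a valuable job near the root only becomes visible after \emph{all} of a potentially large set of predecessors is completed; the adversary exploits this by hiding the relevant successor behind a whole anti-chain of indistinguishable jobs, so that the adaptive prediction at the moment of revelation comes too late to be useful.

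Concretely, I would take $k \approx \sqrt{n}$ disjoint ``bundles,'' each consisting of $k$ zero-weight source jobs all pointing (directly or through a short chain) into a single merge node; the merge node of exactly one designated bundle is a job of weight $1$ (and tiny processing time), while all other merge nodes have weight $0$ and feed forward into more zero-weight structure so that the instance stays an in-tree with $n = \Theta(k^2)$ jobs. First I would fix an arbitrary deterministic algorithm and argue that, since within a bundle all $k$ source jobs look identical (same weight $0$, no visible successors, processing times chosen adversarially), the algorithm cannot complete the last source of the ``good'' bundle until it has done $\Omega(k)$ units of processing in that bundle — and by an averaging/adversary argument over which bundle is good, it does $\Omega(k)$ work in the bad bundles too before the weight-$1$ job even appears. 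Crucially, the adaptive prediction for the weight-$1$ job is only handed over once that job is in $F_t$, i.e.\ after the delay has already been incurred, so it provides no leverage; and the predictions for the merge nodes, handed over when they appear, are all consistent with ``this bundle might be the good one'' until the very last source of each bundle finishes.

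Next I would set the numbers so that the weight-$1$ job completes at time $\Omega(k) = \Omega(\sqrt{n})$ in the algorithm's schedule, giving $\alg = \Omega(\sqrt{n})$, while an offline optimum knows the good bundle, processes its $k$ sources first (each of negligible length), and completes the weight-$1$ job at time $O(1)$, so $\opt = O(1)$. The ratio is $\Omega(\sqrt{n})$. I would also check that the construction genuinely uses only adaptive weight predictions — i.e.\ that at time $0$ we predict $w(S(v))$ for the initial front jobs (all of which are $0$-weight sources, whose successor-weight sums are all equal, hence uninformative), and that every later prediction is revealed exactly when its job enters $F_t$ — and that the graph is a legitimate in-tree (every node has at most one outgoing edge; collapse all bundles and the post-merge structure into a single sink).

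The main obstacle I anticipate is making the indistinguishability argument airtight against an adaptive adversary: the algorithm's processing choices determine which jobs become visible and when, so I need the adversary's choice of ``good bundle'' and of processing times to be committed lazily, in response to the algorithm, while still yielding a fixed valid instance in hindsight. The clean way to handle this is the standard trick — run the algorithm, let it commit to a schedule of the $0$-weight sources, identify a bundle in which it is ``slow'' (which must exist by a counting argument, since total work is bounded), declare that bundle good and set its sources' processing times to reveal the weight-$1$ job only at the end — and then verify this retroactively-defined instance is consistent with every prediction the algorithm saw. Getting that consistency bookkeeping exactly right, and tuning $k$ versus the per-job processing times so both $\alg = \Omega(\sqrt n)$ and $\opt = O(1)$ hold simultaneously, is where the real care is needed; the rest is routine.
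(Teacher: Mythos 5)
Your construction has a fatal flaw: the adaptive weight predictions for the \emph{source} jobs already reveal which bundle is good. In an in-tree, $S(v)$ for a source $v$ is the unique directed path from $v$ to the root, so every source in the designated good bundle satisfies $w(S(v)) = 1$ (its path passes through the weight-$1$ merge node), while every source of a bad bundle satisfies $w(S(v)) = 0$ (its path meets only zero-weight nodes). These predictions are available at time $0$, so the algorithm can head straight for the good bundle, finish it in $O(\sqrt{n})$ time, and achieve a constant ratio. Your sanity check near the end --- that \enquote{the successor-weight sums are all equal, hence uninformative} --- is asserted but does not actually hold for the instance you describe, and this is precisely where the argument collapses.

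Trying to repair this by giving every bad bundle its own downstream weight (so that all source predictions become $1$) runs into a second obstacle: your design is symmetric --- roughly $\sqrt{n}$ bundles of $\sqrt{n}$ sources each --- so once every bundle hides a weight-$1$ job behind $\sqrt{n}$ predecessors, \emph{any} schedule, including $\opt$, must pay $\Omega(n\sqrt{n})$, and the ratio degenerates to $O(1)$. The paper's construction avoids both problems by being deliberately asymmetric: one \emph{large} bad bundle of $n - 2\sqrt{n} - 1$ zero-weight leaves feeding into a weight-$1$ node $v$, together with $\sqrt{n}$ \emph{short} chains of length two whose inner nodes each carry weight $1$, all merging at a zero-weight root $r$. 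Then every initial leaf has $w(S(\cdot)) = 1$, so the leaves genuinely are indistinguishable under weight predictions; $\opt$ finishes the $\sqrt{n}$ cheap weight-$1$ inner nodes early for total cost $O(n)$, while an adversarially unlucky algorithm buries them behind the big bundle for cost $\Omega(n\sqrt{n})$. You need an asymmetric instance of this flavour, not a collection of equal-sized bundles, to make the $\Omega(\sqrt{n})$ bound go through.
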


\begin{proof}
	Consider an in-tree instance with unit-size jobs and root~$r$ of weight~$0$. There are~$\sqrt{n}$ chains of length~$2$ with leaf weights~$0$ and inner weights~$1$ which are connected to~$r$. Further, there are~$n-2\sqrt{n}-1$ leaves with weight 0, which are connected to a node~$v$ with weight~$1$, which itself is a child of~$r$.
	Note that the weight prediction for all potential front jobs except $r$ is always $1$. Thus, even the adaptive predictions do not help, and we can assume that the algorithm first processes the children of~$v$, giving a total objective of at least~$\Omega((n-2\sqrt{n}-1)^2 + (n-2\sqrt{n}-1)\sqrt{n}) = \Omega(n \sqrt{n})$, while processing the other leaves first yields a value of at most~$\bigO((2\sqrt{n})^2 + (2\sqrt{n} + n-2\sqrt{n})) = \bigO(n)$.
\end{proof}

\subsection{Algorithms for Reliable Information}
\label{sec:chains-consistent}

We present algorithms assuming access to \emph{correct} static or adaptive weight predictions and prove their competitiveness on online chain and out-forest precedence constraints using a unified analysis framework.
This uses a \df argumentation inspired by an analysis of an algorithm for {\em known} precedence constraints~\cite{GargGKS19}. The framework only requires a condition on the rates at which an algorithm processes front jobs, hence it is independent of the considered prediction model. Let~$U_t$ refer to the set of unfinished jobs at time~$t$, i.e.,~$U_t = \bigcup_{v \in F_t} S(v)$. Denote by~$w(J')$ the total weight of jobs in a set~$J'$. We write~$W(t)$ for~$w(U_t)$.

\begin{theorem}\label{thm:rates-algo}
    If an algorithm for online out-forest precedence constraints satisfies at every time~$t$ and~$j \in F_t$ that~$w(S(j)) \leq \rho \cdot R_j^t \cdot W(t)$, where~$R_j^t$ is the processing rate of~$j$ at time~$t$, it is at most~$4\rho$-competitive for minimizing the total weighted completion time on a single machine.
\end{theorem}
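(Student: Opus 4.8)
The plan is to use the dual-fitting (\df) technique against a suitable linear-programming relaxation of the scheduling problem, following the spirit of~\cite{GargGKS19} but exploiting the fact that we only need the rate condition~$w(S(j)) \le \rho \cdot R_j^t \cdot W(t)$ rather than any offline information. First I would write down a compact time-indexed LP relaxation whose objective lower-bounds~$\opt$: using variables~$x_{j}^t$ for the rate at which~$j$ is processed in~$[t,t+1)$ (or a continuous analogue), with constraints $\sum_t x_j^t = p_j$, $\sum_j x_j^t \le 1$, and a completion-time expression of the form $C_j \ge \tfrac{1}{2}p_j + \sum_t t\cdot x_j^t/p_j$ (the standard $\tfrac12$-gap term). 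Crucially I would use the \emph{compact} version that encodes only the direct precedence relations implicitly through the online arrival structure, not the transitive closure — this is what makes the dual variables simple. The LP objective is $\sum_j w_j C_j$ and, being a relaxation, its optimum is at most~$\opt$; by LP duality any feasible dual solution gives a lower bound on~$\opt$, so it suffices to exhibit a dual whose value is at least $\alg/(4\rho)$.

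Next I would construct the dual solution directly from the algorithm's trajectory. The natural choice, mirroring~\cite{GargGKS19}, is to set a dual variable for each job~$j$ proportional to the total remaining weight~$W(C_j)$ at~$j$'s completion time (or an integral $\int_0^{C_j} (\cdot)\, dt$ of $W(t)$-type quantities), and a dual variable for each time~$t$ proportional to $W(t)$ itself. The key algebraic identity I would establish is that $\alg = \sum_j w_j C_j = \int_0^\infty W(t)\, dt$, since at each instant the rate of increase of the total completed weight equals $w(J)-W(t)$, i.e. the fractional objective telescopes into $\int W(t)\,dt$. So the target is to show the dual value is $\Omega(\tfrac{1}{\rho}\int_0^\infty W(t)\,dt)$ while checking dual feasibility. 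Dual feasibility is where the rate condition enters: the dual constraint associated with the variable $x_j^t$ will say something like $\frac{w_j}{p_j}(\text{stuff}) \le (\text{time-}t\text{ dual}) + (\text{job-}j\text{ dual})$, and one uses $w(S(j)) \le \rho R_j^t W(t)$ to bound the contribution of~$j$'s subtree against $R_j^t W(t)$, summing correctly because the rates at any time sum to at most~$1$. The factor $4\rho$ then comes from two separate factors of~$2$: one from the $\tfrac12 p_j$ term in the LP completion-time lower bound (the classical fractional-vs-integral gap), and one from splitting the dual into a "weight" part and a "time" part each carrying half the load, or equivalently from a Cauchy–Schwarz / scaling step when balancing the two families of dual variables.

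I would carry out the steps in this order: (1) state the compact LP and its dual explicitly; (2) prove $\opt \ge \mathrm{LP}^\star$ and $\alg = \int_0^\infty W(t)\, dt$; (3) define the candidate dual variables in terms of $W(\cdot)$ and the $R_j^t$; (4) verify each dual constraint, invoking the rate hypothesis and $\sum_{j\in F_t} R_j^t \le 1$ at the critical step; (5) lower-bound the dual objective by $\tfrac{1}{4\rho}\int_0^\infty W(t)\, dt = \tfrac{1}{4\rho}\alg$, and conclude by weak duality that $\alg \le 4\rho\cdot\opt$. One subtlety to handle carefully: out-forests mean a front job~$j$'s subtree $S(j)$ may split into several subtrees once~$j$ completes, so $W(t)$ as a function of~$t$ is right-continuous with downward jumps exactly at completion times, and the bookkeeping of "which remaining weight is charged to which dual variable" must respect this; but since we only track the \emph{aggregate} $W(t)$ and the hypothesis is stated in terms of $w(S(j))$ and $W(t)$ directly, this should go through cleanly.

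The main obstacle I anticipate is step~(4): choosing the precise constants in the dual variables so that \emph{every} dual constraint holds simultaneously with the rate hypothesis as the \emph{only} structural input, and in particular making sure the compact (non-transitive) LP is genuinely strong enough — i.e. that dropping transitive precedence constraints does not break the lower bound $\opt \ge \mathrm{LP}^\star$ in a way that would force a worse-than-$4\rho$ ratio. I expect this is exactly the point the authors flag as "we prove necessary conditions on an algorithm to enable the \df", so the work is in isolating which inequality about the $R_j^t$ is the right one — and the theorem hypothesis $w(S(j)) \le \rho R_j^t W(t)$ is presumably precisely that distilled condition, so once the LP and dual are set up correctly the verification should be short.
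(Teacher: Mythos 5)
Your high-level plan correctly identifies the approach — dual fitting against a time-indexed LP relaxation in the spirit of~\cite{GargGKS19}, with the rate hypothesis $w(S(j)) \le \rho R_j^t W(t)$ entering the dual-feasibility check and the identity $\alg = \int_0^\infty W(t)\,dt$ driving the dual objective. However, as you yourself flag, you have left exactly the critical step (4) unexecuted, and the way you propose to set it up has two concrete gaps that would prevent the verification from going through.

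First, your LP differs from the paper's in a way that matters. You propose to use the classical fractional completion-time bound with the additive $\tfrac12 p_j$ term, and you attribute one of the two factors of~$2$ to this. The paper instead uses a \emph{speed-scaled} LP $(\text{LP}_\alpha)$: the primal constraint is $\sum_j \alpha\, x_{j,t} \le 1$, and the lower bound is $\opt_\alpha \le \alpha\cdot \opt$. The factor $4\rho$ arises as $\alpha/(1 - 1/\kappa)$ with $\alpha = \rho\kappa$ and $\kappa=2$ — i.e.\ one factor $2\rho$ from speed scaling and one factor $2$ from $\kappa$ in the dual objective $\sum_j \dualSa_j - \sum_t \dualSb_t = (1-\tfrac1\kappa)\alg$. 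These are not the same two factors of $2$ you describe; and with a midpoint-style LP the precedence constraints become awkward, whereas the speed-scaled LP admits a clean linear precedence constraint $\sum_{s\le t} x_{j,s}/p_j \ge \sum_{s\le t} x_{j',s}/p_{j'}$ for each edge.

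Second, and more seriously, your candidate dual leaves out an entire family of variables. You propose only a job-indexed dual (proportional to remaining weight at $C_j$) and a time-indexed dual (proportional to $W(t)$), which correspond to the paper's $\dualSa_j$ and $\dualSb_t$. But the paper's feasibility argument hinges on the \emph{edge duals} $\dualSc_{t,j'\to j}$, set to $w(S(j))$ whenever both endpoints are unfinished, and these are what make the dual constraint \eqref{constr:dual} hold. The key mechanism is a telescoping over the out-forest: at times before $j$ is available, the single incoming edge contributes $+w(S(j))$ and the outgoing edges contribute $-(w(S(j)) - w_j)$ (using disjointness of successor subtrees in an out-forest), so the net left-hand side is $\le 0$; once $j$ is available, the net contribution collapses to exactly $w(S(j))$, and only then does the rate hypothesis enter via $\sum_{s=t_j^*}^{C_j} w(S(j))/W(s) \le \rho\sum_s R_j^s \le \rho p_j$. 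Without the $\dualSc$ variables there is nothing to cancel the $w_j$ terms before $t_j^*$, and the constraint cannot be verified. So the obstacle you anticipate in step (4) is not a matter of "choosing precise constants" for the duals you have — it requires introducing a third family of duals whose structure is dictated by the out-forest topology.

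A minor point: you cite $\sum_{j\in F_t} R_j^t \le 1$ as an ingredient of the feasibility check, but the paper's proof never uses this directly; what is used is $\sum_{s=t_j^*}^{C_j} R_j^s \le p_j$ (the definition of completion time) together with monotonicity of $W(\cdot)$.

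In short: right technique, right target identity, right location for the rate hypothesis, but the LP and the dual construction as you sketch them are not the ones that make the argument close, and the essential edge-dual family is missing.
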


We first present algorithms for weight predictions and derive results using \Cref{thm:rates-algo}, and finally prove the theorem.

\paragraph{Static Weight Values for Chains}
We give an algorithm
for correct static weight predictions.
As \Cref{obs:lb-trees-static} rules out well-performing algorithms for out-tree precedence constraints with static weight predictions, we focus on chains. Correct static weight predictions mean access to the total weight~$W_c$ of every chain~$c$ in the set of chains~$\chains$.

\begin{algorithm}[tb]
	\caption{Weighted Round Robin on Chains}\label{alg:chain-robin}
	\begin{algorithmic}[1]
	\REQUIRE Chains~$\chains$, initial total weight $W_c$ for each $c \in \chains$.
	\STATE $t \gets 0$ and $W_c(t) \gets W_c$ for every $c \in \chains$.
	\WHILE{$U_t \neq \emptyset$}
	\STATE Process the front job of every chain $c$ at rate $\frac{W_c(t)}{\sum_{c} W_c(t)}$
	\STATE $t \gets t + 1$
	\STATE If a job $j$ in chain $c$ finished, $W_c(t) \gets W_c(t) - w_j$\label{line:chain-robin-update}
	\ENDWHILE
	\STATE Schedule remaining jobs in an arbitrary order.
	\end{algorithmic}
	\end{algorithm}

Algorithm~\ref{alg:chain-robin}, essentially, executes a classical weighted round robin algorithm where the rate at which the front job of a chain~$c$ is executed at time~$t$ is proportional to the total weight of unfinished jobs in that chain,~$W_c(t)$.
As this definition is infeasible for unfinished chains with $W_c(t)=0$, we process these in an arbitrary order in the end. As they have no weight, this does not negatively affect the objective.

Despite initially only having access to the weights~$W_c(t)$ for~$t=0$, the algorithm can compute~$W_c(t)$ for any~$t > 0$ by subtracting the weight of finished jobs of~$c$ from the initial~$W_c$ (cf.~Line~$5$).
Thus,~$W_c(t) = w(S(j))$ holds for any time~$t$ and every~$j \in F_t$, where~$c$ is the corresponding chain of job~$j$.
Further,~$W(t) = \sum_c W_c(t)$.
We conclude that, for any~$t$ and~$j \in F_t$, it holds~$R_j^t=\frac{w(S(j))}{W(t)}$. Using \Cref{thm:rates-algo} with~$\rho = 1$, we derive the following result:

\begin{theorem}\label{theorem:chain-robin}
     Given correct weight predictions, \Cref{alg:chain-robin} is a non-clairvoyant $4$-competitive algorithm for minimizing the total weighted completion time of jobs with online chain precedence constraints on a single machine.
\end{theorem}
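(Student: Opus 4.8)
The plan is to derive \Cref{theorem:chain-robin} as an immediate corollary of \Cref{thm:rates-algo}, so the only work is to verify that \Cref{alg:chain-robin} satisfies the rate condition with $\rho = 1$. First I would observe that the algorithm maintains, for every chain $c \in \chains$, a value $W_c(t)$ which is initialized to the (correctly predicted) total chain weight $W_c$ and decremented by $w_j$ exactly when a job $j$ of $c$ completes (Line~\ref{line:chain-robin-update}). Since a chain is processed strictly in order, at any time $t$ the set of unfinished jobs of $c$ is precisely $S(j)$ for the current front job $j$ of $c$, so $W_c(t) = w(S(j))$. This is where correctness of the prediction enters: the algorithm never observes $W_c$ directly, it only knows the predicted value, so if the prediction is exact then $W_c(t)$ tracks the true residual weight.

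Next I would assemble the global quantities. The set of unfinished jobs at time $t$ is $U_t = \bigcup_{v \in F_t} S(v)$, and since the $S(v)$ for distinct front jobs $v$ lie in distinct chains and are disjoint, $W(t) = w(U_t) = \sum_{c} W_c(t)$, where the sum is over chains with unfinished jobs (chains with $W_c(t) = 0$ contribute nothing). Now for any time $t$ and any front job $j \in F_t$ lying in chain $c$, the algorithm's rate is $R_j^t = \frac{W_c(t)}{\sum_{c'} W_{c'}(t)} = \frac{w(S(j))}{W(t)}$, hence $w(S(j)) = R_j^t \cdot W(t)$. This is exactly the hypothesis of \Cref{thm:rates-algo} with $\rho = 1$ (the inequality holds with equality). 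One small point to address: chains with $W_c(t) = 0$ are handled in an arbitrary order at the end; since all their jobs have zero weight, they contribute nothing to the objective $\sum_j w_j C_j$, so they can be ignored in the analysis — and on chains that still carry positive weight the rate formula above is well defined.

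Applying \Cref{thm:rates-algo} then gives that \Cref{alg:chain-robin} is at most $4\rho = 4$-competitive for minimizing the total weighted completion time on a single machine with online chain precedence constraints. Finally I would note that the algorithm is non-clairvoyant: it uses only the weights of visible front jobs and the predicted chain totals, never the processing times $p_j$, and the rate update in Line~\ref{line:chain-robin-update} requires only detecting job completions, which is available online. I do not anticipate a genuine obstacle here — the entire content of \Cref{thm:rates-algo} has been offloaded — but the one subtlety worth stating carefully is the disjointness/partition argument that yields $W(t) = \sum_c W_c(t)$ and the identification $W_c(t) = w(S(j))$, together with the remark that zero-weight leftover chains are harmless.
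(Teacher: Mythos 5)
Your proposal is correct and follows exactly the paper's route: show $W_c(t) = w(S(j))$ for the front job $j$ of chain $c$, conclude $R_j^t = w(S(j))/W(t)$, and invoke \Cref{thm:rates-algo} with $\rho = 1$. The extra remarks about zero-weight leftover chains and non-clairvoyance are sensible sanity checks but do not change the argument.
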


\paragraph{Adaptive Weight Values for Out-Forests}

\Cref{obs:lb-trees-static} states that static weight predictions are not sufficient to obtain $\mathcal{O}(1)$-competitive algorithms for out-forests. The reason is that we, in contrast to chains, cannot recompute~$\hW_{j}$ whenever a new front job~$j$ appears. 
For adaptive predictions, however, we do not need to recompute~$\hW_{j}$, as we simply receive a new prediction. Thus, we can process every front job $j \in F_t$ with rate $R_j^t = \frac{\hW_j}{\sum_{j' \in F_t} \hW_{j'}}$. For correct predictions, \Cref{thm:rates-algo} directly implies the following. 

\begin{theorem}\label{theorem:out-forests-rr}
	    Given correct adaptive 
		weight predictions, there exists a non-clairvoyant $4$-competitive algorithm for minimizing the total weighted completion time of jobs with online out-forest precedence constraints on a single machine.	
\end{theorem}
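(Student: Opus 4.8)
The plan is to reduce \Cref{theorem:out-forests-rr} to \Cref{thm:rates-algo} by verifying its single hypothesis for the proposed rate rule $R_j^t = \hW_j / \sum_{j' \in F_t} \hW_{j'}$, where $\hW_j$ is the (correct) adaptive weight prediction revealed when $j$ becomes a front job. First I would spell out what ``correct'' means here: whenever $j$ enters $F_t$, the adaptive prediction satisfies $\hW_j = w(S(j))$, where $S(j)$ is $j$ together with all of its successors in the (online) out-forest. The crucial structural observation is that in an out-forest every job has at most one incoming edge, so the successor sets of distinct front jobs are pairwise disjoint, and at any time $t$ the unfinished jobs decompose exactly as $U_t = \bigsqcup_{j' \in F_t} S(j')$. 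Hence $W(t) = w(U_t) = \sum_{j' \in F_t} w(S(j')) = \sum_{j' \in F_t} \hW_{j'}$.

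Given that identity, the hypothesis of \Cref{thm:rates-algo} is immediate: for every time $t$ and every $j \in F_t$,
\[
\rho \cdot R_j^t \cdot W(t) = 1 \cdot \frac{\hW_j}{\sum_{j' \in F_t} \hW_{j'}} \cdot \sum_{j' \in F_t} \hW_{j'} = \hW_j = w(S(j)),
\]
so $w(S(j)) \le \rho \cdot R_j^t \cdot W(t)$ holds with $\rho = 1$ (in fact with equality). One routine point to address is well-definedness of the rule: if $\sum_{j' \in F_t} \hW_{j'} = 0$, the fraction is undefined, but then every front job has zero total successor weight, so — exactly as in the chain case after \Cref{alg:chain-robin} — we process these jobs in an arbitrary order at full rate, which affects neither the feasibility constraint $\sum_j R_j^t \le 1$ nor the objective, and the hypothesis of \Cref{thm:rates-algo} is trivially satisfied since its left-hand side is $0$. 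It also bears noting that $\sum_{j \in F_t} R_j^t = 1$ whenever the denominator is positive, so the schedule is feasible, and $R_j^t = 0$ for $j \notin F_t$ as required by the model. Finally, one should remark that this is a valid \emph{online} and \emph{non-clairvoyant} algorithm: the rates at time $t$ depend only on $F_t$ and on the predictions $\hW_{j'}$ for $j' \in F_t$, all of which are available by time $t$, and never on any $p_j$.

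Applying \Cref{thm:rates-algo} with $\rho = 1$ then yields that the algorithm is $4\rho = 4$-competitive for minimizing $\sum_j w_j C_j$ on a single machine with online out-forest precedence constraints, which is the claim. I do not anticipate a genuine obstacle here — all the real work is in \Cref{thm:rates-algo} itself (the \df argument), which we are entitled to assume; the only thing to be careful about is the disjointness of the $S(j')$ for $j' \in F_t$, which is precisely where the out-forest hypothesis (each node has at most one incoming edge) is used and which fails for general DAGs, and the degenerate zero-weight case handled above.
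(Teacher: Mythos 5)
Your proposal is correct and follows exactly the paper's route: set $R_j^t = \hW_j / \sum_{j' \in F_t} \hW_{j'}$, observe that in an out-forest the sets $S(j')$ for $j' \in F_t$ are disjoint so that $W(t) = \sum_{j' \in F_t} \hW_{j'}$, verify the hypothesis of \Cref{thm:rates-algo} with $\rho = 1$, and conclude $4$-competitiveness. Your explicit treatment of the degenerate case $\sum_{j' \in F_t} \hW_{j'} = 0$ is a small but worthwhile addition that the paper only spells out in the chain setting.
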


\paragraph{Full proof of \Cref{thm:rates-algo}}

Fix an algorithm satisfying the conditions of \Cref{thm:rates-algo}.
Let $\alg$ be the objective value of the algorithm's schedule for a fixed instance. We introduce a linear programming relaxation similar to the one in~\cite{GargGKS19} for our problem on a machine running at lower speed~$\frac{1}{\optspeed}$, for some~$\optspeed \geq 1$. Let~$\opt_\optspeed$ denote the optimal objective value for the problem with speed~$\frac{1}{\optspeed}$.
As the completion time of every job is linear in the machine speed, we have~$\opt_\optspeed \leq \optspeed \cdot \opt$.
The variable~$x_{j,t}$ denotes the fractional assignment of job~$j$ at time~$t$. The first constraint ensures that every job receives enough amount of processing to complete, the second constraint restricts the avaible rate per time to $\frac{1}{\alpha}$, and the final constraint asserts that no job can be completed before its predecessors.
\begin{alignat}{3}
    \text{min} \quad &\sum_{j,t} w_{j} \cdot t \cdot \frac{x_{j,t}}{p_j} \tag{$\text{LP}_\optspeed$}\label{lp} \\ 
    \text{s.t.}  \quad & \sum_{t} \frac{x_{j,t}}{p_{j}} \geq 1   &&\forall j \notag \\
    & \sum_{j} \optspeed \cdot x_{j,t} \leq 1   &&\forall t \notag \\
    & \sum_{s \leq t} \frac{x_{j,s}}{p_j} \geq \sum_{s \leq t} \frac{x_{j',s}}{p_{j'}} && \quad \forall t, \forall (j,j') \in E \notag \\
    & x_{j,t} \geq 0 &&\forall j,t \notag 
\end{alignat}

The dual of~\eqref{lp} can be written as follows. \begin{alignat}{3}
    \text{max} \quad &\sum_{j} \dualVa_j - \sum_{t} \dualVb_{t} \tag{$\text{DLP}_\optspeed$}\label{dual} \\ 
    \text{s.t.}  \quad &\sum_{s \geq t} \left( \sum_{j': (j,j') \in E} \dualVc_{s,j \to j'} - \sum_{j': (j',j) \in E} \dualVc_{s,j' \to j} \right) \notag \\
    &\leq \optspeed \cdot \dualVb_{t} \cdot p_j - \dualVa_j + w_j \cdot t  \; \; \quad \forall j,t \label{constr:dual} \\ 
    &\dualVa_j, \dualVb_{t}, \dualVc_{t,j \to j'}  \geq 0  \quad  \qquad \qquad \forall t, \forall (j,j') \in E \notag
\end{alignat}

Let~$\kappa > 1$ be a constant which we fix later. We define a variable assignment for~\eqref{dual} 
as follows: $\dualSa_j = \sum_{s \geq 0} \dualSa_{j,s}$ for every job~$j$, where~$\dualSa_{j,s} = w_j$ if~$s \leq C_j$ and $\dualSa_{j,s} = 0$ otherwise,~$\dualSb_{t} = \frac{1}{\kappa} \cdot W(t)$ for every time~$t$, and~$\dualSc_{t, j' \to j} = w(S(j))$ if~$j,j' \in U_t$, and $\dualSc_{t, j' \to j} = 0$ otherwise, for every time~$t$ and edge~$(j',j) \in E$.

We show in the following that the variables~$(\dualSa_j,\dualSb_{t},\dualSc_{t,j \to j'})$ define a feasible solution for \eqref{dual} and achieve an objective value close to~$\alg$. Weak duality then implies \Cref{thm:rates-algo}. First, consider the objective value.

\begin{lemma}\label{lemma:chain-robin-dual-objective}
$\sum_{j} \dualSa_j - \sum_{t} \dualSb_{t} = (1 - \frac{1}{\kappa}) \alg$.
\end{lemma}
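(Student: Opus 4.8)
The plan is to compute both sums in the claimed identity directly from the definitions of the dual variables and match them up term by term.

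First I would handle $\sum_j \dualSa_j$. By definition $\dualSa_j = \sum_{s \ge 0} \dualSa_{j,s}$ where $\dualSa_{j,s} = w_j$ for $s \le C_j$ and $0$ otherwise. Summing the indicator over $s$, we get $\dualSa_j = w_j \cdot (C_j + 1)$ if we count integer times $s = 0, 1, \dots, C_j$; more to the point, summing over all jobs gives $\sum_j \dualSa_j = \sum_j w_j \sum_{s \ge 0} [s \le C_j] = \sum_{s \ge 0} \sum_{j : C_j \ge s} w_j = \sum_{s \ge 0} w(U_s) = \sum_{t} W(t)$, where I use that $U_s$ is exactly the set of jobs not yet completed at time $s$ (a job $j$ is unfinished at time $s$ iff $C_j > s$, or with the convention $C_j \ge s$, depending on the paper's indexing; I would match whichever convention makes $W(t)$ agree with $\sum_j \dualSa_{j,t}$, i.e.\ $\dualSa_{j,t}$ contributes $w_j$ exactly when $j$ is ``alive'' at $t$). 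The essential point, which I would state carefully, is the bookkeeping identity $\sum_j \dualSa_j = \sum_t W(t)$, which is just swapping the order of summation.

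Next, $\sum_t \dualSb_t = \sum_t \frac{1}{\kappa} W(t) = \frac{1}{\kappa} \sum_t W(t)$ directly from the definition $\dualSb_t = \frac{1}{\kappa} W(t)$. Combining the two computations,
\[
\sum_j \dualSa_j - \sum_t \dualSb_t = \sum_t W(t) - \frac{1}{\kappa}\sum_t W(t) = \Bigl(1 - \frac{1}{\kappa}\Bigr)\sum_t W(t).
\]
It then remains to identify $\sum_t W(t)$ with $\alg = \sum_j w_j C_j$. But $\sum_t W(t) = \sum_t \sum_{j \in U_t} w_j = \sum_j w_j \cdot |\{t : j \in U_t\}| = \sum_j w_j C_j = \alg$, again by Fubini and the fact that job $j$ is counted in $U_t$ for exactly $C_j$ many integer times $t$ (with the paper's time convention). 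This yields the claimed equality $\sum_j \dualSa_j - \sum_t \dualSb_t = (1 - \frac{1}{\kappa})\alg$.

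The only real subtlety — and the step I would be most careful about — is keeping the discrete time conventions consistent across the three places $W(t)$ appears: in the definition of $U_t$ (strict vs.\ non-strict inequality $C_j > t$ versus $C_j \ge t$), in the definition of $\dualSa_{j,s}$ ($s \le C_j$), and in the identification $|\{t : j \in U_t\}| = C_j$. All three off-by-one choices must align so that the two Fubini computations produce the same quantity $\sum_t W(t)$ and so that that quantity equals $\sum_j w_j C_j$ exactly rather than $\sum_j w_j (C_j \pm 1)$. Once the conventions are pinned down (the paper's definition $F_t$ uses $C_j > t$, so presumably $U_t$ does too, and $\dualSa_{j,s} = w_j$ for $s \le C_j$ should be read as $s < C_j$ or the sum over $s$ started at $s=0$ gives $C_j$ terms — I would adopt whichever reading the authors intend), the rest is a two-line swap-of-summation argument with no further obstacles.
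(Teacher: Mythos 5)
Your proof is correct and takes essentially the same route as the paper: both compute $\sum_j \dualSa_j$ and $\sum_t \dualSb_t$ by the same swap-of-summation (Fubini) bookkeeping and identify each with $\alg = \sum_j w_j C_j$ (up to $\frac{1}{\kappa}$). The paper is terser — it asserts $\dualSa_j = w_j C_j$ and $\sum_t \dualSb_t = \frac{1}{\kappa}\alg$ directly without flagging the discrete-time off-by-one convention that you carefully note needs to be aligned across $U_t$, $\dualSa_{j,s}$, and the count $|\{t : j \in U_t\}|$.
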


\begin{proof}
Note that~$\dualSa_j = w_j C_j$, and thus~$\sum_j \dualSa_j = \alg$. Also, since the weight~$w_j$ of a job~$j$ is contained in~$W(t)$ if~$t \leq C_j$, we conclude~$\sum_t \dualSb_{t} = \frac{1}{\kappa}\alg$.
\end{proof}

Second, we show that the duals are feasible for \eqref{dual}.

\begin{lemma}\label{lemma:chain-robin-dual-feasible}
   Assigning~$\dualVa_j = \dualSa_j$,~$\dualVb_{t} = \dualSb_{t}$ and~$\dualVc_{t,j \to j'} = \dualSc_{t,j \to j'}$ is feasible for~\eqref{dual} if~$\alpha \geq \kappa \rho$ and $ \frac{w(S(j))}{W(t)} \leq \rho R_j^t$ for any time~$t$  for~$j \in F_t$ and the algorithm's rates $R_j^t$. 
\end{lemma}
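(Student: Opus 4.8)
The plan is to verify constraint~\eqref{constr:dual} for every job~$j$ and every time~$t$ by plugging in the proposed dual assignment and bounding the left-hand side against the right-hand side. First I would simplify the left-hand side. For a fixed job~$j$ in an out-forest, $j$ has at most one outgoing edge (to its unique direct successor, if any) and possibly several incoming edges (but in an out-forest only one incoming edge; however the telescoping argument works regardless). Using $\dualSc_{s, j' \to j} = w(S(j))$ exactly when both endpoints lie in~$U_s$, and noting that once a job leaves~$U_s$ it never returns, the inner sum $\sum_{j' : (j,j')\in E} \dualSc_{s,j\to j'} - \sum_{j':(j',j)\in E}\dualSc_{s,j'\to j}$ stabilizes: for $s$ before $j$'s predecessors complete it may be negative or zero, and for $s$ in the window where $j$ is a front job it becomes $w(S(j'))$ summed over $j$'s direct successors present in $U_s$, minus contributions from predecessors that have already left. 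The key observation is that $\sum_{s \ge t}$ of this telescoping expression is dominated by $\sum_{s \ge t, s \le \text{(time $j$ completes or leaves)}} w(S(\text{child}))$, and since $S(j) \supseteq \{j\} \cup \bigcup_{\text{child } j'} S(j')$ in an out-forest, each such term is at most $w(S(j))$ times the relevant rate relation. I would show the left-hand side is at most $\sum_{s \ge \max\{t, C_j\}} w(S(j)) \cdot (\text{something}) $, but more carefully: the cleanest route is to bound it by $\sum_{s: t \le s \le C_j} \alpha\, x_{j,s}$-type quantities using the rate hypothesis $w(S(j)) \le \rho R_j^s W(s)$.

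The main technical step is the following chain of inequalities on the right-hand side. We have $\alpha\, \dualSb_t\, p_j - \dualSa_j + w_j t = \frac{\alpha}{\kappa} W(t) p_j - w_j C_j + w_j t$. I would split into two cases. If $t \ge C_j$: then $\dualSa_j = w_j C_j$ gives $-w_j C_j + w_j t = w_j(t - C_j) \ge 0$, and also $j \notin U_t$ so the left-hand side contributes nothing for $s \ge t$ from edges incident to $j$ in the relevant direction — one checks the whole LHS is $\le 0$ (since $j$ and its successors have all left), so the constraint holds trivially because the RHS is nonnegative. If $t < C_j$: the interesting case. Here $-w_j C_j + w_j t \le 0$, so I need $\sum_{s \ge t}(\cdots) + w_j C_j - w_j t \le \frac{\alpha}{\kappa} W(t) p_j$. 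The idea is to write the LHS plus the $w_j(C_j - t)$ term as a sum over times $s$ from $t$ to $C_j$ of quantities each bounded, via $w(S(j)) \le \rho R_j^s W(s) \le \rho R_j^s W(t)$ (using $W(s) \le W(t)$ since weight only decreases and $s \ge t$), by $\rho R_j^s W(t)$; then summing over $s$ from $t$ to $C_j$ the rates telescope to at most $p_j$ (the job needs total rate $p_j$ to complete, so $\sum_{s} R_j^s \ge p_j$, but here I need an upper-bound direction — actually $\sum_{s \le C_j} R_j^s$ could exceed $p_j$ only by the last time step; on a unit-time grid $\sum_{s=0}^{C_j} R_j^s < p_j + 1$, and one absorbs the $+1$ or argues in continuous time where it is exactly $p_j$). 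This yields LHS $+ w_j(C_j - t) \le \rho\, p_j\, W(t)$, and choosing $\alpha = \kappa\rho$ makes the RHS exactly $\rho\, p_j\, W(t)$, closing the argument.

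The hard part will be bookkeeping the telescoping of the $\dualSc$ terms correctly in the out-forest — in particular making sure that summing $w(S(j'))$ over the direct successors $j'$ of $j$ that coexist with $j$ in $U_s$, together with the job's own weight contribution $w_j$ appearing through $\dualSa$, never exceeds $w(S(j)) = w_j + \sum_{\text{direct succ } j'} w(S(j'))$ (this uses disjointness of the $S(j')$ for distinct direct successors in an out-forest, which is exactly where the topology is used), and handling the boundary time step on the discrete grid. Once that combinatorial identity is in place, the rate hypothesis does all the analytic work. I would also double-check the edge case where the LHS inner sum is negative (predecessor contributions), which only helps the inequality.

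Assembling: Lemmas~\ref{lemma:chain-robin-dual-objective} and~\ref{lemma:chain-robin-dual-feasible} give, by weak LP duality, $(1 - \tfrac{1}{\kappa})\alg = \sum_j \dualSa_j - \sum_t \dualSb_t \le \opt_\alpha \le \alpha \cdot \opt = \kappa\rho \cdot \opt$, hence $\alg \le \frac{\kappa\rho}{1 - 1/\kappa}\opt = \frac{\kappa^2 \rho}{\kappa - 1}\opt$; optimizing over $\kappa > 1$ gives $\kappa = 2$ and the bound $\alg \le 4\rho\,\opt$, which is \Cref{thm:rates-algo}. I would state that the feasibility argument above is the content of \Cref{lemma:chain-robin-dual-feasible} and that this final weak-duality calculation, already sketched before the lemma statements, completes the theorem.
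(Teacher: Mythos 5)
Your proposal follows essentially the same dual-fitting route as the paper: reduce constraint~\eqref{constr:dual} to \eqref{eq:dual-red1} via $\dualSa_j - tw_j \le \sum_{s\ge t}\dualSa_{j,s}$, show the $\dualSc$-terms cancel for times before $j$ is available and collapse to $w(S(j))$ once it is (using disjointness of $S(j')$ over the direct successors of $j$ in an out-forest), then use the rate hypothesis together with monotonicity of $W$ to bound $\sum_s w(S(j)) \le \rho\, p_j\, W(t)$ and close with $\alpha \ge \kappa\rho$. One slip to fix before writing it up: you initially state that in an out-forest $j$ has \emph{at most one outgoing} edge and possibly several incoming ones, which is backwards (out-forests have at most one \emph{incoming} edge, i.e., a unique parent and possibly many children); you correct this implicitly later when invoking disjointness over multiple direct successors, but the opening sentence as written contradicts the rest of the argument.
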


\begin{proof}
  Since our defined variables are non-negative by definition, it suffices to show that this assignment satisfies~\eqref{constr:dual}.
  Fix a job~$j$ and a time~$t \geq 0$. 
  By observing that~$\dualSa_j - t \cdot w_j \leq \sum_{s \geq t} \dualSa_{j,s}$, it suffices to verify
  \begin{equation}
    \sum_{s \geq t} \left(\dualSa_{j,s} +  \sum_{(j, j') \in E} \dualSc_{s,j \to j'} - \sum_{(j', j) \in E} \dualSc_{s,j' \to j} \right) \leq \optspeed \dualSb_{t} p_j. \label{eq:dual-red1}
\end{equation}

To this end, we consider the terms of the left side for all times~$s \geq t$ separately. For any~$s$ with~$s > C_j$, the left side of \eqref{eq:dual-red1} is zero, because~$\dualSa_{j,s} = 0$ and~$j \notin U_s$. 

Otherwise, if~$s \leq C_j$, let~$t_j^*$ be the first point in time after~$t$ when~$j$ is available, and let~$s \in [0, t_j^*)$. 
Then,~$j \in U_s$, and since each vertex in an out-forest has at most one direct predecessor, there must be a unique job~$j_1 \in U_s$ with~$(j_1, j) \in E$. Thus,~$\dualSc_{s,j_1 \to j} = w(S(j))$ and~$\dualSc_{s,j \to j'} = w(S(j'))$ for all~$(j, j') \in E$.
Observe that in out-forests, we have~$S(j') \cap S(j'') = \emptyset$ for all~$j' \neq j''$ with~$(j,j'),(j,j'') \in E$. This implies~$\sum_{(j,j') \in E} \dualSc_{s,j \to j'} = w(S(j)) - w_j$ and~$\sum_{(j,j') \in E} \dualSc_{s,j \to j'} - \sum_{(j_1, j) \in E} \dualSc_{s,j_1 \to j} = - w_j$. Hence,
  \[
    \dualSa_{j,s} + \sum_{(j, j') \in E} \dualSc_{s,j \to j'} - \sum_{(j', j) \in E} \dualSc_{s,j' \to j} \leq w_j - w_j = 0. 
  \]

Therefore, proving \eqref{eq:dual-red1} reduces to proving
\begin{equation}
    \sum_{s = t^*_j}^{C_j} \left(w_j + \sum_{(j, j') \in E} \dualSc_{s,j \to j'} - \sum_{(j', j) \in E} \dualSc_{s,j' \to j} \right) \leq \optspeed \dualSb_{t} p_j. \label{eq:dual-red2}
\end{equation}

Now, let~$s \in [t_j^*, C_j)$. There cannot be an unfinished job preceding~$j$, thus~$\sum_{(j', j) \in E} \dualSc_{s,j' \to j} = 0$. Observe that if there is a job~$j' \in U_s$ with~$(j, j') \in E$, the fact that~$j \in U_s$ implies~$j' \in U_s$, and thus~$\dualSc_{s,j \to j'} = w(S(j'))$ by definition. Using again the fact that the sets~$S(j')$ are pairwise disjoint for all direct successors~$j'$ of~$j$, i.e., for all~$(j,j') \in E$, this yields~$\sum_{(j, j') \in E} \dualSc_{s,j \to j'} = w(S(j)) - w_j$, and further gives
\[ 
	w_j + \sum_{(j, j') \in E} \dualSc_{s,j \to j'} - \sum_{(j', j) \in E} \dualSc_{s,j' \to j} = w(S(j)).
\]

Thus, the left side of~\eqref{eq:dual-red2} is equal to~$\sum_{s = t^*_j}^{C_j} w(S(j))$.

The facts that~$W(t_1) \geq W(t_2)$ at any~$t_1 \leq t_2$ and that~$j$ is processed by~$R_{j}^{t'}$ units at any time~$t' \in [t_j^*, C_j]$  combined with the assumption $\frac{w(S(j))}{W(t)} \leq \rho \cdot R_j^t$ imply the following:
\begin{equation*}
    \sum_{s = t^*_j}^{C_j} \frac{w(S(j))}{W(t)} \leq \sum_{s = t^*_j}^{C_j} \frac{w(S(j))}{W(s)} \leq \sum_{s = t^*_j}^{C_j} \rho \cdot R_j^s \leq \rho \cdot p_j.
\end{equation*}
Rearranging it, using the definition of~$\dualSb_t$ and~$\optspeed \geq \kappa \rho$ gives
  \[
    \sum_{s = t^*_j}^{C_j} w(S(j)) \leq \rho \cdot  p_j \cdot W(t) = \rho \kappa \cdot p_j \cdot \dualSb_t \leq \optspeed \cdot p_j \cdot \dualSb_t,
  \]
  which implies \eqref{eq:dual-red2} and thus proves the statement.
\end{proof}

\begin{proof}[Proof of \Cref{thm:rates-algo}]
We set~$\optspeed = \rho \kappa$. Weak LP duality, \Cref{lemma:chain-robin-dual-feasible}, and \Cref{lemma:chain-robin-dual-objective} imply
\begin{equation*} 
    {\rho \kappa} \cdot \opt \geq \opt_{\rho \kappa} \geq \sum_{j} \dualSa_j - \sum_{t} \dualSb_{t} = \left( 1 - \frac{1}{\kappa} \right) \cdot \alg.
\end{equation*}

Choosing~$\kappa = 2$, we conclude that~$\alg \leq 4 \rho \cdot \opt$.
\end{proof}

\subsection{Learning-Augmented Algorithms}
\label{sec:chains-error}

In this section, we extend the algorithms presented in \Cref{sec:chains-consistent} to achieve a smooth error-dependency in the case of inaccurate predictions, while preserving constant consistency.
Further, we use the time-sharing technique (cf. \Cref{sec:time-sharing}) to ensure a robustness of $\bigO(\numc)$.

\paragraph{Static Weight Predictions for Chains}

Here, the main challenges are as follows: we only have access to the potentially wrong predictions~$\hW_c$ on the total chain weight for all~$c \in \chains$ and, therefore, we execute~\Cref{alg:chain-robin} using~$\hW_c$ instead of~$W_c$. In particular, the weight of a chain~$c$ might be \emph{underpredicted},~$\hW_c < W_c$, or \emph{overpredicted},~$\hW_c > W_c$. This means that~$\sum_c \hW_c$ may not be the accurate total weight of the instance and that the recomputation of~$W_c(t)$ in Line~$5$ may be inaccurate. In \Cref{app:chains-error-dep}, we show how to encode the error due to underpredicted chains in an instance~$\chains_u$ and the error due to overpredicted chains in an instance~$\chains_o$, similar to an error-dependency proposed in~\cite{BamasMS20} for online set cover. We prove the following result:

\begin{restatable}{theorem}{ThmStaticWeightsError}
	\label{thm:ThmStaticWeightsError}
	For minimizing the total weighted completion time of jobs with online chain precedence constraints on a single machine, 
	there is a non-clairvoyant algorithm with predicted chain weights with a competitive ratio of at most
	\[
		\mathcal{O}(1) \cdot \min\left\{ 1 + \frac{\opt(\chains_o) + \numc \cdot \opt(\chains_u)}{\opt}, \numc \right\}.
	\] 
\end{restatable}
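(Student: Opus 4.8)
The plan is to run \Cref{alg:chain-robin} with the predicted weights $\hW_c$ in place of $W_c$, and to compare it against \opt via a dual-fitting argument analogous to the proof of \Cref{thm:rates-algo}, but paying for the discrepancy between predicted and true chain weights. The first step is to set up bookkeeping: for each chain $c$, let $\hW_c(t)$ be the algorithm's tracked weight (initial $\hW_c$ minus the true weights of completed jobs of $c$), so the rate is $R_j^t = \hW_c(t)/\sum_{c'} \hW_{c'}(t)$ for the front job $j$ of chain $c$. The key quantities to control are (i) the gap $\hW_c(t) - W_c(t)$, which for an \emph{overpredicted} chain stays nonnegative and bounded by $(\hW_c - W_c)_+$, and for an \emph{underpredicted} chain is negative, causing the algorithm to ``lose'' weight it should still be charging; and (ii) the total tracked weight $\hW(t) = \sum_c \hW_c(t)$ versus the true $W(t)$. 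I would define $\chains_o$ as the instance consisting only of the overpredicted chains with each job's weight replaced suitably to reflect the surplus $(\hW_c - W_c)_+$ (charged as if it were real extra weight the algorithm drags along), and $\chains_u$ as the instance of underpredicted chains with the deficit $(W_c - \hW_c)_+$ as weight; then $\opt(\chains_o)$ and $\opt(\chains_u)$ are the benchmark costs for these two ``error instances.''

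The core of the argument is to re-run the dual-fitting of \Cref{lemma:chain-robin-dual-feasible} with a perturbed dual. For the jobs that are ``well-predicted enough,'' the rate condition $w(S(j)) \le \rho\, R_j^t\, W(t)$ of \Cref{thm:rates-algo} holds with $\rho = O(1)$, because overprediction only \emph{increases} the denominator and numerator in a controlled way while underprediction is handled separately. Concretely: for an overpredicted chain, $R_j^t = \hW_c(t)/\hW(t) \ge$ (roughly) $W_c(t)/\hW(t)$, and since $\hW(t) \le W(t) + \sum_{c'} (\hW_{c'}-W_{c'})_+$, we can bound the competitive contribution by a constant times $\opt + \opt(\chains_o)$ using the same weak-duality chain $\rho\kappa \cdot \opt(\text{combined}) \ge (1-1/\kappa)\cdot\alg$. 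For underpredicted chains, the algorithm may starve such a chain (if $\hW_c(t)$ hits zero while real jobs remain, it is processed ``in arbitrary order in the end''), and here we lose up to a factor $\numc$ on the residual deficit weight, which is exactly $\opt(\chains_u)$ scaled by $\numc$ — this is where the asymmetric $\numc \cdot \opt(\chains_u)$ term in the bound originates. Summing the two error contributions and invoking \Cref{thm:rates-algo}'s machinery with the perturbed duals yields the $\mathcal{O}(1)\cdot(1 + (\opt(\chains_o) + \numc\cdot\opt(\chains_u))/\opt)$ bound.

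Finally, robustness: apply \Cref{thm:alg-combination} to combine this algorithm with the $\numc$-competitive algorithm of \Cref{lemma:chain-robust}, which replaces the error-dependent ratio by $\mathcal{O}(1)\cdot\min\{\cdot, \numc\}$ at the cost of a factor $2$, giving the stated bound exactly.

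The main obstacle I expect is step two: correctly isolating the cost of underprediction as $\numc \cdot \opt(\chains_u)$ rather than something larger. The subtlety is that an underpredicted chain doesn't just delay itself — by vanishing early from the rate computation it \emph{also} changes the rates given to all other chains, so the ``arbitrary order in the end'' fix interacts globally with the schedule. The plan is to argue that the residual jobs of underpredicted chains, once the predicted budget is exhausted, can be analyzed against the $\numc$-robust round-robin benchmark applied only to those residual weights, and that the perturbation to the other chains' rates is absorbed by the constant $\rho$ because the \emph{true} weight those chains carry is unchanged — only the denominator shrank, which only helps the rate condition. Making this decomposition precise, and checking that the perturbed dual variables $\dualSb_t$ (now based on $\hW(t)$ plus correction terms) remain feasible in \eqref{constr:dual}, is the delicate part; the overprediction side is comparatively routine since it only inflates quantities monotonically.
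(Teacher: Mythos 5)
Your high-level decomposition (charge overprediction surplus to an instance $\chains_o$, charge underprediction deficit to $\chains_u$, and finish with time-sharing) matches the paper's strategy, but the technical route you propose has two concrete gaps that would require nontrivial repair.

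First, the perturbed dual-fitting as sketched does not close. You want to re-use \Cref{thm:rates-algo} by verifying $w(S(j))/W(t) \le \rho\, R_j^t$ for a constant $\rho$, with $R_j^t = \hW_c(t)/\hW(t)$. For a correctly- or over-predicted chain this gives $\rho \ge \hW(t)/W(t)$, which is $1 + (\text{total remaining surplus})/W(t)$ and is \emph{not} $O(1)$ in general — the surplus of the overpredicted chains can dwarf the true remaining weight. You gesture at absorbing this into a ``combined'' benchmark $\opt + \opt(\chains_o)$, but that requires redoing the weak-duality argument against a different LP (the LP of the augmented instance), not merely plugging a non-constant $\rho$ into \Cref{thm:rates-algo}. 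The paper sidesteps this entirely by observing that running \Cref{alg:chain-robin} with the predicted values is \emph{exactly} the correct-prediction algorithm on an explicitly constructed instance $\hchains$ (where extra artificial weight is appended to the last job of each overpredicted chain), so \Cref{theorem:chain-robin} applies verbatim with $\rho=1$ against $\opt(\hchains)$; one then bounds $\opt(\hchains) \le O(1)(\opt(\chains_p) + \opt(\chains_o))$ with $\opt(\chains_p)\le\opt$. No perturbed duals are needed.

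Second, and more seriously, you treat time-sharing with round-robin as a \emph{final} robustness wrapper (``Finally, robustness: apply \Cref{thm:alg-combination}\ldots''), but that is not where the $\numc\cdot\opt(\chains_u)$ factor comes from, and it cannot come from there. If \Cref{alg:chain-robin} runs alone with underpredicted weights, a chain whose tracked weight $\hW_c(t)$ reaches $0$ has its remaining jobs shunted to ``arbitrary order at the end,'' after everything else — their delay is governed by the total processing time $P$, which can vastly exceed $\numc$. The paper's algorithm (\Cref{alg:chain-robin-robust}) time-shares with the $\numc$-competitive round-robin of \Cref{lemma:chain-robust} \emph{from time $0$}, so that even a ``starved'' residual front job always receives rate at least $\tfrac{1}{2\numc}$; this is what bounds the residuals' delay by a factor $2\numc$ relative to $\opt(\chains_u)$. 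You would need to make the round-robin component part of the algorithm whose error-dependent bound you are proving, not a subsequent combination step. Your own worry at the end (``the subtlety is that an underpredicted chain\ldots by vanishing early\ldots also changes the rates given to all other chains'') is actually the easier part — vanishing from the denominator only \emph{increases} other chains' rates, which helps the rate condition — whereas the unbounded delay of the starved residuals is the real obstacle that your current plan does not address.
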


\paragraph{Adaptive Weight Predictions for Out-Forests}
To capture the quality of an adaptive prediction, we intuitively need to measure its quality over the whole execution. To this end, we use the maximal distortion factor of the weight predictions of every possible front job, which in fact can be any job in $J$. We prove in \Cref{app:adaptive-weight-out-forests}:

\begin{restatable}{theorem}{thmAdaptiveWeightsError}
	For minimizing the total weighted completion time on a single machine with online out-forest precedence constraint and adaptive weight predictions, there is a non-clairvoyant algorithm with a competitive ratio of at most
	\[
		\mathcal{O}(1) \cdot \min\left\{\max_{v \in J} \frac{\hW_{v}}{w(S(v))} \cdot \max_{v \in J} \frac{w(S(v))}{\hW_{v}}, \omega \right\}.
	\]
\end{restatable}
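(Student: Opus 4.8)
The plan is to run the natural adaptive analogue of \Cref{alg:chain-robin}: at every time~$t$ process each front job~$j \in F_t$ at rate~$R_j^t = \hW_j / \sum_{j' \in F_t} \hW_{j'}$, where~$\hW_j$ is the adaptive prediction received when~$j$ became visible. The whole point of \Cref{thm:rates-algo} is that we only need to verify the rate condition~$w(S(j)) \leq \rho \cdot R_j^t \cdot W(t)$ for a suitable~$\rho$. First I would set~$\mu = \max_{v \in J} \hW_v / w(S(v))$ and~$\nu = \max_{v \in J} w(S(v)) / \hW_v$, so that~$\hW_v / \mu \leq w(S(v)) \leq \nu \cdot \hW_v$ for every~$v$. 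Then, fixing a time~$t$ and~$j \in F_t$, I would bound
\[
  R_j^t \cdot W(t) = \frac{\hW_j}{\sum_{j' \in F_t} \hW_{j'}} \cdot W(t) \geq \frac{\hW_j}{\mu \cdot \sum_{j' \in F_t} w(S(j'))} \cdot W(t),
\]
where in the denominator I used~$\hW_{j'} \leq \mu \cdot w(S(j'))$. Since~$U_t = \bigcup_{v \in F_t} S(v)$ is a \emph{disjoint} union in an out-forest (distinct front jobs have disjoint successor sets), we have~$\sum_{j' \in F_t} w(S(j')) = W(t)$, so the two occurrences of~$W(t)$ cancel and the bound becomes~$R_j^t \cdot W(t) \geq \hW_j / \mu \geq w(S(j)) / (\mu\nu)$, using~$\hW_j \geq w(S(j)) / \nu$. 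Hence the rate condition holds with~$\rho = \mu\nu = \max_v \tfrac{\hW_v}{w(S(v))} \cdot \max_v \tfrac{w(S(v))}{\hW_v}$, and \Cref{thm:rates-algo} gives a~$4\mu\nu$-competitive algorithm.

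For robustness, I would combine this algorithm with the~$\numc$-competitive algorithm of \Cref{lemma:chain-robust} via the time-sharing technique of \Cref{thm:alg-combination}, which yields a competitive ratio of at most~$2 \cdot \min\{4\mu\nu, \numc\}$; absorbing the constant into the~$\mathcal{O}(1)$ prefactor gives exactly the claimed bound~$\mathcal{O}(1) \cdot \min\{\mu\nu, \omega\}$. I would also note that for correct predictions,~$\mu = \nu = 1$, recovering the consistency guarantee of \Cref{theorem:out-forests-rr} up to the factor~$2$ from time-sharing.

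The only genuinely delicate point is the disjointness identity~$\sum_{j' \in F_t} w(S(j')) = W(t)$, which is exactly what makes the two~$W(t)$ factors cancel and keeps the error bound multiplicative in the per-job distortions rather than, say, a sum over front jobs; this is where the out-forest structure is essential and would fail for general DAGs where successor sets of front jobs overlap. Everything else is the routine substitution of~$\hW_j \in [w(S(j))/\nu,\ \mu\, w(S(j))]$ into the rate expression, plus invoking the two black-box results already proved in the excerpt. I do not expect any monotonicity issue, since \Cref{thm:alg-combination} is stated without a monotonicity hypothesis.
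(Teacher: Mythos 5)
Your proposal is correct and follows essentially the same route as the paper's proof in Appendix~B.2: you run the same algorithm (rates proportional to the adaptive predictions~$\hW_j$), verify the rate condition of \Cref{thm:rates-algo} by exploiting the disjointness identity~$\sum_{j' \in F_t} w(S(j')) = W(t)$ for out-forests, obtain~$\rho = \mu\nu$, and then time-share against the~$\numc$-competitive algorithm of \Cref{lemma:chain-robust} via \Cref{thm:alg-combination}. The paper's calculation is phrased as~$\frac{w(S(j))}{W(t)} \le \eta\, R_j^t$ whereas yours is the rearranged~$R_j^t\, W(t) \ge w(S(j))/(\mu\nu)$, but these are the same inequality, and your remark about monotonicity being dispensable is also correct and matches the paper's Appendix~\ref{app:monotonicity}.
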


\section{Weight Order Predictions}

\label{sec:weight-order}
We consider static and adaptive weight order predictions. As strong lower bounds hold for in-trees, even for the more powerful adaptive weight predictions (cf.~\Cref{lb:in-trees}), we focus on chains and out-forest precedence constraints.

Further, we introduce an error measure for wrongly predicted orders.
A natural function on orders is the \emph{largest inversion}, i.e., the maximum distance between the position of a front job in an order prediction~$\hpreceq_t$ and the true order~$\preceq_t$. 
However, if all out-trees have almost the same weight, just perturbed by some small constant, this function indicates a large error for the reverse order, although it will arguably perform nearly as good as the true order.
To mitigate this overestimation, we first introduce~$\epsilon$-approximate inversions. Formally, for every precision constant~$\epsilon > 0$, we define
\[
    \largestinv(\epsilon) = \max_{t, j \in F_t} \left| \left\{ i \in F_t \bigg| \frac{w(S(j))}{1 + \epsilon} \geq w(S(i)) \land i \hpreceq_t j \right\} \right|.
\]
Note that~$\largestinv(\epsilon) \geq 1$ for every~$\epsilon > 0$, because~$\hpreceq_t$ is reflexive.
We define the \emph{$\epsilon$-approximate largest inversion} error as~\(
    \max\{1 + \epsilon, \largestinv(\epsilon)\}.
\)
We show performance guarantees depending on this error which hold for any~$\epsilon > 0$.
Therefore, we intuitively get a pareto frontier between the precision~$(1+\epsilon)$ and~$\largestinv(\epsilon)$, the largest distance of inversions which are worse than the precision.
A configurable error with such properties has been applied to other learning-augmented algorithms~\cite{AzarPT22,Bernardini22universal}.

\subsection{Adaptive Weight Order}
We introduce Algorithm~\ref{alg:weight-order}, which exploits access to the adaptive order~$\hpreceq_t$.  
In a sense, the idea of the algorithm is to emulate Algorithm~\ref{alg:chain-robin} for weight predictions. Instead of having access to the total remaining weight of every out-tree to computing rates, Algorithm~\ref{alg:weight-order} uses~$\hpreceq_t$ to approximate the rates. For every~ front job $j \in F_t$, let~$i_j$ be the position of $j$ in~$\hpreceq_t$.
Recall that~$H_k$ denotes the~$k$th harmonic number.

\begin{algorithm}[tb]
	\caption{Adaptive weight order algorithm}\label{alg:weight-order}
	\begin{algorithmic}[1]
		\REQUIRE Time~$t$, front jobs $F_t$, adaptive order~$\hpreceq_t$.
		\STATE Process every $j \in F_t$ with rate~$(H_{\abs{F_t}} \cdot i_j)^{-1}$, where $i_j$ is the position of $j$ in~$\hpreceq_t$.
	\end{algorithmic}
\end{algorithm}

\begin{theorem}\label{thm:adaptive-weight-order}
	For any $\epsilon > 0$, \Cref{alg:weight-order} has a competitive ratio of at most~$4 H_\numc \cdot \max\{1 + \epsilon, \largestinv(\epsilon)\}$ for minimizing the total weighted completion time on a single machine with online out-forest precedence constraints. \end{theorem}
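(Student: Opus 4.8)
The plan is to invoke \Cref{thm:rates-algo} with an appropriate value of $\rho$, so the task reduces to verifying that Algorithm~\ref{alg:weight-order} satisfies the rate condition $w(S(j)) \le \rho \cdot R_j^t \cdot W(t)$ at every time $t$ and every front job $j \in F_t$, with $\rho = H_\numc \cdot \max\{1+\epsilon, \largestinv(\epsilon)\}$. Combining the fact that $\abs{F_t} \le \numc$ (since front jobs form an anti-chain) with $H_{\abs{F_t}} \le H_\numc$, and that a $4\rho$-competitive guarantee follows, gives exactly the claimed bound $4 H_\numc \cdot \max\{1+\epsilon, \largestinv(\epsilon)\}$.

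First I would fix a time $t$ and a front job $j \in F_t$ at position $i_j$ in $\hpreceq_t$, so that $R_j^t = (H_{\abs{F_t}} \cdot i_j)^{-1}$. The rate condition to verify becomes
\[
	w(S(j)) \;\le\; \rho \cdot \frac{1}{H_{\abs{F_t}} \cdot i_j} \cdot W(t),
\]
equivalently $i_j \cdot H_{\abs{F_t}} \cdot w(S(j)) \le \rho \cdot W(t)$. The key combinatorial step is to lower bound $W(t) = \sum_{v \in F_t} w(S(v))$ by a suitable multiple of $w(S(j))$. Set $\epsilon$-thresholded counting in motion: by definition of $\largestinv(\epsilon)$, the number of front jobs $i$ with $i \hpreceq_t j$ and $w(S(i)) \le w(S(j))/(1+\epsilon)$ is at most $\largestinv(\epsilon)$. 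Since there are $i_j$ jobs $i$ with $i \hpreceq_t j$ in total, at least $i_j - \largestinv(\epsilon)$ of them satisfy $w(S(i)) > w(S(j))/(1+\epsilon)$. Each out-tree $S(v)$ for distinct front jobs $v$ is disjoint, so these contribute at least $(i_j - \largestinv(\epsilon)) \cdot \frac{w(S(j))}{1+\epsilon}$ to $W(t)$, and hence
\[
	W(t) \;\ge\; \max\left\{1,\, i_j - \largestinv(\epsilon)\right\} \cdot \frac{w(S(j))}{1+\epsilon},
\]
where the $\max$ with $1$ comes simply from $j$ itself lying in $F_t$ and $w(S(j)) \ge w(S(j))/(1+\epsilon)$.

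It then remains a short arithmetic check that $i_j \cdot H_{\abs{F_t}} \cdot w(S(j)) \le \rho \cdot W(t)$ follows from this bound with $\rho = H_\numc \cdot \max\{1+\epsilon, \largestinv(\epsilon)\}$. Using $W(t) \ge \frac{w(S(j))}{1+\epsilon} \max\{1, i_j - \largestinv(\epsilon)\}$, it suffices to show $i_j (1+\epsilon) \le \max\{1+\epsilon, \largestinv(\epsilon)\} \cdot \max\{1, i_j - \largestinv(\epsilon)\}$; I would split into the case $i_j \le 2\largestinv(\epsilon)$ (where $i_j (1+\epsilon) \le 2\largestinv(\epsilon)(1+\epsilon)$ and one checks this is dominated, perhaps needing to absorb a factor of $2$ into the $\mathcal{O}(1)$ — more carefully, using $\max\{1,i_j-\largestinv(\epsilon)\}\ge i_j/2$ when $i_j \ge 2\largestinv(\epsilon)$ and $\largestinv(\epsilon)\ge i_j/2$ otherwise) and the complementary case. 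The main obstacle I anticipate is precisely pinning down these constants: the naive bound gives a factor like $2H_\numc\max\{\dots\}$ rather than exactly $4H_\numc\max\{\dots\}$ after the $4\rho$ from \Cref{thm:rates-algo}, so I would need to be careful whether the claimed constant $4$ absorbs the case analysis cleanly or whether a slightly sharper counting argument (e.g. keeping $j$ itself plus the $i_j - \largestinv(\epsilon)$ heavy predecessors and noting $\max\{1, i_j-\largestinv(\epsilon)\} + \text{(lighter ones suitably bounded)}$) is needed to land exactly on $\rho = H_\numc \cdot \max\{1+\epsilon,\largestinv(\epsilon)\}$.
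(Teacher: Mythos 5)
Your overall plan mirrors the paper's proof exactly: apply \Cref{thm:rates-algo} with $\rho = H_\numc \cdot \max\{1+\epsilon, \largestinv(\epsilon)\}$, exploit that the sets $S(v)$ of distinct front jobs are disjoint in an out-forest, and split the positions $1,\ldots,i_j$ of $\hpreceq_t$ according to whether their weight exceeds the threshold $w(S(j))/(1+\epsilon)$, using the definition of $\largestinv(\epsilon)$ to bound the number of light ones. (The paper phrases the rate inequality as a proof by contradiction, but that difference is purely stylistic.) The concern you flagged about the constants, however, is a real gap, not something absorbed by bookkeeping: from $W(t) \ge \max\{1,\, i_j - \largestinv(\epsilon)\}\cdot\frac{w(S(j))}{1+\epsilon}$ you need $i_j(1+\epsilon) \le \max\{1+\epsilon,\largestinv(\epsilon)\}\cdot\max\{1, i_j-\largestinv(\epsilon)\}$, and at $i_j = \largestinv(\epsilon)$ with $\largestinv(\epsilon) > 1+\epsilon$ this reads $\largestinv(\epsilon)(1+\epsilon) \le \largestinv(\epsilon)$, which is false. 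The problem is that you scaled \emph{every} kept term, including $j$ itself, by $1/(1+\epsilon)$, and then discarded the light positions entirely.

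The paper closes the gap by keeping both groups. Write $K_>$ for the positions $k \le i_j$ with $w(S(i_k)) > \frac{w(S(j))}{1+\epsilon}$ and $K_\le$ for the remaining ones together with $i_j$, so that $|K_>| + |K_\le| = i_j$ and $|K_\le| \le \largestinv(\epsilon)$. Each term in $K_>$ is at least $\frac{w(S(j))}{1+\epsilon}$; and since $j$ itself lies in $K_\le$, the whole block satisfies $\sum_{k\in K_\le} w(S(i_k)) \ge w(S(j)) \ge \frac{|K_\le|}{\largestinv(\epsilon)}\,w(S(j))$. Together with disjointness this gives
\[
	W(t) \;\ge\; \frac{|K_>|}{1+\epsilon}\,w(S(j)) + \frac{|K_\le|}{\largestinv(\epsilon)}\,w(S(j)) \;\ge\; \frac{|K_>|+|K_\le|}{\max\{1+\epsilon,\largestinv(\epsilon)\}}\,w(S(j)) \;=\; \frac{i_j}{\max\{1+\epsilon,\largestinv(\epsilon)\}}\,w(S(j)),
\]
which is precisely $i_j\cdot w(S(j)) \le \max\{1+\epsilon,\largestinv(\epsilon)\}\cdot W(t)$, i.e.\ the hypothesis of \Cref{thm:rates-algo} with $\rho = H_\numc\cdot\max\{1+\epsilon,\largestinv(\epsilon)\}$, with no slack. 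In your direct phrasing, the fix is to replace your lower bound by $W(t) \ge w(S(j)) + (i_j - \largestinv(\epsilon))_+\cdot\frac{w(S(j))}{1+\epsilon}$; then both cases $i_j\le\largestinv(\epsilon)$ (use $\max\{1+\epsilon,\largestinv(\epsilon)\}\ge\largestinv(\epsilon)\ge i_j$) and $i_j>\largestinv(\epsilon)$ (use the max once per summand) go through cleanly. The remaining steps you outlined — feasibility of the rates, $H_{\abs{F_t}}\le H_\numc$, and the invocation of \Cref{thm:rates-algo} — are correct.
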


\begin{proof}
	We first observe that the rates of the algorithm are feasible, because~$\sum_{j \in F_t} \frac{1}{H_{\abs{F_t}} \cdot i_j} = \frac{H_{\abs{F_t}}}{H_{\abs{F_t}}} = 1$. 

	Fix a time~$t$ and an $\epsilon > 0$. Assume that~$j_1 \hpreceq_t \ldots \hpreceq_t j_{\abs{F_t}}$, and fix a front job~$j_i \in F_t$.
	The algorithm processes~$j_i$ at time~$t$ with rate 
	\(
		R_{j_i}^t = (H_{\abs{F_t}} \cdot i)^{-1} \geq (H_{\numc} \cdot i)^{-1}.
	\) 
	Note that showing~$\frac{w(S(j_i))}{W(t)} \le H_\numc \cdot \max\{1 + \epsilon, \largestinv(\epsilon)\} \cdot L^t_{j_i}$ implies the theorem via~\Cref{thm:rates-algo}.
	Assume otherwise, i.e.,~$\frac{w(S(j_i))}{W(t)} > \frac{1}{i} \cdot \max\{1 + \epsilon, \largestinv(\epsilon)\}$. 
	For the sake of readability, we define $K_{>} = \{k \in [i-1] \mid w(S(i_k)) > \frac{w(S(j_i))}{1+\epsilon} \}$ and $K_{\leq} = \{k \in [i] \mid w(S(i_k)) \leq \frac{w(S(j_i))}{1+\epsilon} \}$.
    Since in an out-forest the sets $S(j)$ are pairwise disjoint for all front jobs $j \in F_t$,
    \begin{align*}     
        1 \ge \sum_{k \in [i]} \frac{w(S(i_k))}{W(t)} 
        \ge \sum_{k \in K_{>}} \frac{w(S(i_k))}{W(t)} + \sum_{k \in K_{\leq}} \frac{w(S(i_k))}{W(t)}.
    \end{align*}
    Consider the second sum. First, observe that this sum has at most~$\largestinv(\epsilon)$ many terms, including the one for $j_i$, and that each such term is at most $\frac{w(S(j_i))}{W(t)}$.
    Then, observe that every term in the first sum is at least~$\frac{w(S(j_i))}{(1+\epsilon)W(t)}$.
    Thus, we can further lower bound the sum of the two sums by 
    \begin{align*}     
        &\frac{1}{1 + \epsilon}\sum_{k \in K_{>}} \frac{w(S(j_i))}{W(t)} + \frac{1}{\largestinv(\epsilon)}\sum_{k \in K_{\leq}} \frac{w(S(j_i))}{W(t)} \\
        &\geq\frac{1}{\max\{1 + \epsilon, \largestinv(\epsilon)\}} \sum_{k \in [i]} \frac{w(S(j_i))}{W(t)} > \sum_{k=1}^i \frac{1}{i} = 1.
    \end{align*}
	This is a contradiction.
\end{proof}
Using this theorem, we conclude the following corollary.

\begin{corollary}
There exists a non-clairvoyant weight-oblivious algorithm for the problem of minimizing the total weighted completion time of~$n$ jobs on a single machine with a competitive ratio of at most~$\bigO(\log n)$ when given access to the order of the job's weights.
\end{corollary}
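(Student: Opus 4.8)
The plan is to derive the corollary directly from \Cref{thm:adaptive-weight-order} by observing that ``the order of the job's weights'' is nothing but an adaptive weight order prediction in the special case where the precedence graph has no edges. When $G$ has no precedence constraints, every job is a front job from time $0$ onward, so $S(v) = \{v\}$ and $w(S(v)) = w_v$ for every job $v$. Hence sorting front jobs by non-increasing $w(S(v))$ is exactly sorting them by non-increasing weight $w_v$, which is precisely the information the corollary assumes we are given. Feeding this order as $\hpreceq_t$ to \Cref{alg:weight-order} therefore yields a \emph{correct} adaptive weight order prediction.

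With correct predictions, the $\epsilon$-approximate largest inversion is controlled: for any $\epsilon > 0$, whenever $i \hpreceq_t j$ and $\frac{w(S(j))}{1+\epsilon} \geq w(S(i))$, the true order already ranks $j$ before $i$ (since $w(S(j)) > w(S(i))$), so these are genuine ``inversions'' only in the degenerate sense forced by reflexivity; more carefully, the set counted in $\largestinv(\epsilon)$ consists of jobs $i$ with $i \hpreceq_t j$ and $w(S(i)) \leq \frac{w(S(j))}{1+\epsilon}$, but since the prediction is the true order, $i \hpreceq_t j$ forces $w(S(i)) \geq w(S(j)) > \frac{w(S(j))}{1+\epsilon}$, a contradiction unless $i = j$. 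Thus $\largestinv(\epsilon) = 1$ for every $\epsilon > 0$, and the error $\max\{1+\epsilon, \largestinv(\epsilon)\} = 1 + \epsilon$. Taking $\epsilon \to 0$ in \Cref{thm:adaptive-weight-order} gives a competitive ratio of at most $4 H_\numc \cdot (1+\epsilon)$ for every $\epsilon > 0$, hence a ratio of at most $4 H_\numc$.

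Finally I would bound $H_\numc$ by $\bigO(\log n)$. Since the precedence graph is empty, its width $\numc$ equals $n$ (all $n$ jobs form one anti-chain), so $H_\numc = H_n = \bigO(\log n)$. Combining, the algorithm is $\bigO(\log n)$-competitive for minimizing total weighted completion time on a single machine given only the order of job weights, and it is non-clairvoyant and weight-oblivious because \Cref{alg:weight-order} uses only the positions $i_j$ in $\hpreceq_t$ and never the actual processing times or weight values.

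The only mild subtlety — not really an obstacle — is checking that the classical non-clairvoyant setting without precedence constraints is literally the special case of our online-precedence model with $E = \emptyset$, so that \Cref{thm:rates-algo} (and hence \Cref{thm:adaptive-weight-order}, whose proof invokes it) applies verbatim: one must confirm that $F_t$ then equals the set of all unfinished jobs and that the disjointness of the sets $S(j)$ used in the proof of \Cref{thm:adaptive-weight-order} holds trivially. Both are immediate. I would state the corollary's proof in two or three sentences making exactly these observations and then cite \Cref{thm:adaptive-weight-order}.
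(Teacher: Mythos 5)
Your proof is correct and takes essentially the same route the paper intends: the corollary is exactly \Cref{thm:adaptive-weight-order} specialized to $E=\emptyset$, where $\frontjobs_t$ is the set of all unfinished jobs, $S(v)=\{v\}$, the weight order is the order of the $w_v$, the precision bound $\max\{1+\epsilon,\largestinv(\epsilon)\}$ tends to $1$ as $\epsilon\to 0$ for the true order, $\numc=n$, and \Cref{alg:weight-order} touches only the positions $i_j$, never the $w_j$ or $p_j$. One tiny imprecision: your claim $\largestinv(\epsilon)=1$ (via reflexivity, $i=j$) actually requires $w(S(j))=0$; for a positive-weight $j$ the reflexive pair does not satisfy $\frac{w(S(j))}{1+\epsilon}\ge w(S(j))$, so $\largestinv(\epsilon)$ can in fact be $0$ for such $j$. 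This is harmless because the theorem's bound uses $\max\{1+\epsilon,\largestinv(\epsilon)\}\ge 1+\epsilon$ anyway, so the $4H_n$ conclusion stands unchanged.
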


\subsection{Static Weight Order} 

If we only have access to~$\hpreceq_0$, a natural approach would be to compute the initial rates as used in Algorithm~\ref{alg:weight-order} and just not update them. 
As \Cref{obs:lb-trees-static} rules out well-performing algorithms for out-trees, we focus on chains.
Even for chains, we show that this algorithm has a competitive ratio of at least~$\Omega(\numc \cdot H_\numc)$.

\begin{restatable}{lem}{LBStaticWO}
	\label{lem:LemLBStaticWO}
	The variant of Algorithm~\ref{alg:weight-order} that computes the rates using~$\hpreceq_0$ instead of~$\hpreceq_t$ is at least~$\Omega(\numc \cdot H_{\numc})$-competitive, even if~$\hpreceq_0$ equals $\preceq_0$.
\end{restatable}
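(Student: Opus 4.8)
The plan is to construct a family of chain instances on which the static-rates variant of \Cref{alg:weight-order} is forced to make a $\Omega(\numc \cdot H_{\numc})$ loss, even when the predicted order $\hpreceq_0$ is exactly the true order $\preceq_0$. The key phenomenon to exploit is that the static algorithm never updates the positions $i_j$, so once a chain becomes empty (or light) its slot in the harmonic weighting is "wasted": the remaining chains keep being throttled by a rate of roughly $(H_{\numc} \cdot i)^{-1}$ for their original index $i$, even though far fewer chains are actually competing for the machine. This is precisely the behavior that the adaptive algorithm avoids by recomputing $\hpreceq_t$, and it is what costs the extra $\numc$ factor.

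Concretely, I would take $\numc$ chains. Chain $1$ (the one ranked first in $\preceq_0$, hence given the largest static rate $1/H_{\numc}$) is a long chain of many unit-size, weight-$0$ jobs followed at the very end by a single job of weight $1$; call its total processing time $P$, chosen large. Each other chain $c \in \{2,\dots,\numc\}$ is a single unit-size job of tiny-but-decreasing weight, arranged so that the true weight order $\preceq_0$ puts chain $1$ first and then chains $2,\dots,\numc$ in order; crucially their weights are positive so they genuinely precede nothing and the order is well-defined and equals the prediction. The static algorithm processes the front job of chain $1$ at rate $1/H_{\numc}$ throughout, so its weight-$1$ job finishes only around time $\Theta(H_{\numc} \cdot P)$, contributing $\Theta(H_{\numc} P)$ to $\alg$; meanwhile the $\numc-1$ trivial chains finish quickly and contribute negligibly. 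The optimum, by contrast, can clear the $\numc - 1$ trivial one-job chains in the first $\numc-1$ time steps and then devote the full machine to chain $1$, finishing its weight-$1$ job by time $P + \numc - 1$, for $\opt = \Theta(P + \numc)$. Taking $P \gg \numc$ this already gives a ratio $\Omega(H_{\numc})$, and with a slightly more careful nested construction — making several chains "long with a heavy tail" at geometrically spaced ranks so that at rank-level $i$ there are $\Theta(\numc)$ chains but only $O(1)$ still alive — one boosts this to $\Omega(\numc \cdot H_{\numc})$.

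For the $\numc \cdot H_{\numc}$ bound, the cleaner variant is: use $\numc$ chains, all with the same structure, namely $P$ weight-$0$ unit jobs followed by one unit job of weight $1$, but tie-break the identical $w(S(\cdot))$ values so that $\preceq_0$ (and the prediction) is some fixed order $1,2,\dots,\numc$. The static algorithm fixes rates $1/(H_{\numc} i)$ forever. Chain $\numc$ then completes its heavy job only at time $\approx H_{\numc} \cdot \numc \cdot P$ (it is processed at rate $1/(H_{\numc}\numc)$ the whole time, and nothing speeds up for it because indices never change), so $\alg \ge \Omega(H_{\numc} \numc P)$. The optimum processes the chains one at a time: it finishes chain $k$'s heavy job at time $\approx kP$, giving $\opt = \sum_k \Theta(kP) = \Theta(\numc^2 P)$ — but more usefully, $\opt$ is also at most the WRR / round-robin value, and one can show $\opt = O(\numc^2 P)$ while $\alg = \Omega(\numc H_{\numc} \cdot \numc P) = \Omega(H_{\numc} \numc^2 P)$. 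Wait — I should be careful here; let me instead compare against a single heavy chain hidden among light ones as in the first construction, which gives the clean ratio. The main obstacle, and the step I would spend the most care on, is calibrating the weights and lengths of the auxiliary chains so that (i) the true order really does equal the prediction (so the "even if $\hpreceq_0 = \preceq_0$" clause is honestly met), (ii) the auxiliary chains contribute only lower-order terms to both $\alg$ and $\opt$, and (iii) the static algorithm's refusal to update indices is what — and the only thing that — forces the $\numc$ blow-up, so that the bound is genuinely $\Omega(\numc H_{\numc})$ and not merely $\Omega(H_{\numc})$ or $\Omega(\numc)$; assembling a single instance that simultaneously witnesses both factors (rather than two separate instances) is the delicate combinatorial part.
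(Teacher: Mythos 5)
Your high-level diagnosis of \emph{why} the static variant must lose a factor of $\numc$ is correct, but neither of your concrete constructions actually achieves the $\Omega(\numc\cdot H_\numc)$ bound, and the gap is precisely the ``delicate combinatorial part'' you flag but do not resolve. The issue is the placement of the heavy-tailed chain in the weight order. In your first construction you put the long chain with the weight-$1$ tail job \emph{first} in $\preceq_0$ (since it has the largest $w(S(\cdot))$), so it receives the \emph{best} static rate $1/H_\numc$; this only costs a $H_\numc$ slowdown, and as you note yourself that yields $\Omega(H_\numc)$, not $\Omega(\numc H_\numc)$. In your second construction all $\numc$ chains are long with a heavy tail, so while chain $\numc$ alone contributes $\Theta(\numc H_\numc P)$ to $\alg$, the total is $\alg\approx H_\numc P\sum_c c=\Theta(H_\numc\numc^2 P)$, and $\opt$ (sequential processing) is $\Theta(\numc^2 P)$, again giving only $\Omega(H_\numc)$. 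Neither instance isolates the $\numc$ factor.

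The construction you need puts the heavy-tailed chain \emph{last} in $\preceq_0$, so that it is throttled at rate $1/(H_\numc\numc)$, and makes the other $\numc-1$ chains cheap for $\opt$. The paper does this by giving \emph{all} $\numc$ chains total weight exactly $1$ (so the weight order is a free tie-break and the prediction is honestly correct), placing the unit weight at the \emph{front} of chains $1,\dots,\numc-1$ and at the \emph{very end} of chain $\numc$, and scaling chain $\numc$'s processing time by a large parameter $d$. Then $\alg$ is dominated by chain $\numc$'s completion at time $\Theta(d\,H_\numc^2\numc^2)$, whereas $\opt$ pays only $\Theta(\numc^2)$ for the front-loaded light chains plus $\Theta(d\,H_\numc\numc)$ for chain $\numc$; letting $d\to\infty$ gives ratio $\Omega(\numc H_\numc)$. (Your first construction can also be repaired by making the $\numc-1$ single-job chains have weight \emph{slightly above} $1$ rather than tiny, so that the heavy-tailed chain is demoted to the last slot; then taking $P\gg\numc^2$ yields the same asymptotics.) Without this reordering, the static algorithm is never forced to give its worst rate to the chain that matters, and you cannot produce the extra factor of $\numc$.
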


\begin{proof}
	Consider an instance with~$\numc$ chains, each with a total weight of one. Then,~$\preceq_0$ is just an arbitrary order of the chains. Recall that the algorithm starts processing the chains~$c$ with rate~$(H_\numc \cdot i_c)^{-1}$, where $i_c$ is the position of $c$ in the order. We define the first~$\numc-1$ chains to have their total weight of one at the very first job and afterwards only jobs of weight zero. Chain~$\numc$, the slowest chain, has its total weight on the last job. 
	We define the chains~$c$ to contain a total of~$d \cdot H_\numc \cdot i_c$ jobs with unit processing times, for some common integer~$d$. This means that the algorithm finishes all chains at the same time.
	The optimal solution value for this instance is~$\numc \cdot (\numc + 1) + \numc - 1 + d \cdot H_\numc \cdot \numc$, where~$\numc \cdot (\numc + 1)$ is the optimal sum of completion times for the first~$\numc-1$ chains, ~$d \cdot H_\numc \cdot \numc$ is the cost for processing the last chain, and~$\numc - 1$ is the cost for delaying the last chains by the~$\numc - 1$ time units needed to process the first jobs of the first~$\numc-1$ chains. The solution value of the algorithm is at least~$d \cdot H_\numc^2 \cdot \numc^2$ as this is the cost for just processing the last chain. Thus, for large~$d$, the competitive ratio tends to~$H_\numc \cdot \numc$.
\end{proof}

However, the lower bound instance of the lemma requires~$\numc$ to be \enquote{small} compared to the number of jobs, in case of unit jobs, or to~$P := \sum_j p_j$, otherwise.
We exploit this to prove the following theorem in \Cref{app:weight-order}.

\begin{restatable}{theorem}{ThmStaticOrder}
		For any $\epsilon > 0$, Algorithm~\ref{alg:weight-order} has a competitive ratio of at most~$\mathcal{O}(H_\numc^2 \sqrt{P} \cdot \max\{1 + \epsilon, \largestinv(\epsilon)\})$ when computing rates with~$\hpreceq_0$ instead of~$\hpreceq_t$ at any time $t$.
		For unit jobs, it is~$\mathcal{O}(H_\numc^2 \sqrt{n} \cdot \max\{1 + \epsilon, \largestinv(\epsilon)\})$-competitive.
\end{restatable}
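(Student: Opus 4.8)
The plan is to combine the error-dependent analysis of Algorithm~\ref{alg:weight-order} (i.e.\ the argument behind \Cref{thm:adaptive-weight-order}) with a separate, crude bound that controls the damage done by freezing the rates at their time-$0$ values. The starting observation is that the only place where \Cref{thm:rates-algo} is invoked in the proof of \Cref{thm:adaptive-weight-order} is to turn the pointwise inequality $w(S(j)) \le \rho\cdot R_j^t\cdot W(t)$ into a $4\rho$-competitive guarantee. With frozen rates this inequality fails in general, but we can still ask: by how much? For chains, once the front job of chain $c$ at time $0$ is $v_c$ with initial (true) chain weight $W_c$, the algorithm assigns $v_c$ the rate $(H_\numc\cdot i_c)^{-1}$, where $i_c$ is the position of $c$ under $\hpreceq_0$. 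The key point, as in the lower bound of \Cref{lem:LemLBStaticWO}, is that the ratio $\frac{w(S(j))}{R_j^t W(t)}$ can only blow up because a chain with a large \emph{current} remaining weight is being processed slowly — and such a discrepancy forces many other chains to be nearly finished, i.e.\ it costs time, which is exactly where the $\sqrt{P}$ factor will enter.

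The concrete steps I would carry out: (1) Split the schedule into phases. In each phase, relate the rate $R_j^t$ that the frozen algorithm uses to the rate the adaptive algorithm \emph{would} use; the mismatch is governed by how much the relative chain order has changed since time $0$, which is in turn bounded by the number of chains whose front job has advanced. (2) Bound the total "lost" processing time caused by slow-but-heavy chains. A heavy chain of remaining weight $W$ being run at rate $r$ accumulates cost $W\cdot C$, but if $C$ is large it is because $\Theta(rC)$ units of its own processing were done slowly while lighter chains waited; crucially the number of chains is at most $\numc$ and the total processing is $P$, so a counting/averaging argument caps the number of time units in which the pointwise inequality is violated by more than a factor $\beta$ by roughly $P/\beta$. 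Optimizing the threshold $\beta$ against the $\beta$-fold loss in the competitive ratio yields the $\sqrt{P}$ (respectively $\sqrt n$ for unit jobs, where $P = n$). (3) Fold in the $\epsilon$-approximate largest inversion: the $\max\{1+\epsilon,\largestinv(\epsilon)\}$ factor enters exactly as in \Cref{thm:adaptive-weight-order}, controlling the inversions present in $\hpreceq_0$ itself, independently of the freezing issue, so the two error sources multiply. (4) Assemble: $\mathcal{O}(H_\numc)$ from the rate-to-harmonic conversion, another $\mathcal{O}(H_\numc)$ from re-running the \Cref{thm:rates-algo}-style argument with the degraded pointwise bound, $\sqrt{P}$ from the phase/counting argument, and $\max\{1+\epsilon,\largestinv(\epsilon)\}$ from the inversions, giving $\mathcal{O}(H_\numc^2\sqrt{P}\cdot\max\{1+\epsilon,\largestinv(\epsilon)\})$.

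The main obstacle I anticipate is step (2): making the "heavy chain run slowly $\Rightarrow$ bounded number of bad time units" trade-off precise. The \df machinery of \Cref{thm:rates-algo} wants a \emph{pointwise} inequality $w(S(j)) \le \rho R_j^t W(t)$ for a \emph{uniform} $\rho$, whereas here $\rho$ is genuinely time-varying and can be as large as $\Theta(\numc)$ at isolated moments. So one either has to (a) prove a modified version of \Cref{thm:rates-algo} that tolerates a bound of the form $\sum_t (\text{local violation at }t) \le \sqrt P\cdot(\text{something})$ rather than a worst-case $\rho$, or (b) directly charge, in the dual, the few intervals where the algorithm is badly off to the processing volume $P$ via the lower-speed-machine trick (running the optimum on a machine of speed $1/\alpha$ with $\alpha \approx \sqrt P$). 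Option (b) seems cleaner: set $\alpha = \Theta(H_\numc^2\sqrt P)$ in \eqref{lp}, reuse the dual assignment $\dualSa,\dualSb,\dualSc$ from the proof of \Cref{thm:rates-algo} essentially verbatim, and check that constraint~\eqref{constr:dual} now holds because the slack $\optspeed\dualSb_t p_j$ is large enough to absorb the at most $\sqrt P$ time units where $j$ is processed too slowly relative to $w(S(j))/W(t)$. I would also need the elementary fact that the optimum cannot itself be helped by more than a constant by the chain structure beyond what the online bound already accounts for, and the bound $W(t_1)\ge W(t_2)$ for $t_1\le t_2$, both of which are already used in the excerpt.
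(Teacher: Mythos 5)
Your plan takes a fundamentally different route from the paper, and I don't believe it closes as sketched. The paper's proof of this theorem does not invoke \Cref{thm:rates-algo} or any dual-fitting at all. Instead it exploits the fact that with frozen rates $R_{c_i} = (H_\numc i)^{-1}$, chain $c_i$ is sped down by the \emph{constant} factor $H_\numc i$ throughout, so $\alg(c_i) = H_\numc\, i \cdot \opt(c_i)$ holds \emph{exactly}. It then splits into two regimes: for $i \le 3\largestinv(\epsilon)$ it charges directly to $\opt$; for larger $i$ it observes that the nested subinstance $\chains_i = \{c_1,\dots,c_i\}$ contains at least $i - \largestinv(\epsilon)$ chains each of weight $\ge W_{c_i}/(1+\epsilon)$, so $\opt \ge \opt(\chains_i) \ge \Theta(i^2/(1+\epsilon))\,W_{c_i}$, while $\opt(c_i) \le \sqrt P\,W_{c_i}$ once chains longer than $\sqrt P$ are handled separately (with a crude $\sqrt P$ loss on those tail jobs). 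Summing $\alg(c_i)/\opt = O(H_\numc\sqrt P/i)$ over $i$ yields the second $H_\numc$ factor. This is a direct combinatorial argument; dual-fitting plays no role.

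The concrete gap in your proposal is step (2) and, more specifically, option (b). You assert that there are "at most $\sqrt P$ time units where $j$ is processed too slowly relative to $w(S(j))/W(t)$," but nothing of the sort holds: a chain at order position $i$ is run at rate $(H_\numc i)^{-1}$ for its \emph{entire} lifetime, which lasts $H_\numc i \cdot p_{c_i}$ time units, and once the heavier chains have shed their weight the pointwise inequality $w(S(j)) \le \rho R_j^t W(t)$ is violated for the rest of that lifetime, not for an $O(\sqrt P)$ window. If you reuse the dual assignment $\dualSa,\dualSb,\dualSc$ from \Cref{thm:rates-algo} verbatim, feasibility of \eqref{constr:dual} forces $\optspeed \ge \kappa H_\numc i\cdot w(S(j))/W(t)$, which in the worst case (one remaining light chain at position $i=\numc$) is $\Theta(H_\numc\,\numc)$, not $\Theta(H_\numc^2\sqrt P)$. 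Those two quantities are incomparable (the latter is better when $\numc \gg \sqrt P$), so dual-fitting with the stock duals cannot prove the stated bound; you would at best recover an $O(H_\numc\,\numc)$ guarantee, which is the lower bound of \Cref{lem:LemLBStaticWO}, not the theorem's claim. The missing idea is precisely to abandon the pointwise inequality and instead (i) use the exact per-chain identity $\alg(c_i) = H_\numc i\,\opt(c_i)$ that constant rates give for free, and (ii) compare against the family of nested subinstances $\chains_i$, whose optimum grows quadratically in $i$ thanks to the inversion bound, which is where $\largestinv(\epsilon)$ actually enters. Your description of the inversion factor "multiplying independently" also does not match the paper: the two contributions (small-$i$ and large-$i$) are combined additively, and the product form in the final statement arises only because each is individually dominated by $H_\numc^2\sqrt P\max\{1+\epsilon,\largestinv(\epsilon)\}$.
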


\section{Average Predictions}
	Recall that an average predictions give access to predicted values $\averagep_{v}$ on~$\average(S(v)) = (\sum_{u \in S(v)} w_u)/(\sum_{u \in S(v)} p_u)$ for each~$v \in \frontjobs_t$. 
	We show the following lower bound for chains with unit jobs, where average predictions coincide with the average weight of the jobs in the respective chain.
	The lower bound exploits that we can append jobs of weight zero to a chain in order to manipulate the average weight of the chain until all chains have the same average. 

	\begin{restatable}{lem}{LBChainsAverage}\label{l:lb:chains}
    Any algorithm which has only access to correct adaptive average predictions is at least $\Omega(\sqrt{n})$-competitive even for chain precedence constraints with unit jobs.
    \end{restatable}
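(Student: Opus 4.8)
The plan is to construct a family of chain instances with unit-size jobs on which any algorithm with access to correct adaptive average predictions is forced to behave essentially blindly, while the optimal solution can exploit the weight structure. The construction mirrors \Cref{lb:in-trees}: start from the classical $\Omega(n)$ lower bound in which $\Theta(\sqrt{n})$ "short" chains each consist of a single weight-$1$ job followed by a long tail of weight-$0$ jobs, and one "hidden valuable" chain whose weight is concentrated far down the chain. The key trick, as hinted in the paragraph preceding the statement, is to pad every chain with enough trailing weight-$0$ unit jobs so that all chains have exactly the same average weight $a(S(v))$ at every point in time; since the jobs are unit-size, $a(S(v))$ is just the arithmetic mean of the remaining weights in the chain, and appending zeros drives this mean to a common target value for all chains simultaneously. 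Consequently the adaptive average prediction reveals nothing that distinguishes the chains, and an adversary argument forces any deterministic algorithm to delay the valuable chain.

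Concretely, I would take $k = \Theta(\sqrt{n})$ chains. Give $k-1$ of them the weight-$1$ job first and then a run of zeros; give the remaining "valuable" chain a run of zeros first and then weight concentrated near its end (total weight $\Theta(\sqrt{n})$, say spread over $\Theta(\sqrt{n})$ unit jobs, to get the $n\sqrt n$ scaling). Then pad each chain with additional weight-$0$ unit jobs so that every chain has the same length $L = \Theta(\sqrt n)$ (hence the same average, which equals its total weight over $L$). Because all chains look identical to the algorithm under adaptive average predictions at time $0$, and remain indistinguishable until the algorithm has completed enough of some chain to expose its true weight profile, the adversary can permute the labels so that the valuable chain is whichever chain the algorithm schedules last; this yields algorithm cost $\Omega(k \cdot k \cdot \sqrt n) \cdot \ldots$ — more carefully, the valuable chain's weighted completion time is $\Omega(\sqrt n)$ weight finishing at time $\Omega(n)$, giving $\Omega(n\sqrt n)$, whereas $\opt$ processes the valuable chain essentially first and pays only $\bigO(n)$. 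The ratio is $\Omega(\sqrt n)$.

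The step I expect to be the main obstacle is making the "averages are all equal, hence predictions are useless" argument fully rigorous in the \emph{adaptive} setting. Adaptive predictions give a new value each time a job becomes visible, i.e., potentially deep inside a chain, so I must check that even after the algorithm has partially processed several chains, the predicted averages of the \emph{remaining} front jobs still fail to reveal which chain is valuable — this requires the padding to be chosen so that the running averages of all chains stay equal not just initially but throughout the relevant prefix of the schedule, which forces a careful choice of where the zeros are inserted (interleaving them with the weighted jobs rather than simply appending). A secondary, more routine obstacle is verifying the arithmetic: ensuring the padded instance still has $\Theta(n)$ total jobs, that $\opt = \bigO(n)$ after accounting for the $\Theta(\sqrt n)$-time delay caused by processing the leading weight-$1$ jobs of the other chains (exactly as in the proof of \Cref{lb:in-trees}), and that the algorithm's cost is $\Omega(n\sqrt n)$. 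Both of these are direct adaptations of the bookkeeping already carried out for \Cref{lb:in-trees}.
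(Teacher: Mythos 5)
Your high-level plan matches the paper's: pad chains with zero-weight unit jobs so that every chain has the same average, rendering the average predictions useless, and then apply a standard adversary argument. But the concrete instance you propose does not actually achieve equal averages, and fixing this changes the shape of the construction substantially.

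You propose to pad \emph{every} chain to the same length $L = \Theta(\sqrt{n})$, with the valuable chain carrying total weight $\Theta(\sqrt{n})$ and each of the other $k-1$ chains carrying total weight $1$. With unit jobs and a common length $L$, the average of a chain is simply $(\text{total weight})/L$, so equal lengths force equal total weights. Your valuable chain would have average $\Theta(\sqrt{n})/L = \Theta(1)$, while the light chains would have average $1/L = \Theta(1/\sqrt{n})$; the initial predictions already single out the valuable chain, and the whole argument collapses. The paper avoids this by making the chain \emph{lengths} proportional to their total weights: the $\sqrt{n}-1$ light chains are singletons (weight $1$, length $1$, average $1$), and the valuable chain has total weight $n-\sqrt{n}+1$ and length $n-\sqrt{n}+1$, also average $1$. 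The heavy weight $n-\sqrt{n}$ is placed at \emph{position two} of the valuable chain (not near its end, as you suggest). This also disposes of the obstacle you flag about adaptivity more cleanly than your interleaving idea: the light chains are singletons so they never reveal anything, and the valuable chain's second job only becomes visible after its first job completes, which an adversary can force to happen at time $\geq \sqrt{n}$ since all $\sqrt{n}$ first jobs are indistinguishable and total rate is $1$. From there the cost accounting ($\alg = \Omega(n\sqrt{n})$, $\opt = O(n)$) goes through as in your sketch. So the idea is right, but you need to let chain lengths vary with chain weight and concentrate the hidden weight early rather than late.
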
	

	\begin{proof}
		Consider an instance composed of $\sqrt{n} \in \mathbb{N}$ chains of unit jobs, where the first two jobs of the first chain have weights 1 resp. $n - \sqrt{n}$, followed by $n - \sqrt{n} - 1$ zero weight jobs. The other $\sqrt{n}-1$ chains are single jobs with weight $1$. For an algorithm, all chains look identical since the first jobs have weight $1$ and the average of every chain is equal to $1$. Therefore, an adversary can ensure that the algorithm processes the first chain last, giving an objective value of $\sum^{\sqrt{n}}_{i=1} i + (\sqrt{n}+1)(n - \sqrt{n}) = \Omega(n\sqrt{n})$, while a solution which schedules the heavy weight job initially achieves an objective value of at most $1 + 2(n - \sqrt{n}) + \sum^{\sqrt{n} - 1}_{i=1} (3 + i) = \bigO(n)$.
		The adaptivity of the predictions does not help for this lower bound as the algorithm would only receive meaningful updates once it finishes the first job of the first chain, which is too late.
	\end{proof}

\section{Final Remarks}

We initiated the study of learning-augmented algorithms for scheduling with online precedence constraints by considering a hierarchy of prediction models based on their entropy. For several models of the hierarchy, we were able to show that the predicted information is sufficient to break lower bounds for algorithms without predictions. We hope that our approach leads to more discussions on the identification of the \enquote{right} prediction model in learning-augmented algorithm design. As a next research step, we suggest investigating the missing bounds for our prediction models, e.g., an upper bound for average predictions.

\bibliographystyle{alpha}
\bibliography{../literature}

\newcommand{\etalchar}[1]{$^{#1}$}
\begin{thebibliography}{BLMS{\etalchar{+}}22}

\bibitem[ACE{\etalchar{+}}20]{AntoniadisCE0S20}
Antonios Antoniadis, Christian Coester, Marek Eli{\'{a}}s, Adam Polak, and
  Bertrand Simon.
\newblock Online metric algorithms with untrusted predictions.
\newblock In {\em {ICML}}, volume 119 of {\em Proceedings of Machine Learning
  Research}, pages 345--355. {PMLR}, 2020.

\bibitem[AGKP22]{Anand0KP22}
Keerti Anand, Rong Ge, Amit Kumar, and Debmalya Panigrahi.
\newblock Online algorithms with multiple predictions.
\newblock In {\em {ICML}}, volume 162 of {\em Proceedings of Machine Learning
  Research}, pages 582--598. {PMLR}, 2022.

\bibitem[AGS22]{AntoniadisGS22}
Antonios Antoniadis, Peyman~Jabbarzade Ganje, and Golnoosh Shahkarami.
\newblock A novel prediction setup for online speed-scaling.
\newblock In {\em {SWAT}}, volume 227 of {\em LIPIcs}, pages 9:1--9:20. Schloss
  Dagstuhl - Leibniz-Zentrum f{\"{u}}r Informatik, 2022.

\bibitem[All70]{Allen1970control}
Frances~E Allen.
\newblock Control flow analysis.
\newblock {\em ACM Sigplan Notices}, 5(7):1--19, 1970.

\bibitem[{ALP}23]{alps}
{ALPS contributors}.
\newblock Algorithms with predictions paper-tracker, 2023.

\bibitem[ALT21]{AzarLT21}
Yossi Azar, Stefano Leonardi, and Noam Touitou.
\newblock Flow time scheduling with uncertain processing time.
\newblock In {\em {STOC}}, pages 1070--1080. {ACM}, 2021.

\bibitem[ALT22]{AzarLT22}
Yossi Azar, Stefano Leonardi, and Noam Touitou.
\newblock Distortion-oblivious algorithms for minimizing flow time.
\newblock In {\em {SODA}}, pages 252--274. {SIAM}, 2022.

\bibitem[AM09]{AmbuhlM09}
Christoph Amb{\"{u}}hl and Monaldo Mastrolilli.
\newblock Single machine precedence constrained scheduling is a vertex cover
  problem.
\newblock {\em Algorithmica}, 53(4):488--503, 2009.

\bibitem[APT22]{AzarPT22}
Yossi Azar, Debmalya Panigrahi, and Noam Touitou.
\newblock Online graph algorithms with predictions.
\newblock In {\em {SODA}}, pages 35--66. {SIAM}, 2022.

\bibitem[BBEM12]{BeaumontBEM12}
Olivier Beaumont, Nicolas Bonichon, Lionel Eyraud{-}Dubois, and Loris Marchal.
\newblock Minimizing weighted mean completion time for malleable tasks
  scheduling.
\newblock In {\em {IPDPS}}, pages 273--284. {IEEE} Computer Society, 2012.

\bibitem[BDK{\etalchar{+}}22]{BampisDKLP22}
Evripidis Bampis, Konstantinos Dogeas, Alexander~V. Kononov, Giorgio Lucarelli,
  and Fanny Pascual.
\newblock Scheduling with untrusted predictions.
\newblock In {\em {IJCAI}}, pages 4581--4587. ijcai.org, 2022.

\bibitem[BE98]{Borodin98}
Allan Borodin and Ran El{-}Yaniv.
\newblock {\em Online computation and competitive analysis}.
\newblock Cambridge University Press, 1998.

\bibitem[BKL21]{BienkowskiKL21}
Marcin Bienkowski, Artur Kraska, and Hsiang{-}Hsuan Liu.
\newblock Traveling repairperson, unrelated machines, and other stories about
  average completion times.
\newblock In {\em {ICALP}}, volume 198 of {\em LIPIcs}, pages 28:1--28:20.
  Schloss Dagstuhl - Leibniz-Zentrum f{\"{u}}r Informatik, 2021.

\bibitem[BLMS{\etalchar{+}}22]{Bernardini22universal}
Giulia Bernardini, Alexander Lindermayr, Alberto Marchetti-Spaccamela, Nicole
  Megow, Leen Stougie, and Michelle Sweering.
\newblock A universal error measure for input predictions applied to online
  graph problems.
\newblock In {\em NeurIPS}, 2022.

\bibitem[BMS20]{BamasMS20}
{\'{E}}tienne Bamas, Andreas Maggiori, and Ola Svensson.
\newblock The primal-dual method for learning augmented algorithms.
\newblock In {\em NeurIPS}, 2020.

\bibitem[CHS22]{ChoHS2022}
Woo{-}Hyung Cho, Shane~G. Henderson, and David~B. Shmoys.
\newblock Scheduling with predictions.
\newblock {\em CoRR}, abs/2212.10433, 2022.

\bibitem[CPS{\etalchar{+}}96]{ChakrabartiPSSSW96}
Soumen Chakrabarti, Cynthia~A. Phillips, Andreas~S. Schulz, David~B. Shmoys,
  Clifford Stein, and Joel Wein.
\newblock Improved scheduling algorithms for minsum criteria.
\newblock In {\em {ICALP}}, volume 1099 of {\em Lecture Notes in Computer
  Science}, pages 646--657. Springer, 1996.

\bibitem[CS05]{CorreaS05}
Jos{\'{e}}~R. Correa and Andreas~S. Schulz.
\newblock Single-machine scheduling with precedence constraints.
\newblock {\em Math. Oper. Res.}, 30(4):1005--1021, 2005.

\bibitem[DIL{\etalchar{+}}22]{DinitzILMV22portfolios}
Michael Dinitz, Sungjin Im, Thomas Lavastida, Benjamin Moseley, and Sergei
  Vassilvitskii.
\newblock Algorithms with prediction portfolios.
\newblock In Alice~H. Oh, Alekh Agarwal, Danielle Belgrave, and Kyunghyun Cho,
  editors, {\em Advances in Neural Information Processing Systems}, 2022.

\bibitem[DLY91]{duLY91}
Jianzhong Du, Joseph~Y.{-}T. Leung, and Gilbert~H. Young.
\newblock Scheduling chain-structured tasks to minimize makespan and mean flow
  time.
\newblock {\em Inf. Comput.}, 92(2):219--236, 1991.

\bibitem[EdLMS22]{ErlebachLMS22}
Thomas Erlebach, Murilo~Santos de~Lima, Nicole Megow, and Jens Schl{\"{o}}ter.
\newblock Learning-augmented query policies for minimum spanning tree with
  uncertainty.
\newblock In {\em {ESA}}, volume 244 of {\em LIPIcs}, pages 49:1--49:18.
  Schloss Dagstuhl - Leibniz-Zentrum f{\"{u}}r Informatik, 2022.

\bibitem[ELM{\etalchar{+}}22]{EberleLMNS22}
Franziska Eberle, Alexander Lindermayr, Nicole Megow, Lukas N{\"{o}}lke, and
  Jens Schl{\"{o}}ter.
\newblock Robustification of online graph exploration methods.
\newblock In {\em {AAAI}}, pages 9732--9740. {AAAI} Press, 2022.

\bibitem[GGKS19]{GargGKS19}
Naveen Garg, Anupam Gupta, Amit Kumar, and Sahil Singla.
\newblock Non-clairvoyant precedence constrained scheduling.
\newblock In {\em {ICALP}}, volume 132 of {\em LIPIcs}, pages 63:1--63:14.
  Schloss Dagstuhl - Leibniz-Zentrum f{\"{u}}r Informatik, 2019.

\bibitem[Hay08]{Hayes08}
Brian Hayes.
\newblock Cloud computing.
\newblock {\em Commun. {ACM}}, 51(7):9--11, 2008.

\bibitem[HSSW97]{hallSSW97}
Leslie~A. Hall, Andreas~S. Schulz, David~B. Shmoys, and Joel Wein.
\newblock Scheduling to minimize average completion time: Off-line and on-line
  approximation algorithms.
\newblock {\em Math. Oper. Res.}, 22(3):513--544, 1997.

\bibitem[IKQP21]{Im0QP21}
Sungjin Im, Ravi Kumar, Mahshid~Montazer Qaem, and Manish Purohit.
\newblock Non-clairvoyant scheduling with predictions.
\newblock In {\em {SPAA}}, pages 285--294. {ACM}, 2021.

\bibitem[J{\"{a}}g21]{Jager21}
Sven~Joachim J{\"{a}}ger.
\newblock {\em Approximation in deterministic and stochastic machine
  scheduling}.
\newblock PhD thesis, Technical University of Berlin, Germany, 2021.

\bibitem[JM22]{JinM22}
Billy Jin and Will Ma.
\newblock Online bipartite matching with advice: Tight robustness-consistency
  tradeoffs for the two-stage model.
\newblock In Alice~H. Oh, Alekh Agarwal, Danielle Belgrave, and Kyunghyun Cho,
  editors, {\em Advances in Neural Information Processing Systems}, 2022.

\bibitem[KC03]{KimC03a}
Jae{-}Hoon Kim and Kyung{-}Yong Chwa.
\newblock Non-clairvoyant scheduling for weighted flow time.
\newblock {\em Inf. Process. Lett.}, 87(1):31--37, 2003.

\bibitem[Law78]{lawler78}
Eugene~L. Lawler.
\newblock Sequencing jobs to minimize total weighted completion time subject to
  precedence constraints.
\newblock {\em Annals of Discrete Mathematics}, 2:75--90, 1978.

\bibitem[LK78]{lenstrR78}
Jan~Karel Lenstra and A.~H. G.~Rinnooy Kan.
\newblock Complexity of scheduling under precedence constraints.
\newblock {\em Oper. Res.}, 26(1):22--35, 1978.

\bibitem[LM22]{LindermayrM22}
Alexander Lindermayr and Nicole Megow.
\newblock Permutation predictions for non-clairvoyant scheduling.
\newblock In {\em {SPAA}}, pages 357--368. {ACM}, 2022.

\bibitem[LMS22]{LindermayrMS22}
Alexander Lindermayr, Nicole Megow, and Bertrand Simon.
\newblock Double coverage with machine-learned advice.
\newblock In {\em {ITCS}}, volume 215 of {\em LIPIcs}, pages 99:1--99:18.
  Schloss Dagstuhl - Leibniz-Zentrum f{\"{u}}r Informatik, 2022.

\bibitem[LRLE17]{LynnRLE17}
Theo Lynn, Pierangelo Rosati, Arnaud Lejeune, and Vincent~C. Emeakaroha.
\newblock A preliminary review of enterprise serverless cloud computing
  (function-as-a-service) platforms.
\newblock In {\em CloudCom}, pages 162--169. {IEEE} Computer Society, 2017.

\bibitem[McN59]{mcnaughton1959scheduling}
Robert McNaughton.
\newblock Scheduling with deadlines and loss functions.
\newblock {\em Management science}, 6(1):1--12, 1959.

\bibitem[MPT94]{motwani1994nonclairvoyant}
Rajeev Motwani, Steven Phillips, and Eric Torng.
\newblock Nonclairvoyant scheduling.
\newblock {\em Theoretical computer science}, 130(1):17--47, 1994.

\bibitem[MV20]{MitzenmacherV20}
Michael Mitzenmacher and Sergei Vassilvitskii.
\newblock Algorithms with predictions.
\newblock In {\em Beyond the Worst-Case Analysis of Algorithms}, pages
  646--662. Cambridge University Press, 2020.

\bibitem[MV22]{MitzenmacherV22}
Michael Mitzenmacher and Sergei Vassilvitskii.
\newblock Algorithms with predictions.
\newblock {\em Commun. {ACM}}, 65(7):33--35, 2022.

\bibitem[PSK18]{PurohitSK18}
Manish Purohit, Zoya Svitkina, and Ravi Kumar.
\newblock Improving online algorithms via {ML} predictions.
\newblock In {\em NeurIPS}, pages 9684--9693, 2018.

\bibitem[SBW19]{ShahradBW19}
Mohammad Shahrad, Jonathan Balkind, and David Wentzlaff.
\newblock Architectural implications of function-as-a-service computing.
\newblock In {\em {MICRO}}, pages 1063--1075. {ACM}, 2019.

\bibitem[Tim03]{timkovsky03}
Vadim~G. Timkovsky.
\newblock Identical parallel machines vs. unit-time shops and preemptions vs.
  chains in scheduling complexity.
\newblock {\em Eur. J. Oper. Res.}, 149(2):355--376, 2003.

\end{thebibliography}

\newpage
\appendix

\section{Input Predictions}\label{app:input}

This section is devoted to the proof of \Cref{thm:input-predictions}.

Let~$\hchains$ denote the set of predicted chains, and let $\hw_j$ denote the predicted weight of a job $j$ of the instance. All processing requirements are equal to $1$, and the algorithm is aware of this.
We assume w.l.o.g. that~$\abs{\hchains} = \abs{\chains}$ by adding chains with zero (predicted) weight, and that predicted and actual chains have the same identities. That is, there exists exactly one predicted chain~$c_i \in \hchains$ for each actual chain~$\hc_i \in \chains$, which an algorithm can match to each other. 

Our error measure further requires that there exists for every actual job a predicted counterpart, and vice versa. 
For a chain $c$ let $|c|$ denote the number of jobs of chain $c$.
We define augmentations of~$\chains$ and~$\hchains$ as follows. Let $\chains'$ be composed of all jobs of~$\chains$, and additionally, for every paired chains~$c_i \in \chains$ and~$\hc_i \in \hchains$:
\begin{itemize}
	\item~if~$\abs{\hc_i} > \abs{c_i}$, we add~$\abs{\hc_i} - \abs{c_i}$ jobs~$J_u$ with weight~$0$ at the end of~$c_i$ in~$\chains'$. Note that~$\opt(\chains) = \opt(\chains')$.
	\item~if~$\abs{c_i} > \abs{\hc_i}$, we add~$\abs{c_i} - \abs{\hc_i}$ jobs~$J_a$ with predicted weight~$0$ at the end of~$\hc_i$ in~$\hchains'$. 
\end{itemize}
Note that this construction ensures $\opt(\hchains) = \opt(\hchains')$.

For the sake of analysis, assume w.l.o.g. that both~$\chains'$ and~$\hchains'$ share the same set of jobs~$J'$. Let~$n' = \abs{J'}$.
We define~$\opt(\{w'_j\}_{j})$ as the objective of an optimal solution for~$J'$ where a job~$j$ has weight~$w'_j$. 
We further define
\[ 
	\opt(\{w'_j\}_{j}, \{w_j\}_{j}) = \max\left\{\sum_{j \in J'} w'_j C^*_j \mid \{C^*_j\}_j \text{ is an optimal schedule for } \{w_j\}_{j} \right\}.
\]

Given two fixed augmented instances~$\chains'$ and~$\hchains'$, we define the input prediction error~$\Lambda = \Gamma_u + \Gamma_a$:
\begin{itemize}
	\item a job~$j \in J'$ has \emph{unexpected actual} weight if~$w_j > \hw_j$. The prediction error due to all unexpected weights can be expressed as~$\Gamma_u = \opt(\{\max\{\hw_j, w_j\} - w_j \}_{j}, \{w_j\}_{j})$.
	\item a job~$j \in J'$ has \emph{absent predicted} weight if~$\hw_j > w_j$. The prediction error due to all absent weights can be expressed as~$\Gamma_a = \opt(\{\max\{w_j, \hw_j\} - \hw_j\}_{j}, \{\hw_j\}_{j})$.
\end{itemize}

\thmInputPredErrorDep*

Recall that $\bigO(\omega)$-robustness can be achieved via \Cref{lemma:chain-robust,thm:alg-combination}. Thus, we can prove the theorem by deriving an $\mathcal{O}(1+\Lambda)$-competitive algorithm.
\begin{proof}
    We analyze the following algorithm:
    \begin{enumerate}[1)]
        \item Efficiently compute an optimal solution based on $\hchains$~\cite{lawler78}. This yields an non-preemptive schedule for the predicted instance, i.e., an order of the jobs.
        \item Follow the computed solution. The following situations might occur:
        \begin{enumerate}[a)]
            \item a chain finishes earlier than expected. In this case, discard the remaining predicted jobs of this chain in the precomputed schedule. 
            \item a chain continues although there are no more jobs in this chain in the algorithms schedule. In this case, schedule the remaining jobs in an arbitrary order at the end of the precomputed schedule.
        \end{enumerate}
    \end{enumerate}
    Let $\alg$ denote the objective value of this algorithm. We first observe that $\alg \leq \opt(\{w_j\}_{j}, \{\hw_j\}_{j})$. To see this, recall that the algorithm first follows an optimal schedule for jobs $J' \setminus J_u$ and then schedules all unexpected jobs $J_u$ at the end due to case b). 
    Since jobs $J_u$ have predicted weight $0$ in $\hchains'$, we can assume that an optimal solution for $\hchains'$ first schedules jobs $J' \setminus J_u$ as our algorithm with the same objective value and makespan as our algorithm, and then schedules jobs $J_u$ in any order. Since $\opt(\{w_j\}_{j}, \{\hw_j\}_{j})$ is an upper bound on the actual objective for any such order, the inequality follows.
    It further holds that
    \begin{align*}
    \opt(\{w_j\}_{j}, \{\hw_j\}_{j}) 
    &\leq \opt(\{\max\{w_j, \hw_j\}\}_{j}, \{\hw_j\}_{j}) \\
    &= \opt(\{\hw_j\}_{j}) + \opt(\{\max\{w_j, \hw_j\} - \hw_j\}_{j}, \{\hw_j\}_{j}) \\
    &\leq \opt(\{\hw_j\}_{j}, \{w_j\}_{j}) + \opt(\{\max\{w_j, \hw_j\} - \hw_j\}_{j}, \{\hw_j\}_{j})  \\
    &\leq \opt(\{\max\{\hw_j, w_j\}\}_{j}, \{w_j\}_{j}) + \opt(\{\max\{w_j, \hw_j\} - \hw_j\}_{j}, \{\hw_j\}_{j})  \\
    &\leq \opt(\{w_j\}_{j}) + \opt(\{\max\{\hw_j, w_j\} - w_j \}_{j}, \{w_j\}_{j}) + \opt(\{\max\{w_j, \hw_j\} - \hw_j\}_{j}, \{\hw_j\}_{j}) \\
    &= \opt(\{w_j\}_{j}) + \Lambda .
    \end{align*}
    We finally observe that $\opt(\{w_j\}_{j}) =  \opt(\chains)$, as jobs $J_a$ do not influence the objective value of an optimal solution.
\end{proof}

\section{Weight Value Predictions}

\subsection{A Learning-Augmented Algorithm for Chains}
\label{app:chains-error-dep}

We give a formal proof of the following theorem.

\ThmStaticWeightsError*

In order to give the proof, we first formally define the \emph{predicted instance} (including $\chains_o$ and $\chains_u$).

\begin{definition}[predicted instances]
	The \emph{predicted instance}~$\hchains$, the \emph{underpredicted subinstance}~$\chains_u$, and the \emph{overpredicted subinstance}~$\chains_o$ are constructed by considering
	for every~$c = [j_1,\ldots,j_\ell] \in \chains$ the following cases:
	\begin{enumerate}[(i)]
		\item if~$\hW_c = W_c$, then the chain~$\hc = c$ with job weights~$\hw_{j} = w_j$ for all~$j \in c$ is added to~$\hchains$.
		\item if~$\hW_c < W_c$, then the chain~$\hc = [j_1,\ldots,j_k]$, where~$k$ is the smallest index s.t.~$\hW_c \leq \sum_{i=1}^{k} w_i$, with weights~$\hw_{j_i} = w_{j_i}$ for all~$1 \leq i \leq k - 1$ and~$\hw_{j_k} = \hW_c - \sum_{i=1}^{k-1} w_i$ is added to~$\hchains$. Additionally, a chain~$c_u = [\bot, j_{k+1}, \ldots, j_\ell]$ with weights~$\hw_{j_i} = w_{j_i}$ for all~$k+1 \leq i \leq \ell$ and~$\hw_{\bot} = \sum_{i=1}^{k} w_i - \hW_c$ is added to~$\chains_u$, where the processing requirement of~$\bot$ is equal to the total processing requirement of~$\hc$.
		\item if~$\hW_c > W_c$, then chain~$\hc = [j_1,\ldots,j_\ell]$ with weights $\hw_{j_i} = w_{j_i}$ for all~$1 \leq i \leq \ell - 1$ and $\hw_{j_\ell} = \hW_c - \sum_{i=1}^{\ell-1} w_i$ is added to~$\hchains$. Additionally, a chain~$c_o = [\top]$ is added to~$\chains_o$, where the weight of $\top$ is equal to~$\hW_c - \sum_{i=1}^{\ell} w_i$ and its processing requirement is equal to the total processing requirement of~$\hc$.
	\end{enumerate}
	
	Finally,~$\chains_p$ is a copy of~$\hchains$ where for every overpredicted chain~$\hc = [j_1,\ldots,j_\ell] \in \hchains$ the weight of its last job~$j_\ell$ is set to~$w_{j_\ell}$, the weight of the job in the actual instance. This weight is strictly smaller than the weight $\hw_{j_\ell} = \hW_c - \sum_{i=1}^{\ell-1} w_i$ of the job in instance $\hchains$.
\end{definition}

Note that for every~$\hc \in \hchains$, we have~$\sum_{j \in \hc} \hw_j = \hW_c$, i.e., the predicted weights are correct for~$\hchains$.
In the following, we nevertheless call a chain~$\hc \in \hchains$ overpredicted resp. underpredicted if that is true for its corresponding chain in~$\chains$.
Since every job~$j$ of~$\chains_p$ is also part of~$\chains$ with the same processing requirement and a weight of at most $w_j$ and the chains in $\chains_p$ are prefixes of the chains in $\chains$, we conclude: 

\begin{proposition}
	$\opt(\chains_p) \leq \opt(\chains)$.
\end{proposition}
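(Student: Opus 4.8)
The plan is to prove the inequality by a direct monotonicity argument: I will show that any feasible schedule for $\chains$ restricts to a feasible schedule for $\chains_p$ of no larger objective, so in particular an optimal one does.

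First I would record the two structural facts about $\chains_p$ that follow immediately from the case analysis in the definition of the predicted instances. (1) The job set of $\chains_p$ is a subset of the job set of $\chains$, formed by taking a \emph{prefix} of each chain: for correctly predicted and for overpredicted chains the full chain is kept, and for an underpredicted chain $c=[j_1,\dots,j_\ell]$ only the prefix $[j_1,\dots,j_k]$ remains. In every case the jobs of $\chains_p$ are prefix-closed, i.e.\ whenever a job belongs to $\chains_p$ so do all of its predecessors within its chain. (2) Each job $j$ of $\chains_p$ has, in $\chains_p$, the same processing requirement as in $\chains$ and a weight $\bar w_j \le w_j$: equality holds in all cases except for the last job $j_k$ of an underpredicted chain, whose weight $\hW_c-\sum_{i=1}^{k-1}w_i$ is at most $w_{j_k}$ by the minimality of $k$ in case~(ii).

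Next I would fix an optimal schedule $\cS^*$ for $\chains$ with completion times $C_j^*$, and restrict it to the jobs of $\chains_p$ by simply deleting all processing assigned to jobs in $\chains\setminus\chains_p$. This restricted schedule is feasible for $\chains_p$: the single-machine rate constraint only becomes slacker when jobs are removed; each surviving job still receives its full processing requirement $p_j$, so its completion time is unchanged at $C_j^*$; and the precedence constraints of $\chains_p$ are exactly those of $\chains$ restricted to a prefix-closed set, hence still satisfied. Evaluating the $\chains_p$-objective on this schedule and using fact~(2) together with $w_j\ge 0$,
\[
\opt(\chains_p)\;\le\;\sum_{j\in\chains_p}\bar w_j\,C_j^*\;\le\;\sum_{j\in\chains_p}w_j\,C_j^*\;\le\;\sum_{j\in\chains}w_j\,C_j^*\;=\;\opt(\chains),
\]
which is the claim. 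There is essentially no hard step here: the only thing requiring care is matching the notation of the case analysis to confirm the ``prefix'' and ``$\bar w_j\le w_j$'' claims of facts~(1) and~(2), and observing that deleting jobs from a feasible single-machine schedule preserves both feasibility and the completion times of the jobs that remain.
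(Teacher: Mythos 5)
Your proof is correct and is the same monotonicity argument the paper relies on: $\chains_p$ consists of prefix-closed subsets of the chains of $\chains$ with job weights that are at most the corresponding weights in $\chains$, so restricting an optimal schedule for $\chains$ to the jobs of $\chains_p$ is feasible, leaves completion times unchanged, and can only decrease the objective. The paper states this as immediate from those structural facts without spelling out the restriction argument; you simply make the reasoning explicit.
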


We use the algorithm of~\Cref{lemma:chain-robust} in combination with the times-haring of~\Cref{thm:alg-combination} to define~\Cref{alg:chain-robin-robust}.

\begin{algorithm}[t]
	\caption{Learning-augmented WRR on Chains}\label{alg:chain-robin-robust}
	\begin{algorithmic}[1]
		\STATE Execute \Cref{alg:chain-robin} with rate~$\frac{1}{2}$ using the predicted chain weights.
		\STATE Execute the algorithm of \Cref{lemma:chain-robust} with rate~$\frac{1}{2}$.
	\end{algorithmic}
\end{algorithm}

\begin{lemma}
	\Cref{alg:chain-robin-robust} with predicted chain weights achieves an objective value of at most 
	\[
	\mathcal{O}(1) \cdot \opt(\chains_p) + \mathcal{O}(1) \cdot (\opt(\chains_o) + \numc \cdot \opt(\chains_u)).
	\] 
\end{lemma}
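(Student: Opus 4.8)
The goal is to bound the objective value of \Cref{alg:chain-robin-robust} by $\mathcal{O}(1)\cdot\opt(\chains_p) + \mathcal{O}(1)\cdot(\opt(\chains_o) + \numc\cdot\opt(\chains_u))$. Since \Cref{alg:chain-robin-robust} time-shares \Cref{alg:chain-robin} (run with the predicted weights $\hW_c$, at rate $\tfrac12$) with the $\numc$-competitive algorithm of \Cref{lemma:chain-robust} (at rate $\tfrac12$), I would invoke \Cref{thm:alg-combination} (or rather its proof idea) so that it suffices to bound, separately, (a) the objective that \Cref{alg:chain-robin} at full speed would produce on the actual instance $\chains$ when fed the predicted weights, and (b) the objective of the robust algorithm, namely $\numc\cdot\opt(\chains)$. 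Part (b) is immediate. The entire work is in part (a): showing that \Cref{alg:chain-robin} driven by $\hW_c$ pays at most $\mathcal{O}(1)\cdot\opt(\chains_p) + \mathcal{O}(1)\cdot(\opt(\chains_o)+\numc\cdot\opt(\chains_u))$ on $\chains$.

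\textbf{Key steps for part (a).} The crucial observation is that \Cref{alg:chain-robin} fed with $\hW_c$ behaves on $\chains$ \emph{exactly} as it would behave when correctly executed on the predicted instance $\chains_p$ — as long as none of the predicted chains has ``run out'' of predicted weight prematurely. Concretely, the rate assigned to the front job of chain $c$ is $\hW_c(t)/\sum_{c'}\hW_{c'}(t)$, where $\hW_c(t)$ is the initial prediction $\hW_c$ minus the weight of finished jobs of $c$; by the construction of the predicted instances, $\hW_c(t)$ equals precisely the total \emph{predicted} remaining weight $w(S(j))$ in $\chains_p$ of the current front job $j$ of $c$, up until the point where an underpredicted chain exhausts its budget (case (ii): after job $j_k$ the predicted tail has weight $0$) or an overpredicted chain's actual last job finishes while predicted weight remains (case (iii)). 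So I would (1) run the dual-fitting analysis of \Cref{thm:rates-algo}/\Cref{thm:rates-algo}'s proof with $\rho=1$ to charge the cost incurred on the ``well-behaved'' prefix against $4\cdot\opt(\chains_p)$ — here I must check that the rate condition $w(S(j))\le \rho\cdot R_j^t\cdot W(t)$ holds with respect to the \emph{predicted} weights, which it does by exactly the argument preceding \Cref{theorem:chain-robin}; (2) account for the extra delay of underpredicted tails: once a chain's predicted budget is exhausted, its true remaining jobs carry weight, but the algorithm treats them as weightless and only schedules them ``in an arbitrary order in the end,'' so their completion times can be pushed back by the whole instance; these jobs form exactly the instance $\chains_u$, and delaying them arbitrarily costs at most $\numc\cdot\opt(\chains_u)$ by the robust-algorithm bound applied to that sub-instance (or a direct makespan argument); (3) account for overpredicted chains: here the predicted weight overshoots, so \Cref{alg:chain-robin} gives such a chain too much rate, stealing rate from others; but the total excess rate over the execution is controlled by the overpredicted weights, and the resulting extra completion-time cost of all other jobs is $\mathcal{O}(\opt(\chains_o))$ — this follows by observing that the excess weight of an overpredicted chain, concentrated on a single job $\top$ in $\chains_o$ with the chain's total processing requirement, upper-bounds via a WRR-style / dual-fitting argument the slowdown it imposes.

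\textbf{Main obstacle.} The hard part is step (3), bounding the harm done by overpredicted chains. For underpredicted chains the error only \emph{delays} the algorithm's own cheap jobs, which is benign and easy to charge to a separate instance; but an overprediction makes the algorithm \emph{waste machine rate} on a chain that is actually light (or finished), which slows down genuinely heavy jobs elsewhere, and this coupling is global. I expect the clean way through is to re-run the dual-fitting of \Cref{thm:rates-algo} but on the modified instance $\chains_p$ where overpredicted last jobs have their true (smaller) weight: the algorithm's rates are unchanged, so the feasibility argument for the dual goes through verbatim with $W(t)$ the true remaining weight of $\chains_p$, giving $\alg\le 4\cdot\opt(\chains_p)$ plus a residual term capturing that $\sum_{c}\hW_c(t)$ overestimates the true $W(t)$ of $\chains_p$ by at most $\sum$ (weights of unfinished $\top$-jobs) $= W_t(\chains_o)$; that residual, integrated over time and combined with the makespan/processing bound used in the proof of \Cref{thm:rates-algo}, yields the $\mathcal{O}(\opt(\chains_o))$ additive loss. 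Once that bookkeeping is set up, adding the underprediction delay $\numc\cdot\opt(\chains_u)$ and taking the min with the robust bound $\numc\cdot\opt(\chains)\ge\numc\cdot\opt(\chains_p)$ through \Cref{thm:alg-combination} finishes the lemma, and dividing by $\opt\ge\opt(\chains_p)$ (using $\opt(\chains_p)\le\opt(\chains)=\opt$) then gives \Cref{thm:ThmStaticWeightsError}.
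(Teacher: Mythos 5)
Your high-level decomposition — time-share with the $\numc$-competitive algorithm, then analyze the prediction-driven half by separating a ``well-behaved'' phase, the underpredicted tails, and the harm done by overpredicted chains — matches the paper's structure and the treatment of the underpredicted tails (charging them at rate $\frac{1}{2\numc}$ against $\opt(\chains_u)$) is essentially correct. However, there is a genuine gap in step~(3), and a secondary imprecision in step~(1) that points at it. You claim that the algorithm ``behaves on $\chains$ exactly as it would when correctly executed on the predicted instance $\chains_p$.'' This is not $\chains_p$: the algorithm's internal bookkeeping $\hW_c(t)$ tracks the weights of $\hchains$ (last job of an overpredicted chain has the \emph{inflated} weight $\hW_c-\sum_{i<\ell}w_i$), whereas $\chains_p$ by definition uses the \emph{actual} weight $w_{j_\ell}$ there. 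Consequently, when you then propose to ``re-run the dual-fitting of \Cref{thm:rates-algo} on $\chains_p$\,'' and assert that ``the feasibility argument for the dual goes through verbatim with $W(t)$ the true remaining weight of $\chains_p$,'' the claim is false: the rate condition $w_{\chains_p}(S(j))\le\rho\cdot R_j^t\cdot W_{\chains_p}(t)$ fails, because $R_j^t=\hW_c(t)/\sum_{c'}\hW_{c'}(t)$ has denominator $\sum_{c'}\hW_{c'}(t)\ge W_{\chains_p}(t)$, and the gap $\sum_{c'}\hW_{c'}(t)-W_{\chains_p}(t)$ (the overpredicted excess) can be arbitrarily large relative to $W_{\chains_p}(t)$, so no constant $\rho$ works pointwise. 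The ``residual term $\mathcal O(\opt(\chains_o))$'' you posit is exactly the content that needs to be proved, and your sketch does not establish it.

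The paper avoids this obstacle by never running dual-fitting against $\chains_p$. Instead it runs dual-fitting against $\hchains$, for which the rate condition holds with $\rho=1$ exactly (since $\hW_c(t)$ \emph{is} the remaining weight in $\hchains$), yielding that the objective excluding the Line-7 jobs is $\mathcal O(1)\cdot\opt(\hchains)$. It then bounds $\opt(\hchains)$ by a purely instance-level argument: $\opt(\hchains)\le\opt(\chains_p\cup\chains_o)$, because the $\top$-jobs in $\chains_o$ carry processing requirement equal to the full length of the corresponding chain, so the overpredicted excess weight cannot be collected earlier in $\chains_p\cup\chains_o$ than in $\hchains$; and $\opt(\chains_p\cup\chains_o)\le 2\opt(\chains_p)+2\opt(\chains_o)$ by sharing the machine between the two optimal schedules. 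This sidesteps the global-coupling difficulty you identified. If you want to make your step~(3) work, the fix is to move the comparison target to $\hchains$ first and then decompose $\opt(\hchains)$ at the instance level rather than trying to absorb the overprediction into the dual-fitting for $\chains_p$.
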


\begin{proof}
	We first argue that~$\opt(\hchains) \leq \mathcal{O}(1) \cdot \opt(\chains_p) + \mathcal{O}(1) \cdot \opt(\chains_o)$. 
	To this end, consider the instance~$\chains_p \cup \chains_o$. 
	Every correctly predicted or underpredicted chain in~$\hchains$ is contained as an identical copy in~$\chains_p$. 
	For every overpredicted chain~$\hc \in \hchains$ with weight~$\hW_\hc$ in~$\hchains$, all jobs of~$\hc$ are contained with a total weight of~$W_\hc$ in~$\chains_p$ and the remaining weight of~$\hW_\hc - W_\hc$ is contained in~$\chains_o$. 
	Additionally, it is ensured by the processing requirement of the jobs in~$\chains_o$ that their weight can only be gained when processing at least the total processing requirement of~$\hc$. 
	This implies that the time to gain weight~$\hW_\hc - W_\hc$ of every overpredicted chain~$\hc \in \hchains$ in~$\chains_p \cup \chains_o$ takes as least as long as in~$\hchains$, and thus~$\opt(\hchains) \leq \opt(\chains_p \cup \chains_o)$. 
	Finally, it is not hard to see that~$\opt(\chains_p \cup \chains_o) \le 2 \cdot \opt(\chains_p) + 2 \cdot \opt(\chains_o)$ as~$\opt(\chains_p)$ and~$\opt(\chains_o)$ can be executed in parallel by preemptively sharing the machine, yielding the claimed bound.
	
	We now show that~$\alg \le \mathcal{O}(1) \cdot (\opt(\hchains) + \numc \cdot \opt(\chains_u))$ for \Cref{alg:chain-robin-robust}, which implies the statement. 
	First consider the execution of \Cref{alg:chain-robin} in the first line. We may assume that the algorithm processes the artificial job added to each overpredicted chain in~$\hchains$, as it only increases its objective. Further, \Cref{alg:chain-robin} stops processing an underpredicted chain~$c \in \chains$ when a total weight of~$\hW_c$ has been completed on~$c$ and only finishes them at the very end of the schedule. 
	This concludes that the total objective of \Cref{alg:chain-robin} \emph{without} the weighted completion times of the jobs processed in Line~$7$ is at most~$\mathcal{O}(1) \cdot \opt(\hchains)$.
	But, due to \Cref{lemma:chain-robust} and line two of the algorithm, we conclude that \Cref{alg:chain-robin-robust} always processes such chains that are only completed in Line~$7$ of~\Cref{alg:chain-robin}  with a rate of at least~$\frac{1}{2\numc}$ and, thus, delays the completion the jobs in these chains by a factor of at most $2\numc$ compared to an optimal solution. By observing that the total weight of jobs processed in Line~$7$ by \Cref{alg:chain-robin} is exactly equal to the total weight of chains in~$\chains_u$, and the fact that the jobs in a chain~$c \in \chains_u$ can only be processed after time equal to the total processing requirement of the corresponding chain in~$\hchains$, we conclude the stated bound.
\end{proof}

\subsection{Adaptive weight predictions for out-forests}\label{app:adaptive-weight-out-forests}

\begin{algorithm}
	\caption{Weighted Round Robin on out-forests}\label{alg:chain-robin-out-trees}
	\begin{algorithmic}[1]
		\REQUIRE Out-forest~$T$ and adaptive weight predictions.
		\STATE~$t \gets 0$
		\WHILE{$F_t \neq \emptyset$}
		\STATE Process every~$v \in F_t$ with rate~$R_j^t = \frac{\hW_{j}}{\sum_{i \in F_t} \hW_{i}}$
		\STATE~$t \gets t + 1$
		\ENDWHILE
	\end{algorithmic}
\end{algorithm}

\thmAdaptiveWeightsError*

\begin{proof}

	To prove the theorem, we show that~\Cref{alg:chain-robin-out-trees} is $\mathcal{O}(\eta)$-competitive for 
	\[
		\eta = \max_{v \in J} \frac{\hW_{v}}{W_{v}} \cdot \max_{v \in J} \frac{W_{v}}{\hW_{v}}. 
	\]
	Then, \Cref{lemma:chain-robust} and~\Cref{thm:alg-combination} imply the theorem.
	
	By~\Cref{thm:rates-algo}, it suffices to show that $\frac{w(S(j))}{W(t)} \leq \eta \cdot R_j^t$ holds for any point in time $t$ during the execution of \Cref{alg:chain-robin-out-trees} and any $j \in F_t$, where $R_j^t$ is the rate with which the algorithm processes $j$ at point in time $t$. 
	Consider a fixed point in time $t$ and an arbitrary $j \in F_t$. Then, the algorithm processes $j$ with rate $R_j^t = \frac{\hW_{j}}{\sum_{i \in F_t} \hW_{i}}$ by definition. 
	We can conclude
	\begin{align*}
		\frac{w(S(j))}{W(t)} = \frac{w(S(j))}{\sum_{i \in F_t} w(S(i))} &= \frac{\hW_{j} \cdot \frac{w(S(j))}{\hW_{j}}}{\sum_{i \in F_t} \hW_{i} \cdot \frac{w(S(i))}{\hW_{i}}}\\
		&\le \frac{\left(\max_{v \in J} \frac{w(S(v))}{\hW_{v}}\right) \cdot \hW_{j}}{\left(\min_{v \in J} \frac{w(S(v))}{\hW_{v}}\right) \cdot \sum_{i \in F_t} \hW_{i}}\\
		&= \max_{v \in J} \frac{\hW_{v}}{w(S(v))} \cdot \max_{v \in J} \frac{w(S(v))}{\hW_{v}} \cdot  \frac{\hW_{j}}{\sum_{i \in F_t} \hW_{i}}\\
		&= \eta \cdot R_j^t .
	\end{align*}
	
\end{proof}

\section{Static Weight Order}\label{app:weight-order}

\ThmStaticOrder*

\begin{proof}
	Recall that $P$ denotes the sum over all job processing times in the instance.
	For a subset of chains~$S$, let~$\opt(S)$ denote the optimal objective value for the subinstance induced by~$S$. For a single chain~$c$,~$\opt(c)$ is just the cost for processing chain~$c$ with rate~$1$ on a single machine.  Clearly, $\opt(S) \ge \sum_{c \in S} \opt(c)$.
	Let~$\alg(S)$ denote the sum of weighted completion times of the jobs that belong to chains in~$S$ in the schedule computed by the algorithm.		
	
	In the first part of the proof, we assume that all chains~$c_i$ have a total processing time of at most~$\sqrt{P}$. This only decreases the objective value of~$\opt$. For~$\alg$, we will analyze the additional cost caused by longer chains afterwards.
	In a sense, we assume that~$\alg$, for each chain~$c$, has to pay all weight that appears after~$\sqrt{P}$ processing times units of the chain two times: Once artificially  after exactly~$\sqrt{P}$ time units of the chain have been processed and once at the point during the processing where the weight actually appears. This assumption clearly only increases~$\alg$.
	In the first part of the proof, we analyze only the artificial cost for such weights and ignore the actual cost. In the context of our algorithm this is equivalent to assuming the chains have total processing times of at most~$\sqrt{P}$. In the second part of the proof we will analyze the actual cost for the jobs that appear after~$\sqrt{P}$ time units in their chain.

	\paragraph{First Part.}
	Assume~$c_1 \hpreceq_0 c_2 \hpreceq_0 \ldots \hpreceq_0 c_\numc$. Therefore, the algorithm processes chain~$c_i$ with rate~$(H_\numc \cdot i)^{-1}$.
	This directly implies~$\alg(c_i) = H_\numc \cdot i \cdot \opt(c_i)$ and, thus,
	$$
	\alg = \sum_{i=1}^\numc H_\numc \cdot i \cdot \opt(c_i).
	$$
	
	Let~$\chains_{k} = \{c_1,\ldots,c_{k}\}$ for every~$k \in [\numc]$. We first analyze~$\alg(\chains_{3 \cdot \largestinv(\epsilon)})$.
	For the chains in~$\chains_{3 \cdot\largestinv(\epsilon)}$, we get
	\begin{align*}
		\alg(\chains_{3\cdot\largestinv(\epsilon)}) &= \sum_{i = 1}^{3\cdot\largestinv(\epsilon)} H_\numc \cdot i \cdot \opt(c_i)\\
		&\le H_{\numc} \cdot 3 \cdot \largestinv(\epsilon) \cdot \sum_{i = 1}^{3 \cdot\largestinv(\epsilon)} \opt(c_i)\\
		&\le H_{\numc} \cdot 3 \cdot \largestinv(\epsilon) \cdot \opt,
	\end{align*}
	meaning that, for~$\chains_{3 \cdot\largestinv(\epsilon)}$, we achieve the desired competitive ratio. 
	
	Next, consider the chains in~$\chains \setminus \chains_{3 \cdot\largestinv(\epsilon)}$, i.e., the chains~$c_i$ with~$i > 3 \cdot \largestinv(\epsilon)$.
	To analyze the cost for these chains~$C_i$, we continue by lower bounding~$\opt(c_i)$.
	To that end, consider~$\opt(\chains_i)$.
	The definition of~$\largestinv(\epsilon)$ implies that there are at most~$\largestinv(\epsilon)$ chains~$c_j \in \chains_{i}$ with~$W_{c_i} \geq (1 + \epsilon) W_{c_j}$.
	For all other chains~$c_j$ in~$\chains_i$, we have~$\frac{ W_{c_i}}{(1 + \epsilon)} <  W_{c_j}$.
	Thus, there are~$i - \largestinv(\epsilon)$ chains in~$\chains_i$ with a weight of at least~$\frac{ W_{c_i}}{(1 + \epsilon)}$. Since we consider chains with~$i> 3 \cdot \largestinv(\epsilon)$, it holds~$i - \largestinv(\epsilon) \geq 1$.
	We can lower bound~$\opt(\chains_{i})$ by assuming that all such chains consist only of a single job with weight~$\frac{ W_{c_i}}{(1 + \epsilon)}$ and ignoring the up-to~$\largestinv(\epsilon)$ other chains.
	These assumptions only decrease~$\opt(\chains_{i})$. Since in this relaxation all jobs have an equal weight and length, an optimal solution for it processes the jobs in an arbitrary order, giving
	\begin{align*}  
	\opt(\chains_{i}) &\ge \sum_{j=1}^{i-\largestinv(\epsilon)} j \cdot \frac{W_{c_i}}{1+\epsilon}\\
	&= \frac{(i-\largestinv(\epsilon)+1) \cdot (i-\largestinv(\epsilon)) \cdot W_{c_i}}{2 \cdot (1+\epsilon)}\\
	&= \frac{((i+1)\cdot i + \largestinv(\epsilon)^2 - 2 \cdot i \cdot \largestinv(\epsilon)-\largestinv(\epsilon)) \cdot W_{c_i}}{2 \cdot (1+\epsilon)}\\
	&\ge \frac{(i+1)\cdot i  - 3\cdot i \cdot \largestinv(\epsilon)}{2\cdot(1+ \epsilon)} \cdot W_{c_i}.
	\end{align*}

	Since we still assume that each chain has a total processing time of at most~$\sqrt{P}$, we can observe~$\opt(c_i) \le \sqrt{P} \cdot W_{c_i}$. This yields:
	\begin{align*}
		&\frac{2\cdot(1+ \epsilon)}{(i+1)\cdot i  - 3\cdot i \cdot \largestinv(\epsilon)} \cdot \sqrt{P} \cdot \opt(\chains_i)  \ge \opt(c_i) .\\
	\end{align*}
	We can therefore conclude
	\begin{align*}
	\alg(\chains\setminus \chains_{3\cdot\largestinv(\epsilon)}) &= \sum_{i = 3 \cdot \largestinv(\epsilon) + 1}^\numc H_\numc \cdot i \cdot \opt(c_i)\\
	&\le \sum_{i = 3 \cdot \largestinv(\epsilon) + 1}^\numc H_\numc \cdot \frac{2 \cdot (1+\epsilon)}{(i+1)-3\cdot \largestinv(\epsilon)} \cdot \sqrt{P} \cdot \opt(\chains_{i})\\
	&\le 2 \cdot (1+\epsilon) \cdot  H_\numc  \cdot \sqrt{P} \cdot \opt \sum_{i = 3 \cdot \largestinv(\epsilon) + 1}^\numc \frac{1}{(i+1)-3\cdot \largestinv(\epsilon)} \\
	&\le 2 \cdot (1+\epsilon) \cdot  H_\numc \cdot H_{\numc- 3 \largestinv(\epsilon) + 1}  \cdot \sqrt{P} \cdot \opt \\
	&\le 2 \cdot (1+\epsilon) \cdot  H_\numc^2  \cdot \sqrt{P} \cdot \opt . \\
	\end{align*}
	
	We can finish the first part of the proof by combining the bounds for~$\alg(\chains\setminus\chains_{3 \cdot \largestinv(\epsilon)})$ and~$\alg(\chains_{3 \cdot \largestinv(\epsilon)})$:
	\begin{align*}
		\alg(\chains) &= \alg(\chains\setminus\chains_{3 \cdot \largestinv(\epsilon)}) + \alg(\chains_{3 \cdot \largestinv(\epsilon)})\\
		&\le 5 \cdot H_\numc^2 \cdot \sqrt{P} \cdot \max\{1+\epsilon, \largestinv(\epsilon)\} \cdot \opt.
	\end{align*}
	
	\paragraph{Second Part.} It remains to analyze the additional cost incurred by chains with a total processing time of more than~$\sqrt{P}$.
	To that end, consider the set~$J_L$ of jobs that, in any schedule, cannot be started before~$\sqrt{P}$ time units have past. For a job~$j \in J_L$, the predecessors of~$j$ in the chain of~$j$ must have a total processing time of at least~$\sqrt{P}$. 

	Let~$\alg(J_L)$ and~$\opt(J_L)$ denote the weighted completion times of the jobs in~$J_L$ in the optimal solution and the schedule computed by~$\alg$, respectively.
	Then,
	$$
	\frac{\alg(J_L)}{\opt(J_L)} \le \frac{\sum_{j \in J_L} P \cdot w_j}{\sum_{j \in J_L} \sqrt{P} \cdot w_j} = \sqrt{P}.
	$$
	Thus, the additional cost of the jobs in~$J_L$ asymptotically does not worsen the competitive ratio.
\end{proof}

\section{Extension to Parallel Machines}
\label{app:multiple}

We generalize in this section our main results for weight value and adaptive weight order predictions (\Cref{theorem:chain-robin,theorem:out-forests-rr,thm:adaptive-weight-order}) to parallel identical machines. Compared to the single machine, the competitive ratios of these results only increase additively by~$2$. Formally, we consider the scheduling problem where we are given~$m$ parallel identical machines and an algorithm can again assign at any time~$t$ to every front job~$j \in F_t$ a rate~$R_j^t \in [0,1]$. Opposed to the single machine model, the rates now have to satisfy~$\sum_j R_j^t \leq m$ at any time~$t$. Using McNaughton's wrap-around-rule~\cite{mcnaughton1959scheduling}, we can transform these rates to an actual schedule that processes any job at any point in time on at most one machine.

We prove the following generic result, which is a generalization of \Cref{thm:rates-algo}.

\begin{theorem}\label{thm:algo-rates-parallel}
If an algorithm for scheduling weighted jobs with online out-forest precedence constraints on parallel identical machines satisfies at any time~$t$ and for every~$j \in F_j$ that 
\[
	R_j^t < 1 \Longrightarrow m \cdot \frac{w(S(j))}{W(t)} \leq \rho \cdot R_j^t,
\]
where~$R_j^t$ denotes the processing rate of~$j$ at time~$t$, it is at most~$2 + 4\rho$-competitive for minimizing the total weighted completion time.
\end{theorem}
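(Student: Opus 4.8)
The goal is to adapt the dual-fitting proof of \Cref{thm:rates-algo} to the $m$-machine setting, accounting for two sources of slack: the loss $\opt_\optspeed \leq \optspeed \opt$ from running at reduced speed, and a new additive loss stemming from jobs that are processed at the maximum rate $R_j^t = 1$ (these are exactly the jobs for which the rate condition is not assumed to hold). The plan is to write the natural LP relaxation for $m$ machines -- identical to \eqref{lp} except that the second (rate) constraint becomes $\sum_j \optspeed \cdot x_{j,t} \leq m$ -- take its dual, and reuse essentially the same dual assignment as before: $\dualSa_{j,s} = w_j$ for $s \leq C_j$, $\dualSb_t = \frac{1}{\kappa} W(t)$, and $\dualSc_{t,j'\to j} = w(S(j))$ when $j,j' \in U_t$. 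The objective computation (\Cref{lemma:chain-robin-dual-objective}) carries over verbatim, giving dual objective $(1-\frac1\kappa)\alg$.

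\textbf{Feasibility.} The core of the argument is re-deriving \Cref{lemma:chain-robin-dual-feasible}. Exactly as before, the reduction to \eqref{eq:dual-red2} goes through using the out-forest structure (disjointness of the $S(j')$ over direct successors $j'$ of $j$, and uniqueness of the direct predecessor), so the left side of \eqref{eq:dual-red2} again equals $\sum_{s=t_j^*}^{C_j} w(S(j))$, and we must show this is at most $\optspeed \dualSb_t p_j = \frac{\optspeed}{\kappa} p_j W(t)$. Here is where the two cases split. If $R_j^s < 1$ for every $s \in [t_j^*, C_j]$, then the hypothesis gives $m \cdot \frac{w(S(j))}{W(s)} \leq \rho R_j^s$, and since $W$ is non-increasing we chain $\sum_{s=t_j^*}^{C_j} m\frac{w(S(j))}{W(t)} \leq \sum_s m\frac{w(S(j))}{W(s)} \leq \rho\sum_s R_j^s \leq \rho p_j$; rearranging with $\optspeed \geq \kappa\rho/m$... but wait, we want $\optspeed$ not to shrink, so more carefully we get $\sum_{s} w(S(j)) \leq \frac{\rho}{m} p_j W(t) \leq \frac{\rho}{m}\kappa p_j \dualSb_t \leq \optspeed p_j \dualSb_t$ provided $\optspeed \geq \kappa\rho/m$, which is implied by $\optspeed \geq \kappa\rho$. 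So in this "clean" case feasibility holds with $\optspeed = \kappa\rho$.

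\textbf{Handling saturated jobs -- the main obstacle.} The difficulty is the case where $R_j^s = 1$ for some $s \in [t_j^*, C_j]$: then the rate hypothesis says nothing, and the inequality $\sum_s w(S(j)) \leq \optspeed p_j \dualSb_t$ may simply fail. The plan to absorb this is to charge such violations to a separate, "makespan-like" term rather than to the LP. Concretely, I would split each job's completion time into the portion during which it (or rather the front job of its tree) runs at full rate and the portion during which it runs at reduced rate; the full-rate portion contributes at most the makespan-type quantity $\sum_j w_j \cdot (\text{total time some ancestor-front-job of } j \text{ ran at rate } 1)$, and since at any fixed time at most $m$ jobs can run and the total rate is $\leq m$, a standard argument bounds the weighted sum of such "saturated" time by $O(\opt)$ -- intuitively because whenever a front job runs at rate $1$ the machine is doing "useful" work at full throughput, so this is no worse than an optimal schedule that also must do that work. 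More precisely, I expect one shows that the part of $\alg$ attributable to time spent at saturation is at most $2\opt$ (this is the "$+2$" in the statement), while the part attributable to non-saturated time is handled by the LP/dual argument above and is at most $4\rho\cdot\opt$ (via $\kappa = 2$, $\optspeed = 2\rho$, $\frac{\optspeed}{1-1/\kappa} = 4\rho$). Combining, $\alg \leq (2 + 4\rho)\opt$. The delicate point I would need to get exactly right is the bookkeeping that cleanly separates "saturated" from "unsaturated" contributions to $w_j C_j$ for each $j$ -- probably by defining, for each job, the last time before $C_j$ at which its front-job-ancestor is unsaturated, bounding the suffix by the makespan argument and the prefix by the dual -- and verifying that the dual assignment restricted to the unsaturated regime is still feasible for an LP whose optimum is $\leq \opt_\optspeed$. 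I would close by invoking weak duality and \Cref{thm:alg-combination}/\Cref{lemma:chain-robust}-style reasoning is not needed here since robustness is separate; the bound $2+4\rho$ follows directly.
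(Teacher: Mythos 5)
Your high-level plan is the right one, and it matches the paper's: dual-fitting against the $m$-machine LP for the "unsaturated" portion, plus a separate charge of roughly $\opt$ for time spent at rate $1$, combined via $(1+\optspeed)\opt \geq (1-1/\kappa)\alg$ with $\kappa=2$, $\optspeed=2\rho$. However, there are two concrete gaps you would need to close.

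First, your dual variable $\dualSb_t = \frac{1}{\kappa}W(t)$ is off by a factor of $m$. When the primal rate constraint becomes $\sum_j \optspeed x_{j,t} \leq m$, the dual objective becomes $\sum_j \dualVa_j - m\sum_t \dualVb_t$, not $\sum_j \dualVa_j - \sum_t \dualVb_t$. With your choice, the dual objective evaluates to $\alg - \frac{m}{\kappa}\alg$, which is negative for $m > \kappa$, so the claim that the objective lemma "carries over verbatim" is false. You need $\dualSb_t = \frac{1}{m\kappa}W(t)$, and then the feasibility computation changes accordingly: you get the requirement $\optspeed \geq \kappa\rho$ (not $\kappa\rho/m$), which is exactly what lets the final calculation produce $2+4\rho$.

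Second, the mechanism for handling the saturated times is left vague, and the "standard makespan argument via total rate $\leq m$" you gesture at is not the right tool. The clean implementation (the paper's) is to redefine $\dualSa_{j,s}$ so that it equals $w_j$ only at times $s$ when $j$ is \emph{active}, meaning some front ancestor of $j$ is running at rate $<1$, and to zero out $\dualSc_{s,\cdot}$ similarly; then the dual constraint becomes $0 \leq 0$ at inactive times automatically, and no case analysis inside the feasibility proof is needed. The lost objective is then recovered as follows: whenever $j$ is inactive at a time $t \leq C_j$, $j$ or some chain ancestor of $j$ is being processed at rate $1$, so the number of inactive time units before $C_j$ is at most $chain_j$, the total processing time of $j$'s chain prefix including $j$. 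Since $chain_j$ is a precedence lower bound on $C_j$ in \emph{any} schedule, $\sum_j w_j\, chain_j \leq \opt$. This gives $\sum_j \dualSa_j - m\sum_t \dualSb_t + \sum_j w_j\, chain_j \geq (1-\tfrac1\kappa)\alg$, and weak duality plus $\sum_j w_j\, chain_j \leq \opt$ closes the proof. Your "last time before $C_j$ at which the front-ancestor is unsaturated" split is not needed and would be awkward to reconcile with dual feasibility; the active/inactive partition at the level of individual time units is what makes the accounting work cleanly.
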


We first apply this theorem to derive competitive ratios of algorithms with weight value and adaptive weight order predictions for parallel identical machines, and finally prove~\Cref{thm:algo-rates-parallel}.

\subsection{Static Weight Values for Chains}

\begin{algorithm}[tb]
	\caption{WDEQ on Chains}\label{alg:chain-robin-identical}
	\begin{algorithmic}[1]
	\REQUIRE Set of chains $\chains$, initial total weight $W_c$ of every chain $c \in \chains$.
	\STATE $t \gets 0$
	\STATE $W_c(t) \gets W_c$ for every chain $c$.
	\WHILE{$U_t \neq \emptyset$}
	\STATE $C \gets \{c_1,\ldots,c_k\}$
	\STATE $m' \gets m$
	\WHILE{$\exists c \in C$ such that $m' \cdot \frac{W_c(t)}{\sum_c W_c(t)} \geq 1$}
	  \STATE Process the first job $j \in I_t$ in chain $c$ with rate $R_j^t = 1$
	  \STATE $C \gets C \setminus \{c\}, m' \gets m' - 1$
	\ENDWHILE
	\STATE For every chain $c \in C$, process the first job $j$ by $R_j^t = m' \cdot \frac{W_c(t)}{\sum_c W_c(t)}$
	\STATE $t \gets t + 1$
	\STATE If some job $j$ in chain $c$ finished, set $W_c(t) \gets W_c(t) - w_j$
	\ENDWHILE
	\STATE Schedule remaining jobs arbitrarily.
	\end{algorithmic}
\end{algorithm}

Recall \Cref{alg:chain-robin} for a single machine, which assigns rates proportional to a chains remaining total weight. These rates fulfill by definition that they can be scheduled on a single machine, i.e., sum up to at most~$1$. On~$m$ parallel identical machines, we intuitively have~$m$ times as much processing power compared to a single machine. A straightforward approach to exploit this observation would be to scale the single machine rates by a factor of~$m$, i.e., process chain~$c$ with rate~$m \cdot \frac{W_c(t)}{\sum_c W_c(t)}$. These rates clearly some up to at most~$m$. However, a single chain might receive a rate more than~$1$, which is disallowed. To circumvent this issue, we cap such rates at~$1$. The final algorithm (\Cref{alg:chain-robin-identical}) therefore first identifies chains with large remaining weight and assigns them rate~$1$, and then splits the remaining processing power among the remaining chains. For non-clairvoyant scheduling of weighted jobs on parallel identical machines without precedence constraints, this approach is known to be~$2$-competitive~\cite{BeaumontBEM12}.

Using the same observation as for \Cref{alg:chain-robin}, we conclude that every job $j$'s rate, when being less than 1, is at most~$m \cdot \frac{w(S(j))}{W(t)}$. Then, \Cref{thm:algo-rates-parallel} implies immediately the following corollary.

\begin{corollary}
	\Cref{alg:chain-robin-identical} is a non-clairvoyant algorithm for the problem of minimizing the total weighted completion time on~$m$ parallel identical machines with online chain precedence constraints with a competitive ratio of at most~$6$, when given access to accurate static weight predictions.
\end{corollary}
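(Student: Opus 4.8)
The plan is to verify that \Cref{alg:chain-robin-identical} meets the hypothesis of \Cref{thm:algo-rates-parallel} with $\rho = 1$, which immediately yields the competitive ratio $2 + 4\rho = 6$. As in the discussion preceding \Cref{theorem:chain-robin}, correct static weight predictions let the algorithm maintain, for every active chain $c$ with front job $j$, the identity $W_c(t) = w(S(j))$ by subtracting the weights of the already completed jobs of $c$ from the initial $W_c$; hence $W(t) = \sum_{c\text{ active}} W_c(t)$. Moreover, by construction every assigned rate lies in $[0,1]$ (capped chains get exactly $1$, the rest get strictly less, since the inner loop stopped) and the rates sum to at most $m$, so the rate profile is feasible and turns into an actual schedule via McNaughton's rule. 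It therefore remains to lower bound the rate of a front job in a non-capped chain and match it against $m\cdot w(S(j))/W(t)$.

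So fix a time $t$ with $W(t) > 0$ (at times with $W(t)=0$ only weight-zero jobs remain, which do not affect the objective) and a front job $j$ of a chain $c$ whose rate satisfies $R_j^t < 1$; then $c$ lies in the set $C$ of non-capped chains and $R_j^t = m' \cdot W_c(t)/W(C)$, where $q$ is the number of capped chains, $m' = m-q$, and $W(C) = \sum_{c'\in C} W_{c'}(t)$. The crux is the inequality $W(C) \le \tfrac{m'}{m}\, W(t)$. To prove it, I would enumerate the capped chains $d_1,\dots,d_q$ in the order in which the inner loop removes them and set $V_i = W(t) - \sum_{k\le i} W_{d_k}(t)$, the total weight of the not-yet-capped chains after $i$ removals, so $V_0 = W(t)$ and $V_q = W(C)$. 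When $d_{i+1}$ is selected, the loop guard certifies $(m-i)\cdot W_{d_{i+1}}(t)/V_i \ge 1$, i.e.\ $V_i - V_{i+1} = W_{d_{i+1}}(t) \ge V_i/(m-i)$, which rearranges to $V_{i+1} \le V_i\cdot \tfrac{m-i-1}{m-i}$. Telescoping over $i=0,\dots,q-1$ collapses the product to $V_q \le V_0\cdot \tfrac{m-q}{m} = \tfrac{m'}{m}\, W(t)$, as claimed. Consequently $R_j^t = m' W_c(t)/W(C) \ge m' W_c(t)/(\tfrac{m'}{m}W(t)) = m\cdot w(S(j))/W(t)$, so the hypothesis $m\cdot w(S(j))/W(t) \le \rho\cdot R_j^t$ of \Cref{thm:algo-rates-parallel} holds with $\rho=1$, and the theorem gives $6$-competitiveness.

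The main obstacle is the telescoping lemma above: one must be careful to use only the per-removal inequality produced by the loop guard — which holds regardless of which over-full chain the loop happens to pick — rather than any global statement about which chains end up capped, and to dispatch cleanly the degenerate cases where every active chain is capped (then $C=\emptyset$ and there is nothing to check) or where some active chain has zero remaining weight (it receives rate $0$ and the hypothesis is trivially satisfied). The remaining bookkeeping — feasibility of the rate profile, applicability of McNaughton's rule, and the internal ingredients of \Cref{thm:algo-rates-parallel} — is routine.
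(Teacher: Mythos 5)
Correct, and it is the same route as the paper: verify the rate condition of \Cref{thm:algo-rates-parallel} with $\rho=1$ and apply the theorem. Where the paper merely asserts the key bound on the non-capped rates in a single sentence, your telescoping argument (each capped-chain removal at inner-loop index $i$ multiplies the remaining front-job weight by at most $\tfrac{m-i-1}{m-i}$, so after $q$ removals at most a $\tfrac{m'}{m}$ fraction of $W(t)$ remains, yielding $R_j^t \ge m\cdot w(S(j))/W(t)$ for every non-capped front job) supplies exactly the justification the paper skips --- and, incidentally, indicates that the paper's phrase ``at most $m\cdot\frac{w(S(j))}{W(t)}$'' should read ``at least''.
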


\subsection{Adaptive Weight Values for Out-Forests}

In a similar fashion, we can extend the algorithm for adaptive weight values on out-forests to the parallel machine environment, and thereby generalize \Cref{theorem:out-forests-rr}.
To this end, we consider the algorithm which processes every front job $j \in F_t$ with rate $R_j^t = \min\left\{ 1, m \frac{\hW_j}{\sum_{j' \in F_t} \hW_{j'}} \right\}$. This immediately implies that if $R_j^t < 1$, then $R_j^t = m \frac{\hW_j}{\sum_{j' \in F_t} \hW_{j'}} = m \frac{w(S(j))}{W(t)}$ for correct predictions.
We derive the following corollary via \Cref{thm:algo-rates-parallel} with $\rho = 1$. 

\begin{corollary}
	    Given correct weight predictions, there exists a non-clairvoyant $6$-competitive algorithm for minimizing the total weighted completion time on $m$ parallel identical machines with online out-forest precedence constraints .	
\end{corollary}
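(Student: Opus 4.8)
The plan is to mimic the dual-fitting proof of \Cref{thm:rates-algo} and to concentrate all the new difficulty in the treatment of jobs that the algorithm runs at the maximal rate $1$. First I would write the $m$-machine version of \eqref{lp}: the same objective and the same precedence constraints, but with the capacity constraint $\sum_j\optspeed\,x_{j,t}\le m$, i.e.\ a speed-$\tfrac1\optspeed$ resource of multiplicity $m$. Slowing every machine down by a factor $\optspeed$ multiplies all completion times of a fixed schedule by $\optspeed$, so $\opt_\optspeed\le\optspeed\cdot\opt$ still holds, and the dual is the obvious analogue of \eqref{dual} with $m\sum_t\dualVb_t$ replacing $\sum_t\dualVb_t$ in the objective. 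For the dual assignment I would reuse \Cref{sec:chains-consistent} almost verbatim: $\dualSa_j=w_jC_j$ and $\dualSc_{t,j'\to j}=w(S(j))$ whenever $j,j'\in U_t$ (and $0$ otherwise), but set $\dualSb_t=\tfrac1\kappa\cdot\tfrac{W(t)}{m}$ so that the extra $m$ in the objective is absorbed; exactly as in \Cref{lemma:chain-robin-dual-objective}, the dual objective then equals $\bigl(1-\tfrac1\kappa\bigr)\alg$.

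Dual feasibility is the heart of the matter. The reduction of \Cref{lemma:chain-robin-dual-feasible} goes through unchanged — out-forests still give a job a unique unfinished predecessor, so all contributions of times $s<t_j^*$ cancel — and it remains to prove $\sum_{s=t_j^*}^{C_j}w(S(j))\le\optspeed\,\dualSb_t\,p_j=\tfrac{\rho}{m}\,W(t)\,p_j$ for every job $j$ and every $t\le t_j^*$. I would split $[t_j^*,C_j)$ into the slots with $R_j^s=1$ and those with $R_j^s<1$. On the latter the hypothesis $R_j^t<1\Rightarrow m\,\tfrac{w(S(j))}{W(t)}\le\rho R_j^t$ is exactly the single-machine estimate scaled by $m$, and together with the monotonicity of $W(\cdot)$ it already uses up essentially the whole budget $\tfrac{\rho}{m}W(t)p_j$. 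Hence the rate-$1$ slots create a genuine overflow of order $\tau_j\cdot W(t)$, where $\tau_j\le p_j$ is the amount of time $j$ runs at rate $1$ — an effect with no analogue on a single machine, and exactly the source of the additive term in $2+4\rho$.

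Absorbing this overflow is the step I expect to be the real obstacle. The quantity to control is the total weighted cost attributable to rate-$1$ processing, and the target is to bound it by $O(\opt)$: this should follow from $\sum_j w_jp_j\le\opt$ (valid on any number of machines because $C_j\ge p_j$ for every job) together with the structural fact that a job run at rate $1$ occupies a whole machine exactly as an optimal schedule would. Concretely I would peel the rate-$1$ mass off before dual-fitting: write $\alg=\sum_j w_j(C_j-\tau_j)+\sum_j w_j\tau_j$, bound the second sum by $\opt$, and run the dual fit only against $\sum_j w_j(C_j-\tau_j)$, so that the rate-$1$ slots no longer contribute to the left-hand side of the feasibility inequality. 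The delicate point I would check most carefully is that this accounting stays free of $m$: naively enlarging $\dualSb_t$ at \emph{all} times preceding a rate-$1$ slot pays an $m$-dependent price, so the rate-$1$ mass really must be removed from $\alg$ rather than paid for inside the dual. Finally, weak duality together with $\opt_\optspeed\le\optspeed\cdot\opt$, the choice $\optspeed=\rho\kappa$ and $\kappa=2$ yields the $4\rho$ term, the rate-$1$ bookkeeping yields the additive $2$, and McNaughton's wrap-around rule (already invoked before the theorem) turns the resulting rate assignment into an actual schedule on $m$ identical machines without further loss.
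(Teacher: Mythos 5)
Your overall plan is the right one and matches the paper's proof of the underlying rate theorem (Theorem~\ref{thm:algo-rates-parallel}): cap the WRR rates at $1$, build the $m$-machine LP with capacity $m$, set $\dualSb_t = \frac{1}{m\kappa}W(t)$, split $[t_j^*,C_j)$ into rate-$<1$ and rate-$1$ slots, handle the former with the rate hypothesis, and charge the rate-$1$ mass to a second lower bound on $\opt$ of the form $\sum_j w_j\cdot(\text{something} \le C_j)$. But there is a genuine gap in the peeling step. Removing $\tau_j$ from $\dualSa_j$ kills the $\dualSa_{j,s}$ contribution at rate-$1$ slots, yet your $\dualSc_{s,j\to j'}$ is still defined on $U_s$, so at a slot with $R_j^s=1$ the left-hand side of the feasibility constraint still receives $\sum_{(j,j')\in E}\dualSc_{s,j\to j'} = w(S(j)) - w_j$. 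That residual is exactly the ``overflow of order $\tau_j W(t)$'' you yourself identified, and it is not cancelled by subtracting $w_j\tau_j$ from the objective side: it can be a constant fraction of $W(t)$ per slot, forcing $\optspeed = \Omega(m)$. The peeling must also silence the $\dualSc$'s, not just $\dualSa_j$.

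The paper's fix is to make the whole dual assignment ``activity-aware''. It declares a job $j\in U_t$ \emph{inactive} at time $t$ if the unique front job $j'\in F_t$ with $j\in S(j')$ has $R_{j'}^t=1$, sets $\dualSa_{j,s}=w_j$ only for active $(j,s)$, and crucially sets $\dualSc_{s,j'\to j}=w(S(j))$ only when \emph{both} $j$ and $j'$ are active at $s$. Because a rate-$1$ front job renders its entire out-subtree inactive, all the edge variables inside that subtree vanish for those slots, so the left-hand side of~\eqref{constr:dual-identical} contributes $0$ there instead of $w(S(j))-w_j$. As a secondary consequence, the paper charges the lost $\dualSa$ mass against $\sum_j w_j\cdot chain_j \le \opt$ rather than $\sum_j w_j p_j$: once a job's inactivity is caused by \emph{any} ancestor running at rate $1$, not just $j$ itself, the total inactive time of $j$ is bounded by $chain_j$ (the total processing along $j$'s chain of ancestors including $j$), which is at least $p_j$. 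Your $\tau_j\le p_j$ bound only covers $j$'s own rate-$1$ time and would not be enough under the activity-based accounting. Patching your proposal therefore requires both (i) redefining $\dualSc$ on active pairs, and (ii) replacing $p_j$ by $chain_j$ in the lower bound on $\opt$; with those two changes the rest of your outline goes through and yields the stated $2+4\rho=6$ bound.
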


\subsection{Adaptive Weight Order for Out-Forests}
For adaptive weight order predictions, we essentially use the exact same trick from the previous section and derive \Cref{alg:weight-order-identical}. 

\begin{algorithm}[tb]
	\caption{Adaptive weight order algorithm for parallel identical machines}\label{alg:weight-order-identical}
	\begin{algorithmic}
		\REQUIRE Time~$t$, front jobs $F_t$, adaptive order~$\hpreceq_t$
		\STATE Process every $j \in F_t$ with rate~$R_j^t = \min \left\{1 , \frac{m}{H_{\abs{F_t}} \cdot i_j} \right\}$, where~$i_j$ is the position of $j$ in~$\hpreceq_t$.
	\end{algorithmic}
\end{algorithm}

Observe that using the proof of \Cref{thm:adaptive-weight-order}, we conclude at any time~$t$ and for every~$j \in F_t$ that if~$R_j^t < 1$, then
\[
	m \cdot \frac{w(S(j))}{W(t)} \leq m \cdot H_\numc \cdot \max\{1 + \epsilon, \largestinv(\epsilon)\} \cdot \frac{1}{H_{\abs{F_t}} \cdot i_j} = H_\numc \cdot \max\{1 + \epsilon, \largestinv(\epsilon)\} \cdot R_j^t.
\]
Using \Cref{thm:algo-rates-parallel}, we conclude our result for adaptive weight order predictions.

\begin{corollary}
	For any~$\epsilon > 0$, \Cref{alg:weight-order-identical} has a competitive ratio of at most~$2 + 4 H_\numc \cdot \max\{1 + \epsilon, \largestinv(\epsilon)\}$ for minimizing the total weighted completion time on~$m$ parallel identical machines with online out-forest precedence constraints. \end{corollary}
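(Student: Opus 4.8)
The plan is to reduce the statement to \Cref{thm:algo-rates-parallel}, exactly mirroring how \Cref{theorem:out-forests-rr} and \Cref{thm:adaptive-weight-order} were obtained from their single-machine rate theorems. First I would check admissibility of the rates of \Cref{alg:weight-order-identical}: each $R_j^t = \min\{1, m/(H_{\abs{F_t}} i_j)\}$ is at most $1$ by construction, and, ordering $F_t = \{j_1 \hpreceq_t \cdots \hpreceq_t j_{\abs{F_t}}\}$, we get $\sum_{j \in F_t} R_j^t \le \sum_{i=1}^{\abs{F_t}} \frac{m}{H_{\abs{F_t}}\, i} = m$, so these rates are realizable on $m$ identical machines via McNaughton's wrap-around rule.

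The core step is to verify the hypothesis of \Cref{thm:algo-rates-parallel} with $\rho = H_\numc \cdot \max\{1+\epsilon,\largestinv(\epsilon)\}$, that is, $R_j^t < 1 \implies m\cdot \tfrac{w(S(j))}{W(t)} \le \rho\, R_j^t$. I would reuse the order-based counting estimate from the proof of \Cref{thm:adaptive-weight-order} unchanged: that estimate uses only the predicted order $\hpreceq_t$, the pairwise disjointness of the sets $S(\cdot)$ over front jobs of an out-forest, and the definition of $\largestinv(\epsilon)$ — it never references the algorithm's rates — and it yields $\tfrac{w(S(j))}{W(t)} \le \tfrac{1}{i_j}\max\{1+\epsilon,\largestinv(\epsilon)\}$. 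Now whenever $R_j^t < 1$ we have $R_j^t = \tfrac{m}{H_{\abs{F_t}} i_j} \ge \tfrac{m}{H_\numc i_j}$, since $F_t$ is an anti-chain in the out-forest and hence $\abs{F_t} \le \numc$; thus $\tfrac{m}{i_j} \le H_\numc R_j^t$, and multiplying the previous bound by $m$ gives $m\cdot\tfrac{w(S(j))}{W(t)} \le \tfrac{m}{i_j}\max\{1+\epsilon,\largestinv(\epsilon)\} \le \rho\, R_j^t$.

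Feeding this into \Cref{thm:algo-rates-parallel} immediately gives a competitive ratio of at most $2 + 4\rho = 2 + 4 H_\numc \cdot \max\{1+\epsilon,\largestinv(\epsilon)\}$, which is the claim. I do not expect a genuine obstacle here: the only subtlety is that the rate condition of \Cref{thm:algo-rates-parallel} is demanded only in the regime $R_j^t < 1$ — exactly the regime where the $\min\{1,\cdot\}$ cap is inactive — so the truncation introduced for parallel machines costs nothing in the analysis, just as it does not for \Cref{alg:chain-robin-identical}. The bulk of the work has in effect already been localized in \Cref{thm:algo-rates-parallel} and in the rate analysis of \Cref{thm:adaptive-weight-order}, and both transfer here without modification.
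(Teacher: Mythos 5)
Your proposal is correct and follows the same route as the paper: reuse the order-based inversion bound $\tfrac{w(S(j))}{W(t)} \le \tfrac{1}{i_j}\max\{1+\epsilon,\largestinv(\epsilon)\}$ from the proof of \Cref{thm:adaptive-weight-order} (which indeed never references the rates), observe that when $R_j^t<1$ the cap is inactive so $\tfrac{m}{i_j} \le H_\numc R_j^t$, and then apply \Cref{thm:algo-rates-parallel} with $\rho = H_\numc \max\{1+\epsilon,\largestinv(\epsilon)\}$. Your explicit check that the capped rates sum to at most $m$ is a small addition the paper leaves implicit, but otherwise the argument is identical.
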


\subsection{Proof of \Cref{thm:algo-rates-parallel}}

The proof is by dual-fitting and is inspired by \cite{GargGKS19}. Fix an instance, an algorithm which fulfills the stated property, and this algorithm's schedule. We first introduce a linear programming relaxation~\cite{GargGKS19} for our problem on parallel identical machines which are running at lower speed~$\frac{1}{\optspeed}$, for some~$\optspeed \geq 1$. By denoting the value of an optimal solution for the problem with speed~$\frac{1}{\optspeed}$, we conclude~$\opt_\optspeed \leq \optspeed \cdot \opt$.

\begin{alignat}{3}
	\text{min} \quad &\sum_{j,t} w_{j} \cdot t \cdot \frac{x_{j,t}}{p_j} \tag{$\text{P-LP}_\optspeed$}\label{lp-identical} \\ 
	\text{s.t.}  \quad & \sum_{t} \frac{x_{j,t}}{p_{j}} \geq 1   &&\forall j \notag \\
	& \sum_{j} \optspeed \cdot x_{j,t} \leq  m   &&\forall t \notag \\
	& \sum_{s \leq t} \frac{x_{j,s}}{p_j} \geq \sum_{s \leq t} \frac{x_{j',s}}{p_{j'}} && \quad \forall t, (j,j') \in E \notag \\
	& x_{jt} \geq 0 &&\forall j,t \notag 
\end{alignat}

The dual of \eqref{lp-identical} can be written as follows.
  
\begin{alignat}{3}
	\text{max} \quad &\sum_{j} \dualVa_j - m \sum_{t} \dualVb_{t} \tag{$\text{P-DLP}_\optspeed$}\label{dual-identical} \\ 
	\text{s.t.}  \quad &\dualVa_j - w_j \cdot t + \sum_{s \geq t} \left( \sum_{(j,j') \in E} \dualVc_{s,j \to j'} - \sum_{(j', j) \in E} \dualVc_{s,j' \to j} \right) \leq \optspeed \cdot \dualVb_{t} \cdot p_j \qquad &\forall j,t \label{constr:dual-identical} \\ 
	&\dualVa_j, \dualVb_{t}, \dualVc_{t,j \to j'}  \geq 0 \qquad &\forall j,t, (j,j') \in E \notag
\end{alignat}

On a single machine, we use the dual linear program to compare the algorithm' objective to the optimal objective at any time. Unfortunately, showing feasibility of the dual variables which encode the algorithm's schedule crucially require to satisfy~$m \frac{W_{c_j}(t)}{W(t)} \leq \rho \cdot R_j^t$ for some~$\rho \geq 1$ at \emph{any} time and for \emph{every} job~$j$. However, if there is a heavy weight chain, we can only assign it a rate at of at most~$1$. Thus, in general we cannot satisfy this inequality for any constant~$\rho$. To resolve this issue, we instead use a second lower bound for the optimal objective value, and bound the total objective of the algorithm incurred by rates equal to~$1$ against it. This intuitively explains the slightly larger competitive ratio compared to the single machine setting.

We say that a job~$j \in U_t$ \emph{active} at time~$t$  if there exists~$j' \in F_t$ such that~$j \in S(j')$ and~$R_{j'}^{t} < 1$, otherwise it is \emph{inactive}. We remark that $j$ being active at time $t$ implies $t \le C_j$.
For every time~$t$ we denote the set of active jobs by~$A_t$. Note that~$A_t \subseteq U_t$. Intuitively, we will use the dual linear program as lower bound for all jobs in~$A_t$, and the following lower bound every time a job is inactive. For every job~$j$, let~$chain_j$ be the total processing time of a chain's prefix in~$G$ until job~$j$ (including~$j$). Note that~$\opt \geq \sum_j w_j \cdot chain_j$.
 
Let~$\kappa > 1$ be a constant which we fix later. 
We define a dual assignment of \eqref{dual-identical} based on the algorithm's schedule:

\begin{itemize}
	\item~$\dualSa_j = \sum_{s  \ge 0} \dualSa_{j,s}$, where~$\dualSa_{j,s} \begin{cases}
	  w_j \text{ if } j \text{ is active at time } s \\
	  0 \text{ otherwise} 
	\end{cases}$ for every job~$j$,
	\item~$\dualSb_{t} = \frac{1}{m \cdot \kappa} \cdot W(t)$ for every time~$t$, and
	\item 
	$\dualSc_{t, j' \to j} = \begin{cases}
	  0 \quad \text{if } j \notin A_t \text{ or } j' \notin A_t \\
	  W(S(j)) 
	\end{cases}$ for every time~$t$ and edge~$(j', j) \in E$.
  \end{itemize}
  
\begin{lemma}\label{lemma:chain-robin-id-dual-objective}
	$\sum_{j} \dualSa_j - m \sum_{t} \dualSb_{t} + \sum_j w_j \cdot chain_j \geq (1 - \frac{1}{\kappa}) \cdot \alg$.
\end{lemma}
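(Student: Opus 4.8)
The plan is to adapt the single-machine objective computation (\Cref{lemma:chain-robin-dual-objective}) to the parallel setting, where the extra $\sum_j w_j \cdot chain_j$ term compensates for the fact that the $\dualSa_j$ variables now only accumulate weight while job $j$ is \emph{active}, rather than up to $C_j$. First I would rewrite the left-hand side using the definitions: since $\dualSb_t = \frac{1}{m\kappa} W(t)$, we have $m \sum_t \dualSb_t = \frac{1}{\kappa} \sum_t W(t) = \frac{1}{\kappa} \sum_j w_j C_j = \frac{1}{\kappa} \alg$, using that $w_j$ contributes to $W(t)$ exactly for $t \le C_j$ (as in the single-machine proof). So it remains to show $\sum_j \dualSa_j + \sum_j w_j \cdot chain_j \ge \alg = \sum_j w_j C_j$, i.e., it suffices to prove for each job $j$ individually that
\[
	\dualSa_j + w_j \cdot chain_j \ge w_j C_j,
\]
equivalently $\dualSa_j \ge w_j (C_j - chain_j)$.

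The key step is to bound $\dualSa_j = w_j \cdot |\{s : j \text{ active at } s\}|$ from below. By definition $j$ is active at time $s$ iff there is a front job $j' \in F_s$ with $j \in S(j')$ and $R_{j'}^s < 1$; note that for such $s$ we automatically have $s \le C_j$ and also $j'$ is the current front job of $j$'s chain (in an out-forest, $j$ lies in $S(j')$ for the unique front ancestor $j'$). So I would partition the times $s \in [0, C_j)$ into those where $j$'s front ancestor is processed at rate exactly $1$ and those where it is processed at rate $<1$ (the active times). During the rate-$1$ times, an entire unit of processing is dedicated to advancing $j$'s chain toward $j$; since the total processing along $j$'s chain-prefix up to and including $j$ is $chain_j$, there can be at most $chain_j$ such time steps. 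Hence the number of active times in $[0,C_j)$ is at least $C_j - chain_j$, giving $\dualSa_j \ge w_j(C_j - chain_j)$ as desired. Summing over all $j$ yields $\sum_j \dualSa_j \ge \alg - \sum_j w_j \cdot chain_j$, and combined with $m\sum_t \dualSb_t = \frac1\kappa \alg$ this is exactly the claimed inequality.

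The main obstacle I anticipate is making the counting argument for the rate-$1$ time steps fully rigorous: one must be careful that "rate $1$ on $j$'s front ancestor" steps are disjoint across the jobs along $j$'s chain and that they collectively consume at most $chain_j$ units — and in particular that in an out-forest the notion of "$j$'s chain-prefix" and "$j$'s front ancestor at time $s$" are well-defined (each job has a unique directed path of predecessors). One should also double-check the edge case where $j$ is a root or is never processed at rate $<1$, and confirm that the definition of active does not accidentally include times $s \ge C_j$ — which it does not, since a completed job is no longer in any $S(j')$ for a \emph{front} job $j'$. Once these bookkeeping points are settled, the per-job inequality and the summation are routine.
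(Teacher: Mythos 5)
Your proposal is correct and takes essentially the same approach as the paper: compute $m\sum_t \dualSb_t = \frac{1}{\kappa}\alg$ directly, then show per job that the inactive times before $C_j$ number at most $chain_j$ (since each such step pushes one full unit of processing onto $j$'s unique chain prefix), giving $w_jC_j \le \dualSa_j + w_j\cdot chain_j$. The bookkeeping concerns you raise (uniqueness of the front ancestor in an out-forest, $s \ge C_j$ not being active) are handled exactly as you anticipate and do not cause trouble.
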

  
\begin{proof}
  Since the weight~$w_j$ of a job~$j$ is contained in~$W(t)$ if~$t \leq C_j$, we conclude~$\sum_t \dualSb_{t} = \frac{1}{m \cdot \kappa}\alg$. Now consider a job~$j$ and a time~$t \leq C_j$. If~$j$ is active at time~$t$, by definition~$\dualSa_{j,s} = w_j$. Otherwise,~$j$ or some predecessor of~$j$ is being processed with rate~$1$. Hence, we conclude that the number of times before~$C_j$ when~$j$ is inactive is at most~$chain_j$. Therefore~$w_j C_j \leq \dualSa_j +  w_j \cdot chain_j$ and~$\alg \leq \sum_j \dualSa_j + \sum_j w_j \cdot chain_j$.
\end{proof}
  
  \begin{lemma}\label{lemma:chain-robin-id-dual-feasible}
	Assigning~$\dualVa_j = \dualSa_j$,~$\dualVb_{t} = \dualSb_{t}$ and~$\dualVc_{t,j \to j'} = \dualSc_{t,j \to j'}$ gives a feasible solution for~\eqref{dual-identical} if there exists a~$\rho \geq 1$ such that~$\alpha \geq \kappa \cdot \rho$ and the algorithm's rates satisfy at any time~$t$ for every~$j \in F_t$ 
	\begin{equation}
			R_j^t < 1 \Longrightarrow m \cdot \frac{W_{c_j}(t)}{W(t)} \leq \rho \cdot R_j^t. \label{eq:rate-condition-identical}
	\end{equation}
  \end{lemma}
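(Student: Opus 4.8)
The plan is to verify the single dual constraint \eqref{constr:dual-identical} for every fixed job $j$ and time $t$, following closely the structure of the single-machine argument in \Cref{lemma:chain-robin-dual-feasible}, but splitting the time interval $[t,\infty)$ according to whether $j$ is active or inactive. First I would observe that nonnegativity of the $\dualSa,\dualSb,\dualSc$ variables is immediate from their definitions, so only \eqref{constr:dual-identical} needs checking. Using $\dualSa_j - w_j t \le \sum_{s\ge t}\dualSa_{j,s}$ (since $\dualSa_{j,s}=w_j$ whenever $s\le t$ and $j$ active, and the tail only adds more), it suffices to bound
\[
	\sum_{s\ge t}\Bigl(\dualSa_{j,s} + \sum_{(j,j')\in E}\dualSc_{s,j\to j'} - \sum_{(j',j)\in E}\dualSc_{s,j'\to j}\Bigr) \le \optspeed\cdot\dualSb_t\cdot p_j.
\]
As in the single-machine proof, I would argue term-by-term over $s\ge t$: for $s>C_j$ everything is zero; for $s\le C_j$ with $j$ \emph{not yet available}, the unique predecessor $j_1$ of $j$ in the out-forest contributes $\dualSc_{s,j_1\to j}=w(S(j))$ (provided $j_1\in A_s$; if $j_1\notin A_s$ then by definition $j\notin A_s$ either, and all three terms vanish), and disjointness of the successor sets gives $\sum_{(j,j')\in E}\dualSc_{s,j\to j'} - \dualSc_{s,j_1\to j} = -w_j$, so the bracket is $\le w_j - w_j = 0$ (it is exactly $\le 0$ since $\dualSa_{j,s}\le w_j$).

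So the constraint reduces to bounding $\sum_{s=t_j^*}^{C_j}\bigl(\dualSa_{j,s} + \sum_{(j,j')\in E}\dualSc_{s,j\to j'}\bigr)$, where $t_j^*$ is the first time $\ge t$ at which $j$ is available (no unfinished predecessor), so that the incoming $\dualSc$ term is zero. Here is where the active/inactive split enters: for $s\in[t_j^*,C_j)$ with $j$ \emph{active} (i.e.\ $j\in A_s$), we get $\dualSa_{j,s}=w_j$ and, by disjointness of the $S(j')$, $\sum_{(j,j')\in E}\dualSc_{s,j\to j'}=w(S(j))-w_j$, for a total of $w(S(j))$; for $s$ with $j$ \emph{inactive}, $\dualSa_{j,s}=0$ and every outgoing $\dualSc_{s,j\to j'}=0$ as well (since $j\notin A_s$), so the contribution is $0$. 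Hence the left side equals $\sum_{s\in[t_j^*,C_j): j\in A_s} w(S(j))$. Now I would use the hypothesis \eqref{eq:rate-condition-identical}: at each such active time $s$, the front job $v\in F_s$ with $j\in S(v)$ has rate $R_v^s<1$, so $m\cdot w(S(v))/W(s)\le \rho R_v^s$; since $w(S(j))\le w(S(v))$ (as $S(j)\subseteq S(v)$) and $W(s)\le W(t)$ by monotonicity of the total remaining weight, we get $w(S(j))/W(t) \le w(S(v))/W(s) \le \tfrac{\rho}{m} R_v^s$. Summing the rate $R_v^s$ over the (at most $p_j$, and in fact only over times $j$ itself is processed — one should pick $v$ so that $j$ is the front job of its subtree) active slots gives $\sum_{s} R_v^s \le p_j$, so $\sum_{s\in[t_j^*,C_j): j\in A_s} w(S(j)) \le \tfrac{\rho}{m} p_j W(t) = \tfrac{\rho\kappa}{m} p_j \cdot m\dualSb_t = \rho\kappa\, p_j\dualSb_t \le \optspeed\, p_j\dualSb_t$, using $\dualSb_t=\tfrac{1}{m\kappa}W(t)$ and $\optspeed\ge\kappa\rho$. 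This establishes \eqref{constr:dual-identical}.

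The main obstacle I anticipate is making the "sum of rates $\le p_j$ over active slots" step fully rigorous when $j$ is not literally the front job but only lies inside some active subtree $S(v)$: one must be careful that the relevant inequality \eqref{eq:rate-condition-identical} is stated for front jobs $v$, and that to charge against $p_j$ one really needs $j$ to be the currently-processed front job during those slots — i.e.\ the slots $s\in[t_j^*,C_j)$ at which $j$ is active are exactly slots where $j\in F_s$ and $R_j^s<1$, so $v=j$. This requires noting that once $j$ is available and unfinished it \emph{is} a front job, and "active" for $j$ then means precisely $R_j^s<1$. With that identification the argument goes through verbatim as above. A secondary point to handle cleanly is the edge case where $\dualSc_{s,j'\to j}\ne 0$ requires \emph{both} endpoints active; I would state once, up front, that whenever $j\notin A_s$ all four quantities $\dualSa_{j,s}$, $\sum_{(j,j')\in E}\dualSc_{s,j\to j'}$, $\sum_{(j',j)\in E}\dualSc_{s,j'\to j}$ vanish, which disposes of all inactive times uniformly.
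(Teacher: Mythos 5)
Your proposal is correct and follows essentially the same route as the paper's own proof: reduce to the tail sum, dispose of all inactive times at once (since $\dualSa_{j,s}$ and all incident $\dualSc_{s,\cdot}$ vanish when $j\notin A_s$), show the bracket is $\le 0$ before $t_j^*$ and equals $w(S(j))$ on the active times in $[t_j^*,C_j]$, and then apply the rate hypothesis with $W(s)\le W(t)$ and $\sum R_j^s\le p_j$. The "obstacle" you flag at the end is real but you resolve it exactly as the paper does implicitly — once $j$ is available and unfinished it is itself the unique front job whose subtree contains $j$, so active means $R_j^s<1$ and the rate condition applies directly to $j$.
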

  
\begin{proof}
Since our defined variables are non-negative by definition, it suffices to show that this assignment satisfies~\eqref{constr:dual-identical}.
Fix a job~$j$ and a time~$t \geq 0$. Let~$c$ be~$j$'s chain.
By observing that~$\dualSa_j - t \cdot w_j \leq \sum_{s \geq t} \dualSa_{j,s}$, it suffices to verify
\begin{equation}
  \sum_{s \geq t} \left(\dualSa_{j,s} +  \sum_{(j, j') \in E} \dualSc_{s,j \to j'} - \sum_{(j', j) \in E} \dualSc_{s,j' \to j}  \right) \leq \optspeed \cdot \dualSb_{t} \cdot p_j. \label{eq:dual-red1-identical}
 \end{equation}
  
To this end, we consider the terms of the left side for all times~$s \geq t$ separately. For any~$s$ with~$s > C_j$, the left side of \eqref{eq:dual-red1-identical} is zero, because~$\dualSa_{j,s} = 0$ and~$j \notin A_s$. 
Also note that at any time~$s \in [0,C_j]$ when~$j$ is inactive, the left side of \eqref{eq:dual-red1-identical} is again zero. Therefore, we restrict our attention in the following to times where $j$ is active.

Let~$t_j^*$ be the first point in time after~$t$ when~$j$ is available, and let~$s \in [0, t_j^*)$ such that~$j$ is active at time~$s$. Observe that~$j \in A_s$ and since each vertex in an out-forest has at most one direct predecessor, there must be a unique job~$j_1 \in A_s$ with~$(j_1, j) \in E$. Thus~$\dualSc_{s,j_1 \to j} = w(S(j)) =  \sum_{(j', j) \in E} \dualSc_{s,j' \to j}$ and $\dualSc_{s,j \to j'} = w(S(j'))$ for every $j'$ with $(j,j') \in E$, because $j'$ must also be unfinished and active, hence $j' \in A_s$.
Using the fact that the sets $S(j')$ are pairwise disjoint for all direct successors $j'$ of $j$, we conclude $ \sum_{(j, j') \in E} \dualSc_{s,j' \to j} = w(S(j))-w_j$ and, thus,
\[
  \dualSa_{j,s} + \sum_{(j, j') \in E} \dualSc_{s,j \to j'} - \sum_{(j', j) \in E} \dualSc_{s,j' \to j} \leq w_j - w_j = 0. 
\]
  
By defining~$T_j = \{ t^*_j \leq s \leq C_j \mid j \text{ active at time } s \}$, we conclude that proving \eqref{eq:dual-red1-identical} reduces to proving
  \begin{equation}
	  \sum_{s \in T_j} \left(w_j + \sum_{(j, j') \in E} \dualSc_{s,j \to j'} - \sum_{(j', j) \in E} \dualSc_{s,j' \to j} \right) \leq \optspeed \cdot \dualSb_{t} \cdot p_j. \label{eq:dual-red2-identical}
  \end{equation}
  
  Now, let~$s \in T_j$. There cannot be an unfinished job preceding~$j$, thus~$\sum_{(j', j) \in E} \dualSc_{s,j' \to j} = 0$. Observe that if there is a job~$j' \in A_s$ with~$(j, j') \in E$, the fact that~$j \in A_s$ gives~$\dualSc_{s,j \to j'} = w(S(j'))$. By again exploiting the out-forest topology, this yields
  \[ 
	  w_j + \sum_{(j, j') \in E} \dualSc_{s,j \to j'} - \sum_{(j', j) \in E} \dualSc_{s,j' \to j} = w_j + \sum_{(j, j') \in E} w(S(j')) = w(S(j)),
  \]
  and hence implies that the left side of~\eqref{eq:dual-red2-identical} is equal to~$\sum_{s \in T_j} w(S(j))$.
  
  The facts that~$W(t_1) \geq W(t_2)$ at any~$t_1 \leq t_2$ and that~$j$ is processed by~$R_{j}^{s} < 1$ units at any time~$s \in T_j$ imply combined with \eqref{eq:rate-condition-identical} the following inequality:
  \begin{equation*}
	  m \cdot \sum_{s \in T_j} \frac{w(S(j))}{W(t)} \leq m \cdot \sum_{s \in T_j} \frac{w(S(j))}{W(s)} \leq \sum_{s \in T_j} \rho \cdot R_j^s \leq \rho \cdot p_j. 
  \end{equation*}
  Rearranging it, using the definition of~$\dualSb_t$ and the bound on~$\optspeed$ gives
	\[
		\sum_{s \in T_j} w(S(j)) \leq \rho \cdot p_j \cdot \frac{1}{m} \cdot W(t) = \rho \cdot \kappa \cdot p_j \cdot  \dualSb_t \leq \optspeed \cdot p_j \cdot \dualSb_t,
	\]
	which implies \eqref{eq:dual-red2-identical} and thus proves the statement.
  \end{proof}
  
\begin{proof}[Proof of \Cref{thm:algo-rates-parallel}]
	We set~$\alpha = \kappa \cdot \rho$.
	The facts that~$\optspeed \cdot \opt \geq \opt_\optspeed$ and~$\sum_j w_j \cdot chain_j \leq \opt$, weak duality and \Cref{lemma:chain-robin-id-dual-feasible} imply
	\[
		(1 + \alpha) \cdot \opt \geq \opt_\alpha + \opt 
		\geq \sum_{j} \dualSa_j - m \sum_{i,t} \dualSb_{t} + \sum_j w_j \cdot chain_j.
	\]
	Using \Cref{lemma:chain-robin-id-dual-objective} concludes
	\[
		(1 + \alpha) \cdot \opt \geq \left(1 - \frac{1}{\kappa} \right) \cdot \alg,
	\]
	which, choosing~$\kappa = 2$, can be rearranged to~$\alg \leq \left( \frac{1 + 2 \rho}{1 - \frac{1}{2}} \right) \cdot \opt = \left( 2 + 4 \rho \right) \cdot \opt$.
\end{proof}

\section{Time-Sharing without Monotonicity}\label{app:monotonicity}

We give a proof of the following theorem. 
The proofs given in~\cite{PurohitSK18,LindermayrM22} require both algorithms to be monotone, that is, for every two instances on the same set of jobs but with processing requirements $\{p'_j\}_j$ and $\{p_j\}_j$ such that for all jobs holds $p'_j \leq p_j$, the algorithm's objective value for $\{p'_j\}_j$ must be less or equal than the algorithm's objective value for $\{p_j\}_j$.
In contrast, we do \emph{not} require monotonicity. This even holds when scheduling jobs on unrelated machines, where jobs receive a different progress on different machines. The only requirement on both algorithms is that they are \emph{progress-aware}. That is, they can track the exact progress jobs have received so far at any point in time, which is true for most algorithms in almost every model.

\begin{theorem}
	Given two deterministic progress-aware algorithms $\A$ and $\B$ with competitive ratios~$\rho_\A$ and~$\rho_\B$ for minimizing the total weighted completion time with online precedence constraints, there exists an algorithm for the same problem with a competitive ratio of at most~$2 \cdot \min\{\rho_\A, \rho_\B\}$.
\end{theorem}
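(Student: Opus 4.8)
The plan is to run the classical time-sharing construction: simulate both $\A$ and $\B$ on the true instance simultaneously, each at half speed, and whenever one of them completes a job, immediately (at no additional cost, since we may collapse any zero-length prefix) complete that same job in our combined schedule as well. Concretely, at any time $t$ our algorithm splits the single machine's unit of rate (or the $m$ units on parallel machines) evenly: it devotes rate $1/2$ to emulating $\A$'s decisions and rate $1/2$ to emulating $\B$'s decisions, on the \emph{same} set of jobs $J$ with the \emph{same} online precedence structure. The key subtlety that the monotonicity assumption was previously used to handle is that, once $\A$ finishes a job $j$, the online arrival of $j$'s successors must be fed back into \emph{both} simulations; so $\B$'s simulation sees job $j$ arrive (and become a front job) possibly earlier than it would in a standalone run of $\B$. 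This is exactly where progress-awareness enters: each algorithm is told, at every moment, the exact processing each of its currently visible jobs has received so far, so we can consistently maintain, for the $\B$-simulation, the fiction that $j$ was completed at the moment $\A$ completed it, and hand $\B$ its successors with their correct remaining (zero) progress, and symmetrically for $\A$.

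The analysis then proceeds job by job. Fix the true instance and let $C_j^{\A}$ and $C_j^{\B}$ denote the completion time of job $j$ in the standalone runs of $\A$ and $\B$ on that instance. The first claim is that in our combined schedule, job $j$ completes by time $\min\{2 C_j^{\A},\, 2 C_j^{\B}\}$. Indeed, consider the $\A$-simulation inside our algorithm: it runs at half speed, and crucially it receives each job \emph{no later} than $\A$ would in its standalone run (successors can only arrive earlier because the other simulation may have finished a predecessor sooner, never later — this monotonicity of \emph{arrival times}, not of objective values, is all we need and it holds unconditionally). Since $\A$ is a deterministic online algorithm whose decisions depend only on the jobs revealed so far and their progress, feeding it the same jobs at earlier-or-equal times and running it at rate $1/2$ makes every job complete by time $2 C_j^{\A}$; one argues this by induction over the completion-time order, using that halving all rates exactly doubles all completion times when the revealed instance is unchanged, and earlier arrivals only help. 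The same bound holds with $\B$, giving $C_j \le \min\{2C_j^{\A}, 2C_j^{\B}\}$ for every $j$.

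Summing, $\sum_j w_j C_j \le \min\{\, 2\sum_j w_j C_j^{\A},\; 2\sum_j w_j C_j^{\B}\,\} \le 2\min\{\rho_\A,\rho_\B\}\cdot \opt$, which is the claimed bound. The main obstacle to make rigorous is precisely the earlier-arrivals argument in the previous paragraph: one must verify carefully that an online deterministic algorithm's completion times are monotone under the partial order "same jobs, same precedence, but every job revealed at an earlier or equal time," and that running it at rate $1/2$ with such accelerated revelations yields completion times at most $2C_j^{\A}$. This is where the old proofs invoked monotonicity in the processing requirements as a blunt instrument; the point of the present theorem is that progress-awareness lets the simulations stay perfectly synchronized job-for-job, so that the only thing one ever perturbs is the \emph{revelation schedule}, and that perturbation is in the favorable direction. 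The parallel-machines and unrelated-machines cases add no new difficulty beyond replacing "rate $1$" with "total rate $m$" (or the appropriate per-machine progress) and invoking McNaughton-type rounding as in Appendix~\ref{app:multiple}; the per-job completion-time bound is what carries the whole argument.
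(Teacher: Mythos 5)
Your proposal diverges from the paper's proof at the crucial point, and the divergence introduces a genuine gap. You feed $\A$'s early completion of a job $j$ back into the $\B$-simulation (and vice versa), so each sub-algorithm sees successors \emph{earlier} than in its standalone run, and you then rely on the claim that ``earlier revelations can only help'' a deterministic online algorithm. That claim is a monotonicity assumption in disguise — monotonicity in arrival times rather than in processing requirements — and it is exactly the kind of hypothesis this theorem is trying to dispense with. It is also false in general: a deterministic algorithm's decisions can degrade when a job is revealed earlier (e.g.\ an algorithm that reshuffles priorities upon every new arrival, or one whose rates at time $t$ depend on $\lvert F_t\rvert$, may delay an already-running heavy job when a new job shows up sooner). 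So the per-job bound $C_j \le 2C_j^{\A}$ does not follow for the $\A$-simulation once you have perturbed its input stream, and your ``main obstacle to make rigorous'' paragraph is pointing at a hole that cannot be filled for arbitrary $\A$, $\B$.

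The paper's proof avoids this entirely by keeping the two simulations \emph{isolated}. When $\A$ finishes $j$, the combined algorithm does complete $j$ (so the real completion time is $\min\{\tilde C^\A_j,\tilde C^\B_j\}$), but it does \emph{not} reveal $j$'s successors to $\B$; instead it lets $\B$ continue to ``process'' a phantom copy of $j$ until $\B$ has itself spent $p_j$ on it, and only then are $j$'s successors handed to $\B$. Progress-awareness is what makes this bookkeeping possible. The effect is that each sub-algorithm observes, internally, exactly the instance it would see standalone, just slowed down by a factor of two, so $\tilde C^\A_j = 2C^\A_j$ and $\tilde C^\B_j = 2C^\B_j$ hold \emph{exactly} with no monotonicity input needed. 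To repair your argument you would need to replace the ``early arrivals help'' step with this isolation-plus-phantom-processing device; as written, the proof does not go through.
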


\begin{proof}
We consider the algorithm which simulates each of both algorithms \A and \B with a rate equal to~$\frac{1}{2}$ on the same set of jobs, i.e., in every time step it executes algorithm~$\A$ in the first half of the timestep, and algorithm~$\B$ in the second half. Additionally, it keeps track of how much each algorithm advances every job.
For any job $j$, both algorithms ignore the progress in the processing of $j$ made by the other algorithm. In particular, if $\A$ finishes job $j$, then $\B$ still simulates the processing of~$j$ until the total  time spend by $\B$ on the (partially simulated) processing of $j$ equals $p_j$ (and vice versa if $\B$ finishes a job before $\A$). We assume that an algorithm only can start successors of a job $j$ once the (partially simulated) processing time spend \emph{by the algorithm} on $j$ is at least $p_j$. We remark that his requires that the main algorithm is able to manipulate the input for the sub-algorithms $\A$ and $\B$.
This ensures that the simulated completion times~$\tilde{C}^\A_j$,~$\tilde{C}^\B_j$ for both algorithms of every job~$j$ are exactly doubled compared to the completion times~$C^\A_j$,~$C^\B_j$ of~$j$ in independent schedules of both algorithms, i.e.,~$\tilde{C}^\A_j = 2\cdot C^\A_j$ and~$\tilde{C}^\B_j = 2\cdot C^\B_j$. In the actual combined schedule, job~$j$ clearly completed not later than~$\min\{\tilde{C}^\A_j, \tilde{C}^\B_j\} = 2 \cdot \min\{C^\A_j, C^\B_j\}$, implying the stated bound on its competitive ratio.
\end{proof}

We finally note that this argument also works when the share each of both algorithms receives is parameterized by some~$\lambda \in [0,1]$.

\end{document}